\numberwithin{equation}{section}
\theoremstyle{plain}
\newtheorem{theorem}{Theorem}[section]
\newaliascnt{lemma}{theorem}
\newtheorem{lemma}[lemma]{Lemma}
\newaliascnt{proposition}{theorem}
\newtheorem{proposition}[proposition]{Proposition}
\newaliascnt{corollary}{theorem}
\newtheorem{corollary}[corollary]{Corollary}
\theoremstyle{definition}
\newaliascnt{definition}{theorem}
\newtheorem{definition}[definition]{Definition}
\newaliascnt{example}{theorem}
\newtheorem{example}[example]{Example}
\newaliascnt{remark}{theorem}
\newtheorem{remark}[remark]{Remark}
\newcommand{\RR}{\mathbf{R}}
\newcommand{\C}{\mathbf{C}}
\newcommand{\D}{\mathcal{D}}
\renewcommand{\epsilon}{\varepsilon}
\def\D{{\mathcal D}}
\def\H{{\mathcal H}}
\def\E{{\mathcal E}}
\def\inv{{}^{-1}}
\def\X{{\rm X}}
\def\inum{{\rm i}}
\DeclareMathOperator{\re}{Re}
\DeclareMathOperator{\im}{Im}
\newcommand{\abs}[1]{\left\lvert #1 \right\rvert}
\newcommand{\avg}[1]{\bigl\langle #1 \bigr\rangle}
\DeclareMathOperator{\sgn}{sgn}
\title{The dynamics of conservative peakons in the NLS hierarchy }
\author{Stephen C. Anco
\thanks{Department of Mathematics and Statistics, Brock University, 
St. Catharines, ON L2S3A1, Canada; sanco@brocku.ca}
\and Xiangke Chang
\thanks{LSEC, ICMSEC, Academy of Mathematics and Systems Science, Chinese Academy of Sciences, P.O.Box 2719, Beijing 100190, PR China and School of Mathematical Sciences, University of Chinese Academy of Sciences, Beijing 100049, PR China, and Department of Mathematics and Statistics, University of Saskatchewan, 106 Wiggins Road, Saskatoon, Saskatchewan, S7N 5E6, Canada; changxk@lsec.cc.ac.cn}
\and Jacek Szmigielski
\thanks{Department of Mathematics and Statistics, University of Saskatchewan, 106 Wiggins Road, Saskatoon, Saskatchewan, S7N 5E6, Canada; szmigiel@math.usask.ca}}
\date{\today}
\begin{document}
\maketitle
\begin{abstract} 
Using the tri-hamiltonian splitting method, 
the authors of \cite{anco} derived two $U(1)$-invariant nonlinear PDEs that arise from the hierarchy of the nonlinear Schr\"{o}dinger equation 
and admit peakons (\textit{non-smooth solitons}).
In the present paper, 
these two peakon PDEs are
generalized to a family of $U(1)$-invariant peakon PDEs parametrized by 
the real projective line $\RR \mathbf{P} ^1$.  
All equations in this family are shown to posses  
\textit{conservative peakon solutions} 
(whose Sobolev $H^1(\RR)$ norm is time invariant).   
The Hamiltonian structure for the sector of conservative peakons is identified 
and the peakon ODEs are shown to be Hamiltonian with respect to several Poisson structures.  
It is shown that the resulting Hamilonian peakon flows 
in the case of the two peakon equations derived in \cite{anco} 
form orthogonal families, while in general the Hamiltonian peakon flows
for two different equations in the general family
intersect at a fixed angle equal to the angle between two lines  in $\RR \mathbf{P} ^1$ parametrizing those two equations.  
Moreover, it is shown that inverse spectral methods 
allow one to solve explicitly the dynamics of conservative peakons 
using explicit solutions to a certain interpolation problem.  
The graphs of multipeakon solutions confirm the existence of 
multipeakon breathers as well as asymptotic formation of pairs of 
two peakon bound states in the non-periodic time domain.  
\end{abstract} 
\noindent \textbf{Keywords:}
 Tri-Hamiltonian, weak solutions, peakons, inverse problems, Pad{\'e} approximations.

\noindent \textbf{MSC2000 classification:}
35D30, %Weak solutions
%35Q51, % Solitons
34K29, % Inverse problems
37J35, % Completely integrable systems, topological structure of phase space, integration methods
35Q53, %
35Q55, %Equations of mathematical physics and other areas of application
 %KdV-like equations
%34B05, % Linear boundary value problems
41A21 % Pad\'e approximation
% 34  ODE
% 34B Boundary value problems
% 34K Functional-differential and differential-difference equations
% 35  PDE
% 35D30 Weak solutions 
% 37  Dynamical systems and ergodic theory
%37J. %Finite-dimensional Hamiltonian, Lagrangian, contact, and nonholonomic systems

% 41  Approximations and expansions
\tableofcontents{}
\section{Introduction} 
The origin of the present paper goes back at least to the 
fundamental paper by R.Camassa and D.Holm \cite{ch} 
in which they proposed the nonlinear partial differential equation 
\begin{equation} \label{eq:CH}
m_t+u m_x +2u_x m+2\kappa u_x=0, \qquad  m=u-u_{xx}, 
\end{equation} 
as a model equation for the shallow water waves.  
The coefficient $\kappa$ appearing in \eqref{eq:CH} is proportional to a critical shallow water wave speed.  
One of the results of \cite{ch} was that 
\eqref{eq:CH} has soliton solutions which 
are no longer smooth in the limit of $\kappa \rightarrow 0^+$.
These non-smooth solitons have the form of \textit{peakons}
\begin{equation} \label{eq:peakonansatz}
u=\sum_{j=1}^n m_j (t)e^{-\abs{x-x_j(t)}}, 
\end{equation} 
where all coefficients $m_j(t)$ and the positions $x_j(t)$ are assumed to 
be smooth as functions of $t$.  
It is elementary to 
see that $m=u-u_{xx}$ becomes
a discrete measure $m=2\sum_j m_j \delta_{x_j}$ 
and that when the CH equation \eqref{eq:CH} for $\kappa=0$ is interpreted in an appropriate weak sense, it
turns into a system of Hamilton's equations of motion
\begin{equation} \label{eq:HCHpeakons}
\dot x_j=\{x_j, H\}, \qquad \dot m_j=\{m_j, H\}, 
\end{equation} 
with the Hamiltonian $H=\tfrac12 \sum_{i,j} m_i m_j  e^{-\abs{x_i-x_j}}$
and with $x_j, m_j$ being canonical conjugate variables.  
This system is Lax integrable in the sense 
that it can be written as a matrix Lax equation \cite{ch}.  
Moreover, 
there exists an explicit solution in terms of Stieltjes continued fractions 
\cite{bss-Stieltjes}.  The CH equation has generated over the years 
a remarkably strong response from the scientific community, attracted to its 
unique features such as the breakdown of regularity of 
its solutions \cite{constantin-escher, McKean-Asianbreakdown, McKean-breakdown, bss-moment} and the stability of its solutions, including the 
peakon solutions 
\cite{constantin-strauss, molinet-n-peakons, molinet-multipeakons}.  
One of the outstanding issues that has emerged over the last two decades 
has been the question of understanding, or perhaps even classifying, 
equations sharing these distinct features of the CH equation.   One of the earliest results in this direction 
was obtained by P. Olver and P. Rosenau in 
\cite{olver-rosenau-triH}  who put forward 
an elegant method, called by them a tri-Hamiltionian duality, which 
provided an intriguing way of deriving not only the CH equation but also  
other equations possessing peakon solutions.  The main idea of that paper 
can be illustrated, 
as  in their paper,
on the example of 
the Korteweg-de Vries (KdV) equation
\begin{equation} \label{eq:KdV}
u_t=u_{xxx}+3uu_x.  
\end{equation}
This equation is known to have a bi-Hamiltonian structure.  
Indeed, using the two compatible Hamiltonian operators 
\begin{equation} \label{eq:KdV J}
J_1=D_x, \qquad J_2 =D_x^3+uD_x+D_xu, 
\end{equation} 
and the Hamiltonians 
\begin{equation} \label{eq:KdV H}
H_1=\tfrac 12 \int u^2 dx, \qquad H_2=\tfrac 12 \int (-u_x^2+u^3) dx, 
\end{equation} 
the KdV equation \eqref{eq:KdV} can be written as
\begin{equation} \label{eq:KdV-biH}
u_t=\{u,H_1\}_{J_2}=\{u,H_2\}_{J_1}, 
\end{equation} 
where the Poisson bracket associated with the Hamiltonian operator 
$J$ is given by $\{f,g\}_J=\int \frac{\delta f}{\delta u} J \frac{\delta g}{\delta u}dx$ for smooth in $u$ functionals $f,g$.  The decisive step is now 
to redistribute the Hamiltonian operators by 
splitting and reparametrizing differently the compatible triple 
$D_x, D_x^3, uD_x+D_xu$ of Hamiltonian operators 
to create a new compatible pair
\begin{equation}\label{eq:CH J}
\hat J_1 =D_x-D^3_x=D_x\Delta , \qquad \hat J_2=mD+Dm, 
\end{equation} 
with $m=(1-D_x^2)u=\Delta u$.
The appearance of the factorization of one of the Hamiltonian operators in 
terms of the Helmholtz operator $\Delta$ is of paramount importance 
for creating ``peakon'' equations.  In particular, 
the CH equation \eqref{eq:CH} can be written now in bi-Hamiltonian form 
\begin{equation} \label{eq:CH-biH}
m_t=\{m,\hat H_1\}_{\hat J_2}=\{m,\hat H_2\}_{\hat J_1}, 
\end{equation} 
with the Hamiltonians 
\begin{equation}\label{eq:CH H}
\hat H_1=\tfrac 12 \int u m dx, \qquad \hat H_2=\tfrac 12 \int (-uu_x^2+u^3) dx.  
\end{equation} 
It is shown in \cite{olver-rosenau-triH} that applying this methodology to 
the modified KdV equation one obtains 
the nonlinear partial differential equation
\begin{equation}\label{eq:m1CH}
m_t+\left((u^2-u_x^2) m\right)_x=0,  \qquad 
m=u-u_{xx}. 
\end{equation}
\autoref{eq:m1CH} appeared also in the papers of T. Fokas \cite{fokas1995korteweg} and B. Fuchssteiner \cite{fuchssteiner1996some},   and was, later, rediscovered by Z. Qiao  
 \cite{qiao2006new,qiao2007new}. Some early work 
 on the Lax formulation of this equation was done by J. Schiff \cite{schiffdual} . Recently this equation has attracted a considerable attention from many authors
 \cite{kang-liu-olver-qu, gui2013wave,liu2014orbital,Himonas4peak,himonas2014cauchy, chang-szmigielski-m1CHlong}. 
  
Interestingly enough, the same philosophy applied to the 
non-linear Schr\"{o}dinger equation 
\begin{equation}\label{eq:NLS}
u_t=i(u_{xx}+\abs{u}^2 u)  
\end{equation}
appeared to produce no peakon equations.  
We recall that the bi-Hamiltonian formulation of \eqref{eq:NLS} 
which was 
used in  \cite{olver-rosenau-triH} 
is 
based on
the standard NLS Hamiltonian operators
\begin{equation} \label{eq:NLS J}
J_1(F)=\inum F,   \qquad J_2(F)=D_xF+uD_x^{-1}(\bar u F-u\bar F), 
\end{equation} 
written in action on densities $F$.  
The two redistributed Hamiltonian operators 
\begin{equation*} 
\hat J_1(F)=(D_x+i)F, \qquad \hat J_2(F)=mD_x^{-1}(\bar m F-m\bar F)
\end{equation*} 
do not contain the Helmholtz operator $\Delta$ in their factorizations.  
This puzzling situation was resolved in the paper \cite{anco} by 
S. Anco and F. Mobasheramini in which the authors proposed 
a different choice of Hamiltonian operators resulting 
in the bi-Hamiltonian formulation of {\bf two }new peakon equations 
\begin{equation}\label{eq:HPorig} 
m_t +((|u|^2-|u_x|^2)m)_x +2\inum\im(\bar u u_x)m =0, 
\end{equation}
called the Hirota-type (HP) peakon equation, and 
\begin{equation}\label{eq:NLSPorig} 
\inum m_t +2\inum(\im(\bar u u_x)m)_x + (|u|^2-|u_x|^2)m =0, 
\end{equation}
called the NLS-type (NLSP) peakon equation. 
Unlike other peakon equations, 
both the HP and NLSP equations display the same $U(1)$-invariance as the NLS equation.

In the remainder of this introduction, 
we outline the content of individual sections, highlighting the main results. 
 
\noindent To begin,
in \autoref{sec:HamNLS} we review the Hamiltonian 
setup for both HP and NLSP following \cite{anco}. 
 
\noindent In \autoref{sec:Laxpair} we give a unifying perspective 
on the Lax pairs for \eqref{eq:HPorig} and \eqref{eq:NLSPorig}, 
showing that these equations are just two members of a family of 
peakon equations parametrized by the real projective line $\RR \mathbf{P}^1$.  
  
\noindent In \autoref{sec:peakons} we introduce and study \textit{conservative} 
peakons obtained from the distributional formulation of 
the Lax pairs discussed in \ref{sec:Laxpair}.  
This type of peakon solutions not only preserves 
the Sobolev $H^1$ norm but also admits multiple Hamiltonian formulations 
which we study in detail, giving in particular a unifying 
Hamiltonian formulation for the whole family of conservative peakon equations.  

\noindent In \autoref{sec:HPspectral} we concentrate on the isospectral boundary 
value problem relevant for the peakon equation \eqref{eq:pHP}. 

\noindent In \autoref{sec:IP} we develop a two-step procedure for solving 
the inverse problem for the HP equation which, effectively, 
recovers the peakon measure \eqref{eq:peakon measure} in terms of solutions to an interpolation problem stated in \autoref{thm:CJ interp}.  
We provide graphs of solutions obtained from explicit formulas 
and give preliminary comments about the dynamics of peakon solutions.

\section{Hamiltonian structure of NLS and Hirota peakon equations}
\label{sec:HamNLS}

This section is a condensed summary of 
the part of \cite{anco} relevant to the present paper.  

Hamiltonian structures go hand in hand with Poisson brackets. 
In particular, 
a linear operator $\E$ is a Hamiltonian operator iff 
its associated bracket 
\begin{equation} \label{eq:PB}
\{F,G\}_{\E} =\braket{ \delta F/\delta\bar{m}, \E(\delta G/\delta\bar{m})}
\end{equation} 
on the space of functionals of $(m,\bar m)$ is a Poisson bracket,
namely, this bracket is skew-symmetric and obeys the Jacobi identity, 
turning the space of functionals into a Lie algebra.  
Here $\braket{f,g}$ is a real, symmetric, bilinear form defined as 
\begin{equation}
\braket{f,g}=\int_{\RR} (\bar f(x) g(x)+f(x)\bar g(x)) \, dx, 
\end{equation}
thus equipping the space of $L^2$ complex functions with a real, positive-definite, inner product. 
Note that if $F,G$ are real functionals, then the bracket \eqref{eq:PB} takes the form 
\begin{equation} 
\{F, G\}=\int_{\RR}\big((\delta F/\delta m) \, \E(\delta G/\delta\bar{m})+
(\delta F/\delta\bar{m})\,  \bar{\E} (\delta G/\delta m)\big)\, dx, 
\end{equation} 
where the variational derivative of each functional with respect to $(m,\bar{m})$ 
is defined relative to the inner product by
\begin{equation}
\delta H = \braket{\delta H/\delta m,\delta \bar{m}} = \braket{\delta H/\delta \bar{m},\delta m}
\end{equation} 
for each real functional $H$.  
In explicit form, 
the variational derivative of real functional $H=\int_{\RR} h\,dx$ is given in terms of 
the density $h$ by the relation 
\begin{equation}
\delta H/\delta m = E_m(h),
\quad
\delta H/\delta \bar{m} = E_{\bar{m}}(h) ,
\end{equation} 
where $E_v$ denotes the Euler operator with respect to a variable $v$.

Another key relationship is that a Poisson bracket and a Hamiltonian operator 
satisfy $\{m,H\}_{\E}=\E(\delta H/\delta \bar{m})$ and, symmetrically, $\{\bar{m},H\}_{\E}=\bar\E(\delta H/\delta {m})$.  

To proceed, we first introduce the 1-D Helmholtz operator 
\begin{equation}
\Delta=1-D_x^2 
\end{equation}
which connects $m$ and $u$ through 
\begin{equation} \label{eq:um} 
m=u-u_{xx}=\Delta u.  
\end{equation}
Both the HP and NLSP equations share two compatible Hamiltonian operators, 
stated in \cite{anco}, namely
\begin{equation} \label{eq:HD}
\H  = 2\inum \Delta, 
\quad
\D = 2\inum(mD\inv_x\re\bar m D_x +\inum D_xm D\inv_x\im\bar m).  
\end{equation}
Compatibility means that every linear combination $c_1\H +c_2 \D$ of these two Hamiltonian operators is a Hamiltonian operator. 
Note, compared to the operators presented in \cite{anco},
here $\H$ and $\D$ have been normalized by a factor of $2$ 
that corresponds to our choice of normalization 
for the nonlinear terms in the equations \eqref{eq:HPorig}--\eqref{eq:NLSPorig}. 

Each Hamiltonian operator \eqref{eq:HD} defines a respective Poisson bracket 
$\{F,G\}_{\H}$ and $\{F,G\}_{\D}$. 
The bi-Hamiltonian structure of the NLSP equation is given by 
\begin{equation} \label{eq:NLSPbiHamil}
m_t = \inum(|u|^2-|u_x|^2)m -2(\im(\bar u u_x)m)_x 
= \D(\delta H^{(0)}/\delta\bar m) 
= \H(\delta H^{(1)}/\delta\bar m) , 
\end{equation}
or equivalently,
\begin{equation}
m_t = \{m, H^{(0)}\}_{\D} = \{m, H^{(1)}\}_{\H} , 
\end{equation}
where 
\begin{equation}\label{eq:H0}
H^{(0)} = \int_{\RR} \re(\bar u m)\, dx \
= \int_{\RR} (|u|^2 +|u_x|^2)\,dx 
\end{equation}
and
\begin{equation}\label{eq:H1}
H^{(1)} =\int_{\RR}\big(  \tfrac{1}{4} (|u|^2-|u_x|^2)\re(\bar{u}m) + \tfrac{1}{2}\im(\bar{u}u_x)\im(\bar{u}_xm) \big) \,dx
\end{equation}
are the Hamiltonian functionals.
Both functionals $H^{(0)}$ and $H^{(1)}$ are conserved for smooth solutions $u(t,x)$ 
with appropriate decay conditions at $\abs{x}\rightarrow \infty$. 

Likewise, 
the bi-Hamiltonian structure of the HP equation is given by 
\begin{equation}\label{eq:HPbiHamil}
m_t= -((|u|^2-|u_x|^2)m)_x -2\inum\im(\bar u u_x)m
= -\D(\delta E^{(0)}/\delta\bar m) 
= \H(\delta E^{(1)}/\delta\bar m) 
\end{equation}
where 
\begin{equation}\label{eq:E0}
E^{(0)}= \int_{\RR} \im(\bar u m_x)\, dx 
= \int_{\RR} \im(u_x \bar m)\,dx 
\end{equation}
and
\begin{equation}\label{eq:E1}
E^{(1)} = \int_{\RR}\big( \tfrac{1}{4}(|u|^2 -|u_x|^2)\im(\bar{u}_xm) -\tfrac{1}{2}\im(\bar{u}u_x)\re(\bar{u}m)  \big)\,dx
\end{equation}
are the Hamiltonian functionals. 
These two functionals $E^{(0)}$ and $E^{(1)}$ are conserved for smooth solutions $u(t,x)$ 
with appropriate decay conditions. 
In terms of Poisson brackets,
the corresponding structure is 
\begin{equation} 
m_t = \{m, -E^{(0)}\}_{\D} = \{m, E^{(1)}\}_{\H} . 
\end{equation}

We now make some brief comments about the conserved Hamiltonians $H^{(0)}$ and $E^{(0)}$, going beyond the presentation in \cite{anco}. 
Recall that once a Hamiltonian operator $\E$ (or the corresponding Poisson bracket) is given, any real functional $H$ gives rise to a Hamiltonian vector field 
\begin{equation*} 
\X_H = \eta\partial_{m} + \bar{\eta}\partial_{\bar m}, \qquad \text{where } 
\eta = \E(\delta H/\delta\bar{m}) =\{m,H\}_{\E}, 
\end{equation*} 
acting on the space of densities of real functionals. 
First, using the Hamiltonian operator $\H$, 
we see that 
\begin{equation*}
\H(\delta H^{(0)}/\delta\bar{m}) = 2\inum m,
\quad
\H(\delta E^{(0)}/\delta\bar{m}) = 2 m_x
\end{equation*}
respectively, produce, after a simple rescaling, the Hamiltonian vector fields 
$\X_{\text{phas.}} = \inum m\partial_{m} - \inum\bar{m}\partial_{\bar m}$
and
$\X_{\text{trans.}} = m_x\partial_{m} +\bar{m}_x\partial_{\bar m}$. 
These two vector fields are the respective generators of 
phase rotations $(m,\bar m)\to (e^{\inum\phi}m,e^{-\inum\phi}\bar m)$
and $x$-translations $x\to x+\epsilon$,
where $\phi,\epsilon$ are arbitrary (real) constants. 
Next, by direct computation, we find that
\begin{equation}
\{H^{(0)},E^{(0)}\}_{\H}=0,
\quad 
\{H^{(0)},E^{(0)}\}_{\D}=0.  
\end{equation}
These brackets show that $\X_{\text{phas.}}$ and $\X_{\text{trans.}}$
are commuting symmetry vector fields for both the NLSP equation and the HP equation.  As a consequence, we note the following useful features of these Hamiltonians. 
\begin{remark}
Each Hamiltonian $H^{(0)}$ and $E^{(0)}$ is conserved for both the NLSP equation and the HP equation,
and these Hamiltonians are invariant under the symmetries generated by 
the commuting Hamiltonian vector fields $\X_{\text{phas.}}$ and $\X_{\text{trans.}}$. 
In addition, 
$H^{(0)}=||u||^2_{H^1}$ is the square of the Sobolev norm of $u(t,x)$. 
\end{remark}
Some insight into the meaning of $E^{(0)}$ comes from expressing these Hamiltonians
in terms of a Fourier representation 
$u(t,x)=\tfrac{1}{\sqrt{2\pi}}\int_{\RR} e^{ikx} a(k,t)\, dk$. 
A simple computation gives 
$H^{(0)} = \int_{\RR} (1+k^2)|a(k,t)|^2\, dk$
and $E^{(0)} = \int_{\RR} k(1+k^2)|a(k,t)|^2\, dk$,
which shows that the conserved density arising from $E^{(0)}$ 
is $k$ times the conserved positive-definite density given by $H^{(0)}$. 
This is analogous to the relationship between the well-known conserved energy and momentum 
quantities for the NLS equation \cite{Johnson}. 
Since $H^{(0)}$ plays the role of a conserved positive-definite energy
for both the HP and NLSP equations, 
we can thereby view $E^{(0)}$ as being a conserved indefinite-sign momentum
for these equations. 

In the the next section \autoref{sec:Laxpair} 
we will revisit the derivation of the HP and NLSP equations \eqref{eq:NLSPorig} and \eqref{eq:HPorig}
starting from a unified perspective provided by their Lax pair formulation.

\section{A unified Lax pair}
\label{sec:Laxpair}

We begin by showing how the Lax pairs in \cite{anco} for the NLSP and HP equations 
can be unified. 

For $\lambda \in \C$, consider the family of $\mathfrak{sl}(2,\C)$ matrices 
\begin{equation}\label{eq:UV}
U=\tfrac{1}{2}
\begin{bmatrix} -1 &\lambda m\\ -\lambda \bar{m}& 1 \end{bmatrix},
\quad
V=\tfrac{1}{2}
\begin{bmatrix} \sigma(2\lambda^{-2} + Q) & -2\sigma\lambda^{-1} (u-u_x) -\lambda m J\\
2\sigma\lambda^{-1}(\bar u+\bar u_x) +\lambda \bar m J& -\sigma(2\lambda^{-2} +Q) \end{bmatrix}
\end{equation}
parametrized by two complex valued functions $m$ and $u$, 
and a complex constant $\sigma$, 
where 
\begin{equation} \label{eq:Q}
Q=(u-u_x)(\bar u+\bar u_x)=|u|^2-|u_x|^2-2i\im(\bar u u_x), 
\end{equation} 
and $J$ is a complex function that we will now determine. 
\begin{remark} \label{rem:gl2}
We point out that $V$ is not uniquely determined; 
in particular we can add to it a $\lambda$ dependent multiple of the 
identity.  This becomes a necessity when boundary conditions are imposed.  
We will return to this point further into the paper. 
\end{remark} 
We impose on the pair $(U,V)$ the zero-curvature equation
\begin{equation}
U_t-V_x+[U,V]=0,  \label{eq:zerocurv}
\end{equation}
which gives 
\begin{subequations} \label{eq:zerocurv1}
\begin{align}
& m_t +(mJ)_x +(J-\sigma Q)m=0,\label{eq:mt} \\
& \bar m_t +(\bar mJ)_x -(J-\sigma Q)\bar m=0, \\
&m=u-u_{xx}. 
\end{align}
\end{subequations}
For these equations to be compatible, 
$J$ must be real and $J-\sigma Q$ must be purely imaginary,
which implies 
\begin{equation}\label{eq:J}
J=\re(\sigma Q)=\re(\sigma)(|u|^2-|u_x|^2) +2\im(\sigma)\im(\bar u u_x)
\end{equation}
and
\begin{equation*}
J-\sigma Q =-\inum\im(\sigma Q)=2\inum\re(\sigma)\im(\bar u u_x)-\inum\im(\sigma)(|u|^2-|u_x|^2)  . 
\end{equation*}
Consequently, the zero-curvature equation \eqref{eq:zerocurv} becomes
\begin{equation}\label{eq:unifiedHPNLSP}
m_t +\re(\sigma)\big( ((|u|^2-|u_x|^2)m)_x +2i\im(\bar u u_x)m \big)
+\im(\sigma)\big( 2(\im(\bar u u_x)m)_x  -i(|u|^2-|u_x|^2)m \big)
=0
\end{equation}
which is, loosely speaking, a linear combination of the HP and NLSP equations. 
In particular, the HP equation is obtained for $\sigma=1$, 
and the NLSP equation is obtained for $\sigma =i$. 
We note, however, that 
by rescaling the $t$ variable we can put
$\abs{\sigma}=1, \im(\sigma)\geq 0$. 
In this sense \autoref{eq:mt} simplifies to
\begin{equation} \label{eq:mfamily}
m_t+(\re(e^{\inum \theta} Q)m)_x-\inum \im(e^{\inum \theta}Q)m=0, 
\end{equation}
where the angle $\theta$ can be restricted to $[0,\pi)$.  
This angle has a simple geometric interpretation as being a local 
parameter in the real projective line $\RR\mathbf{P}^1$.  
Thus different equations in this family of PDEs correspond to different points 
in $\RR \mathbf{P}^1$; in practice though 
different equations are obtained by rigid rotations
of $Q$ by the angle $\theta$ (see more on this item below). 

We conclude that the unified equation \eqref{eq:mfamily} is, at least formally, a Lax integrable system, as it arises from a Lax pair \eqref{eq:UV}, \eqref{eq:Q}, with $J$ given by \eqref{eq:J}
and $\sigma=e^{\inum \theta}$.  
The unified equation also possesses a bi-Hamiltonian structure given by 
the Hamiltonian operators \eqref{eq:HD} shared by the HP and NLSP equations,
using Hamiltonians that are given by a linear combination of the HP and NLSP Hamiltonians:
\begin{equation}
m_t = \D(\delta K^{(0)}/\delta\bar m) = \H(\delta K^{(1)}/\delta\bar m) 
\end{equation}
where
\begin{equation} \label{eq:K0K1def}
K^{(0)} = \cos \theta (-E^{(0)})+\sin \theta H^{(0)} ,
\quad
K^{(1)} = \cos \theta E^{(1)}  +\sin \theta\, H^{(1)} . 
\end{equation}
The unified equation \eqref{eq:mfamily} reveals an interesting symmetry 
between the HP and NLSP equations.  Indeed, 
 the HP and NLSP equations are respectively given by 
\begin{align} 
&m_t+(\re(Q)m)_x-i\im(Q)m=0, 
\quad
\theta =0, 
\label{eq:HP}
\\
&m_t-(\im(Q)m)_x-i \re(Q)m=0, 
\quad
\theta = \tfrac{\pi}{2}.
\label{eq:NLSP}
\end{align} 
Thus, these two equations are related by a phase rotation of $Q$ by the angle $\theta=\tfrac{\pi}{2}$. 

The bi-Hamiltonian formulations of the 
HP equation, NLSP equation, and the generalized equation \eqref{eq:mfamily}
exhibit the same symmetry. 
If we write 
\begin{equation} \label{eq:P}
P=(u-u_x)\bar m
\end{equation} 
then we have the relations
\begin{equation*}
\re(P) = \re(u\bar{m}) -\tfrac{1}{2}(\re(Q))_x,
\quad
\im(P) = -\im(u_x\bar{m}) -\tfrac{1}{2}(\im(Q))_x ,
\end{equation*}
where $\re(u\bar{m})$ is the density for $H^{(0)}$,
and $\im(u_x\bar{m})$ is the density for $E^{(0)}$. 
Likewise, for the quantity $QP$ we obtain
\begin{align*}
&\re(QP) = \re(Q)\re(u\bar m)-\im(Q) \im(\bar u_x m)+\tfrac14 (\im^2(Q) -\re^2(Q))_x\\
&\im(QP) = \re(Q)\im(\bar u_x m)+\im(Q) \re(\bar u m) -\tfrac{1}{2}(\im(Q)\re(Q))_x ,
\end{align*}
where $\tfrac14(\re(Q)\re(u\bar m)-\im(Q) \im(\bar u_x m))$ is the density for $H^{(1)}$ (see \eqref{eq:H1}) and $\tfrac14(\re(Q)\im(\bar u_x m)+\im(Q) \re(\bar u m))$ for $E^{(1)}$ (see \eqref{eq:E1}).  
These relations show that the densities (modulo irrelevant boundary terms) for $K^{(0)}$ and $K^{(1)}$ are given by 
$\im(e^{\inum \theta} P)$
and 
$\tfrac14 \im(e^{\inum \theta}QP)$,
respectively. 
Hence, 
we obtain 
\begin{equation}\label{eq:K0K1}
K^{(0)} = \int_{\RR} \im(e^{\inum \theta}  P) \,dx,
\quad
K^{(1)} = \int_{\RR} \tfrac14 \im(e^{\inum \theta}  QP) \,dx. 
\end{equation}
In particular, from these expressions for the unified Hamiltonians,
we see that the bi-Hamiltonian structures of the HP and NLSP equations 
are related by the phase rotation by $\pi/2$. 

In section~\ref{sec:peakons} we introduce a \textit{sector of conservative peakons} for \autoref{eq:mfamily}, concentrating mostly on the 
cases of $\theta=0$ (HP) and $\theta=\tfrac{\pi}{2}$ (NLSP).

\section{Conservative peakons} 
\label{sec:peakons}

The peakon Ansatz \cite{ch} 
\begin{equation*}  
u=\sum_{j=1}^Nm_je^{-|x-x_j|}
\end{equation*} 
was originally designed for real $m_j, x_j$.  
For the HP and NLSP equations \eqref{eq:HP} and \eqref{eq:NLSP}, 
the coefficients $m_j $ are complex and $x_j$ are real, 
resulting in $m=u-u_{xx}$ being a complex discrete measure
\begin{equation} \label{eq:peakon measure}
m=2\sum_{j=1}^Nm_j\delta_{x_j}.
\end{equation} 
Thus both equations \eqref{eq:HP} and \eqref{eq:NLSP}, 
and more generally \eqref{eq:mfamily},  
must be viewed as distribution equations.  
To this end the products $\im(Q)m$ and $ \re(Q)m$ need to be defined, 
and accordingly $Qm$ needs to be defined.  
By analyzing  the distributional Lax pair in a similar way to \cite{chang-szmigielski-m1CHlong}, 
we can show that the choice consistent with Lax integrability is to take 
\begin{equation} \label{eq:Qm} 
Qm=\avg{Q} m \stackrel{\text{def}}{=}2\sum_{j=1}^N\avg{Q}(x_j)m_j\delta_{x_j}, 
\end{equation} 
where $\avg{Q}(x_j)$ denotes the arithmetic average of the right and left hand limits at $x_j$.  

\begin{remark}
Many previous investigations of peakon equations, 
particularly on global existence and wave breaking for the mCH equation \cite{gui2013wave}, 
have defined distribution products differently by using a weak (integral) formulation of the peakon equation. 
The same approach was taken in \cite{anco} to derive single peakon weak solutions and peakon breather weak solutions of the HP and NLSP equations, 
but as pointed out in that paper, 
the HP and NLSP equations do not appear to have a weak formulation that allows multi-peakon solutions to be derived. 
Indeed, the choice of defining distribution products used here \eqref{eq:Qm} 
appears to be the only way to obtain multi-peakon solutions for these two equations, 
as well as for the general family \eqref{eq:mfamily}. 
As a consequence, the conservative single peakon and peakon breather solutions that will be obtained later in this paper differ from the single peakon weak solutions and peakon breather weak solutions presented in \cite{anco}. 
Most importantly, conservative $N$-peakon solutions will be derived for all $N\geq 1$. 
\end{remark}

Since $m_j$s are complex, we will use polar co-ordinates: 
\begin{equation*}
m_j=\abs{m_j} e^{i\omega_j}
\end{equation*}

Using these definitions, we obtain the following systems of ODEs 
from the peakon equations \eqref{eq:HP} and \eqref{eq:NLSP}. 

\begin{proposition} 
For the peakon Ansatz \eqref{eq:peakonansatz}, 
suppose the ill-defined product $Qm$ is regularized according to \eqref{eq:Qm}. 
Then  the HP equation \eqref{eq:HP}  reduces to
\begin{equation}\label{eq:pHP}
\dot x_j=\avg{\re Q}(x_j), 
\quad 
\dot \omega_j=\avg{\im Q}(x_j), 
\quad 
\frac{d|m_j|}{dt}=0,
\quad
j=1,\ldots,N . 
\end{equation}
Likewise, the NLSP equation \eqref{eq:NLSP} reduces to 
\begin{equation}\label{eq:pNLSP}
\dot x_j=-\avg{\im Q}(x_j), 
\quad 
\dot \omega_j=\avg{\re Q}(x_j), 
\quad 
\frac{d|m_j|}{dt}=0,
\quad
j=1,\ldots,N , 
\end{equation}
while in the general case of \autoref{eq:mfamily} the peakon ODEs read: 
\begin{equation}\label{eq:pmfamily}
\dot x_j=\avg{\re (e^{\inum \theta}Q)}(x_j), 
\quad 
\dot \omega_j=\avg{\im (e^{\inum \theta}Q)}(x_j), 
\quad 
\frac{d|m_j|}{dt}=0,
\quad
j=1,\ldots,N , \quad 0\leq \theta<\pi. 
\end{equation}
\end{proposition}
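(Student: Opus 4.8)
The plan is to substitute the peakon Ansatz directly into the distributional form of \eqref{eq:mfamily} and to read off the evolution of the parameters by collecting the result as a finite linear combination of the Dirac distributions $\delta_{x_j}$ and their derivatives $\delta'_{x_j}$. First I would record the two purely kinematic facts that drive the computation: from $(1-D_x^2)e^{-|x-x_j|}=2\delta_{x_j}$ the Ansatz gives $m=2\sum_{j}m_j\delta_{x_j}$, and since each atom is transported along the curve $x_j(t)$ one has $\partial_t\delta_{x_j}=-\dot x_j\,\delta'_{x_j}$. These two identities let every term of \eqref{eq:mfamily} be written explicitly in the basis $\{\delta_{x_j},\delta'_{x_j}\}_{j=1}^N$.

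Next I would compute the three terms of \eqref{eq:mfamily} separately. Differentiating $m$ in $t$ gives
\[
m_t=2\sum_{j}\bigl(\dot m_j\,\delta_{x_j}-m_j\dot x_j\,\delta'_{x_j}\bigr).
\]
For the nonlinear terms the regularization \eqref{eq:Qm}, extended by linearity to $\re(e^{\inum\theta}Q)m$ and $\im(e^{\inum\theta}Q)m$, replaces the ill-defined product of the jump function $Q$ with the measure $m$ by the averaged nodal value, so that
\[
\re(e^{\inum\theta}Q)m=2\sum_{j}\avg{\re (e^{\inum\theta}Q)}(x_j)\,m_j\,\delta_{x_j},
\qquad
\im(e^{\inum\theta}Q)m=2\sum_{j}\avg{\im (e^{\inum\theta}Q)}(x_j)\,m_j\,\delta_{x_j}.
\]
Differentiating the first of these in $x$ turns $\delta_{x_j}$ into $\delta'_{x_j}$ (the averaged coefficients being constant in $x$), which is what produces the $\delta'_{x_j}$ contribution from the transport term.

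Then I would substitute these expressions into \eqref{eq:mfamily} and match coefficients. The coefficient of $\delta'_{x_j}$ gives $-m_j\dot x_j+\avg{\re (e^{\inum\theta}Q)}(x_j)\,m_j=0$, while the coefficient of $\delta_{x_j}$ gives $\dot m_j-\inum\avg{\im (e^{\inum\theta}Q)}(x_j)\,m_j=0$. Dividing by the nonzero $m_j$ yields $\dot x_j=\avg{\re (e^{\inum\theta}Q)}(x_j)$ and $\dot m_j=\inum\avg{\im (e^{\inum\theta}Q)}(x_j)\,m_j$. Writing $m_j=|m_j|e^{\inum\omega_j}$ and separating modulus from phase in the last relation produces $\tfrac{d|m_j|}{dt}=0$ and $\dot\omega_j=\avg{\im (e^{\inum\theta}Q)}(x_j)$, which is exactly \eqref{eq:pmfamily}. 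Specializing to $\theta=0$ recovers \eqref{eq:pHP}; for $\theta=\tfrac{\pi}{2}$ one has $\re(\inum Q)=-\im Q$ and $\im(\inum Q)=\re Q$, which recovers \eqref{eq:pNLSP}.

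The only genuine obstacle is the legitimacy of the coefficient matching. I would first verify that, with the prescription \eqref{eq:Qm}, the entire left-hand side of \eqref{eq:mfamily} collapses to a finite combination of $\delta_{x_j}$ and $\delta'_{x_j}$ with no leftover singular contributions; this rests on $u$ being continuous and $u_x$ having only jump discontinuities at the nodes, so that $Q$ is piecewise smooth with finite one-sided limits and the averages $\avg{\re (e^{\inum\theta}Q)}(x_j)$, $\avg{\im (e^{\inum\theta}Q)}(x_j)$ are well defined. I would then invoke the linear independence of the family $\{\delta_{x_j},\delta'_{x_j}\}_{j=1}^N$ for distinct $x_j$ to conclude that the vanishing of the combination forces each coefficient to vanish; this is the step that converts the single distributional identity into the $2N$ scalar ODEs, and it is where one must be careful that the averaging convention is the unique prescription rendering the products consistent with the above collapse.
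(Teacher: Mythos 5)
Your proposal is correct and follows exactly the direct-substitution argument the paper implicitly relies on (the proposition is stated there without proof): write $m=2\sum_j m_j\delta_{x_j}$, use $\partial_t\delta_{x_j}=-\dot x_j\delta'_{x_j}$ and the regularization \eqref{eq:Qm}, and match coefficients of the linearly independent family $\{\delta_{x_j},\delta'_{x_j}\}$, with the $\theta=0$ and $\theta=\pi/2$ specializations recovering \eqref{eq:pHP} and \eqref{eq:pNLSP}. The only quibble is your closing remark that averaging is the \emph{unique} prescription consistent with the collapse to a combination of $\delta_{x_j}$ and $\delta'_{x_j}$ — any choice of nodal values for $Q$ would give such a collapse; the averaging is singled out by consistency with the distributional Lax pair, not by this structural requirement, but since the proposition takes \eqref{eq:Qm} as a hypothesis this does not affect the proof.
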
 

It is easy to see that the vector fields on the right hand sides of 
equations \eqref{eq:pHP} and \eqref{eq:pNLSP} are orthogonal.  
The following conclusion about the geometry 
of solution curves of peakon ODEs is straightforward.  
\begin{corollary} \label{lem:HP-NLSP duality} 
The family of solution curves to the ODE system \eqref{eq:pHP} 
is orthogonal to the family of solution curves to the ODE system \eqref{eq:pNLSP}. In general, the family of solution curves to the ODE system \eqref{eq:pHP} is at the angle $\theta$ to the family of solution curves to the ODE system \eqref{eq:pmfamily}. 
\end{corollary}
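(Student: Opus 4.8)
The plan is to read the statement off as a pointwise fact about the velocity vector fields of the three ODE systems on a common phase space. Since $d|m_j|/dt=0$ holds in each of \eqref{eq:pHP}, \eqref{eq:pNLSP}, \eqref{eq:pmfamily}, the moduli $|m_j|$ are frozen parameters and the genuine dynamics lives in the $2N$ variables $(x_1,\dots,x_N,\omega_1,\dots,\omega_N)\in\RR^{2N}$, which I equip with the standard Euclidean metric. Abbreviating $a_j=\avg{\re Q}(x_j)$ and $b_j=\avg{\im Q}(x_j)$, the velocity vectors of \eqref{eq:pHP} and \eqref{eq:pNLSP} are
\begin{equation*}
\ve_{\rm HP}=(a_1,\dots,a_N,\,b_1,\dots,b_N),\qquad
\ve_{\rm NLSP}=(-b_1,\dots,-b_N,\,a_1,\dots,a_N),
\end{equation*}
while \eqref{eq:pmfamily} has velocity vector $\ve_\theta$ with $x$-components $\cos\theta\,a_j-\sin\theta\,b_j$ and $\omega$-components $\cos\theta\,b_j+\sin\theta\,a_j$. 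The latter follows from $\re(e^{\inum\theta}Q)=\cos\theta\,\re Q-\sin\theta\,\im Q$ and $\im(e^{\inum\theta}Q)=\cos\theta\,\im Q+\sin\theta\,\re Q$ together with the linearity of the averaging operation $\avg{\cdot}$, which yields the crucial decomposition
\begin{equation*}
\ve_\theta=\cos\theta\,\ve_{\rm HP}+\sin\theta\,\ve_{\rm NLSP}.
\end{equation*}
The phrase ``angle between two families of solution curves'' I interpret, as usual, as the angle between these velocity vectors evaluated at one and the same point of phase space, and I will verify that this angle is constant over phase space.

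Next I would record two elementary computations. The first is
\begin{equation*}
\ve_{\rm HP}\cdot\ve_{\rm NLSP}=\sum_{j=1}^N a_j(-b_j)+\sum_{j=1}^N b_j\,a_j=0,
\end{equation*}
which already proves the first assertion of the corollary, namely that the solution curves of \eqref{eq:pHP} and \eqref{eq:pNLSP} meet orthogonally (this is also the special case $\theta=\tfrac{\pi}{2}$, since $\ve_{\pi/2}=\ve_{\rm NLSP}$). The second is the norm identity
\begin{equation*}
|\ve_{\rm HP}|^2=|\ve_{\rm NLSP}|^2=\sum_{j=1}^N\big(a_j^2+b_j^2\big),
\end{equation*}
which holds because passing from $\ve_{\rm HP}$ to $\ve_{\rm NLSP}$ only permutes and sign-flips the entries $a_j,b_j$ and hence preserves the sum of squares.

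Finally I would combine these. Using the decomposition of $\ve_\theta$ together with orthogonality gives $\ve_{\rm HP}\cdot\ve_\theta=\cos\theta\,|\ve_{\rm HP}|^2$ and $|\ve_\theta|^2=\cos^2\theta\,|\ve_{\rm HP}|^2+\sin^2\theta\,|\ve_{\rm NLSP}|^2=|\ve_{\rm HP}|^2$, so that the cosine of the angle between $\ve_{\rm HP}$ and $\ve_\theta$ is
\begin{equation*}
\frac{\ve_{\rm HP}\cdot\ve_\theta}{|\ve_{\rm HP}|\,|\ve_\theta|}
=\frac{\cos\theta\,|\ve_{\rm HP}|^2}{|\ve_{\rm HP}|^2}=\cos\theta .
\end{equation*}
Since the coefficients $\cos\theta,\sin\theta$ and the orthonormality-up-to-scale of $\{\ve_{\rm HP},\ve_{\rm NLSP}\}$ hold at every phase-space point, the angle equals $\theta$ identically, proving the second assertion. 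I do not expect a genuine obstacle here, as the argument is pure linear algebra; the one step that must not be skipped is the norm identity $|\ve_{\rm HP}|=|\ve_{\rm NLSP}|$, since without it the inner-product formula would return a point-dependent, $\theta$-distorted angle rather than the clean value $\theta$. The only other point requiring care is the interpretive one of fixing what ``angle between families'' means, which I handle by reducing to the frozen-moduli phase space with its Euclidean metric as above.
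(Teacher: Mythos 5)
Your proposal is correct and follows exactly the route the paper intends: the paper offers no written proof beyond the remark that ``the vector fields on the right hand sides of equations \eqref{eq:pHP} and \eqref{eq:pNLSP} are orthogonal,'' and your argument is precisely the fleshed-out pointwise comparison of those velocity fields (with the decomposition $\ve_\theta=\cos\theta\,\ve_{\rm HP}+\sin\theta\,\ve_{\rm NLSP}$ and the equal-norm identity) in the frozen-moduli phase space. No gaps; the only degenerate case, where $\avg{Q}(x_j)=0$ for all $j$ and the angle is undefined, is glossed over by the paper as well.
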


We can write these ODE systems in a simpler form 
in terms of a complex variable 
\begin{equation} \label{eq:Xj}
X_j=x_j+\inum \omega_j
\end{equation}
which combines the positions and phases. 
\begin{lemma} 
The ODE systems \eqref{eq:pHP} and \eqref{eq:pNLSP}
 can be expressed in the complex-variable form 
\begin{align} 
& \dot X_j=\avg{Q}(x_j), 
\label{eq:pHPcplx}
\\
& \dot X_j=i\avg{Q}(x_j),  
\label{eq:pNLSPcplx}
\end{align}
and similarly for system \eqref{eq:pmfamily},
\begin{equation} \label{eq:pmfamcplx} 
\dot X_j=e^{\inum \theta}\avg{ Q}(x_j)  
\end{equation}
holds.
 \end{lemma}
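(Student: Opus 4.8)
The plan is to reduce each complex-variable identity to the corresponding real ODE system by a direct differentiation followed by a single observation about the averaging operator. First I would differentiate the definition \eqref{eq:Xj}, obtaining $\dot X_j = \dot x_j + \inum\dot\omega_j$, and then substitute the right-hand sides of the real systems \eqref{eq:pHP}, \eqref{eq:pNLSP}, and \eqref{eq:pmfamily}. Everything then hinges on the elementary fact that the symmetric average $\avg{\,\cdot\,}(x_j)$, being the arithmetic mean of the left and right limits at $x_j$, is a real-linear operation and hence commutes with taking real and imaginary parts: $\avg{\re Q}(x_j) = \re\avg{Q}(x_j)$ and $\avg{\im Q}(x_j) = \im\avg{Q}(x_j)$.

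For the HP system I would then write $\dot X_j = \avg{\re Q}(x_j) + \inum\avg{\im Q}(x_j) = \re\avg{Q}(x_j) + \inum\im\avg{Q}(x_j) = \avg{Q}(x_j)$, which is \eqref{eq:pHPcplx}. For the NLSP system the same substitution gives $\dot X_j = -\avg{\im Q}(x_j) + \inum\avg{\re Q}(x_j) = -\im\avg{Q}(x_j) + \inum\re\avg{Q}(x_j) = \inum\avg{Q}(x_j)$, recovering \eqref{eq:pNLSPcplx}. Note that the constraint $d|m_j|/dt = 0$ plays no role in either computation, since only $x_j$ and $\omega_j$ enter $X_j$.

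The general family \eqref{eq:pmfamily} follows identically, with the one extra remark that the constant phase $e^{\inum\theta}$ factors out of the average, $\avg{e^{\inum\theta}Q}(x_j) = e^{\inum\theta}\avg{Q}(x_j)$; recombining $\avg{\re(e^{\inum\theta}Q)}(x_j) + \inum\avg{\im(e^{\inum\theta}Q)}(x_j) = \avg{e^{\inum\theta}Q}(x_j) = e^{\inum\theta}\avg{Q}(x_j)$ yields \eqref{eq:pmfamcplx}, which in turn contains the HP ($\theta=0$) and NLSP ($\theta=\tfrac{\pi}{2}$) identities as the special cases already treated. There is no substantive obstacle in this lemma: the only point demanding any care is justifying the commutation of $\avg{\,\cdot\,}$ with $\re$, $\im$, and with multiplication by the constant $e^{\inum\theta}$, and this is immediate from the real-linearity of the left/right-limit average. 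I would therefore present the argument in the compact unified form above, deriving \eqref{eq:pmfamcplx} first and reading off \eqref{eq:pHPcplx} and \eqref{eq:pNLSPcplx} as specializations, rather than treating the three systems separately.
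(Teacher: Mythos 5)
Your proposal is correct and is exactly the computation the paper has in mind: the lemma is stated without proof precisely because it reduces to differentiating $X_j = x_j + \inum\omega_j$, substituting the real systems, and using the real-linearity of the left/right-limit average to commute $\avg{\,\cdot\,}$ with $\re$, $\im$, and multiplication by the constant $e^{\inum\theta}$. Deriving \eqref{eq:pmfamcplx} first and reading off the $\theta=0$ and $\theta=\tfrac{\pi}{2}$ cases is a tidy way to organize it, and your remark that the constraint $d|m_j|/dt=0$ plays no role is accurate.
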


\begin{remark} 
We recall \cite{chang-hu-szmigielski} that the peakons 
for the two-component modified Camassa-Holm (2mCH) equation 
satisfy an identically looking ODE system $\dot x_j=\avg{Q}(x_j)$, 
but with an important difference that $Q(x)=(u-u_x)(v+v_x)$,
where $(u,v)$ are the two (real) components. 
One natural reduction of the 2mCH equation is the modified Camassa-Holm (1mCH) equation obtained by putting $v=u$ \cite{chang-szmigielski-m1CHlong}.  
In a way the present paper is about the reduction $v=\bar u$.  
However, one needs to keep in mind that the work in \cite{chang-hu-szmigielski} is restricted to the real case, 
so the results of that paper do not apply in any direct way to the present situation.  
Nevertheless, for reasons that are not fully understood at this moment, 
the solution to the inverse problem associated with \eqref{eq:HP} or \eqref{eq:NLSP} 
turns out to have more similarities with the inverse problem for 1mCH peakons studied in \cite{chang-szmigielski-m1CHlong} 
rather than with the one for the 2mCH peakons in \cite{chang-hu-szmigielski}.   
\end{remark} 

\subsection{Poisson bracket}\label{sec:peakonPB}
We will now introduce a Poisson structure that will allow 
systems \autoref{eq:pHPcplx} and \autoref{eq:pNLSPcplx} to arise as Hamilton's equations.  
We observe that the vector field in equations \eqref{eq:pHP} and \eqref{eq:pNLSP}  
is not Lipschitz in the whole space $\RR^{2N}$ of $(x_j, \omega_j)$s.  
To remedy this, 
we will have to avoid the hyperplanes $x_i=x_j, i\neq j$, 
for example, by restricting our attention to the region of positions 
 where the ordering $x_1<x_2<\cdots<x_N$ holds.  Let us then 
 denote that region
 \begin{equation*} 
 \mathcal{P}=\{\mathbf x\in \RR^N: x_1<x_2<\cdots<x_N\} 
 \end{equation*} 
and, subsequently, define the pertinent phase space as follows.  
\begin{definition} 
\begin{equation} 
\mathcal{M}=\mathcal{P}\times T^N
\end{equation}
where $T^N$ is the $N$-dimensional torus of angles $\omega_1, \omega_2, \cdots, \omega_N$.  
\end{definition} 
Locally, it is convenient to think of a point $\xi=(x_1,x_2, \cdots, x_N, \omega_1, \omega_2, \cdots, \omega_N)\in \mathcal{M}$ as a complex 
vector $X=(X_1,X_2, \cdots, X_N)$ where $X_j$ was introduced in \autoref{eq:Xj}.  
Consequently, any function $f(\mathbf{\xi})\in \mathcal{C}^{\infty}(\mathcal{M})$  can be viewed as a smooth function $f$ of 
$X$ and its complex conjugate $\bar X$, namely $f=f(\mathbf{\xi})=f(X,\bar X)$.  
\begin{proposition} \label{lem:PB1}
The bracket 
\begin{equation} \label{eq:PB1}
\{X_j, X_k\}=\sgn(j-k), \quad \{X_j, \bar X_k\}=0, \quad \{\bar X_j, \bar X_k\}=\sgn(j-k), 
\end{equation}
defines a Poisson structure on $\mathcal{C}^{\infty}(\mathcal{M})$. 
\end{proposition}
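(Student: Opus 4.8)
The plan is to extend the bracket from the generators $X_j,\bar X_j$ to all of $\mathcal{C}^{\infty}(\mathcal{M})$ by bilinearity and the Leibniz rule, and then to check the two defining axioms of a Poisson structure—skew-symmetry and the Jacobi identity—directly. Treating $X_j,\bar X_j$ as (Wirtinger) coordinates, the Leibniz rule forces the bracket to take the form
\[
\{f,g\}=\sum_{j,k}\sgn(j-k)\bigl(\partial_{X_j}f\,\partial_{X_k}g+\partial_{\bar X_j}f\,\partial_{\bar X_k}g\bigr),
\]
where the absence of cross terms reflects $\{X_j,\bar X_k\}=0$. As a preliminary consistency check I would record that, in the underlying real coordinates $(x_j,\omega_j)$ related by $X_j=x_j+\inum\omega_j$, the data \eqref{eq:PB1} are equivalent to the constant brackets $\{x_j,x_k\}=\tfrac12\sgn(j-k)$, $\{\omega_j,\omega_k\}=-\tfrac12\sgn(j-k)$, $\{x_j,\omega_k\}=0$, so that the displayed formula genuinely defines a bracket on smooth functions on $\mathcal{M}$ rather than a merely formal expression.

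Skew-symmetry is then immediate: relabelling the summation indices $j\leftrightarrow k$ in the formula above and invoking $\sgn(j-k)=-\sgn(k-j)$ yields $\{g,f\}=-\{f,g\}$. The Leibniz rule in each argument holds by construction, since $\partial_{X_j}$ and $\partial_{\bar X_j}$ are derivations. Thus the only substantive point is the Jacobi identity.

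For the Jacobi identity I would exploit the decisive structural feature of \eqref{eq:PB1}, namely that all the structure coefficients are constants, independent of the point of $\mathcal{M}$. For any bracket $\{f,g\}=\sum_{a,b}\pi^{ab}\,\partial_a f\,\partial_b g$ with $\pi^{ab}$ skew-symmetric (indices $a,b,\dots$ ranging over the full coordinate set), a standard computation shows that the second-derivative terms cancel by skew-symmetry, leaving the Jacobiator
\[
\{f,\{g,h\}\}+\{g,\{h,f\}\}+\{h,\{f,g\}\}=\sum_{a,b,c}\Bigl(\sum_d\bigl(\pi^{ad}\partial_d\pi^{bc}+\pi^{bd}\partial_d\pi^{ca}+\pi^{cd}\partial_d\pi^{ab}\bigr)\Bigr)\partial_a f\,\partial_b g\,\partial_c h.
\]
Here every coefficient $\pi^{ab}\in\{\,0,\pm\sgn(j-k)\,\}$ is constant, so each $\partial_d\pi^{bc}=0$ and the entire right-hand side vanishes identically. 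This establishes the Jacobi identity and completes the verification that \eqref{eq:PB1} defines a Poisson structure.

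The only genuine care required—what I would flag as the main obstacle—is the bookkeeping that legitimizes treating $X_j,\bar X_j$ as coordinates on the real manifold $\mathcal{M}$ and that guarantees the cancellation of the second-order terms in the Jacobiator. This is cleanest to handle by first passing to the real coordinates $(x_j,\omega_j)$, where the constancy of all structure coefficients makes the vanishing of the Jacobiator manifest, and then translating the conclusion back through $X_j=x_j+\inum\omega_j$.
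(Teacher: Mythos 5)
Your proof is correct and follows essentially the same route as the paper: both pass to the real coordinates $(x_j,\omega_j)$ via $X_j=x_j+\inum\omega_j$, identify the equivalent constant skew-symmetric structure matrix, and conclude that the Jacobi identity holds automatically because the coefficients are $\xi$-independent. The only difference is that you spell out the Jacobiator cancellation explicitly, which the paper leaves as a standard fact.
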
 
\begin{proof} 
It suffices to observe that \autoref{eq:PB1} is equivalent 
to 
\begin{equation}\label{eq:PB1R} 
\{x_j, x_k\}=\tfrac12 \sgn(j-k), \quad \{\omega_j, \omega_k\}=-\tfrac12 \sgn(j-k), \quad \{\omega_j, x_k\}=0. 
\end{equation}
This set of brackets defines a skew symmetric matrix $\Omega_{ab}$ 
on $\RR ^{2N}$ with block form 
\begin{equation*} 
\Omega=\begin{pmatrix} [\frac 12 \sgn(j-k)] &0\\ 0& -[\frac 12 \sgn(j-k)]\end{pmatrix}, 
\end{equation*}
each block having dimension $N\times N$.  
Then upon setting 
\begin{equation}\label{eq:PBfg}
\{f,g\}(\xi)=\sum_{a, b=1}^{2N} \Omega_{ab}\frac{\partial f}{\partial \xi_a}\frac{\partial g}{\partial \xi _b}
\end{equation}
we obtain the desired Poisson structure on $\mathcal{C}^{\infty}(\mathcal{M})$, 
since the skew symmetric matrix $\Omega$ is $\xi$-independent 
and thus the bracket \eqref{eq:PBfg} automatically satisfies the Jacobi identity.  
\end{proof} 
\begin{remark} 
Since $\Omega$ is full rank the Poisson bracket given by \autoref{eq:PB1}
equips $\mathcal{M}$ with a symplectic structure.  
\end{remark} 
Before we prove the main statement of this section we 
need to express $H^{(0)}$ and $E^{(0)}$ (see \autoref{eq:H0} and \autoref{eq:E0}) 
in terms of coordinates on $\mathcal{M}$.  
The detailed computations are provided in \autoref{sec:AppPB} (see also 
\autoref{lem:M1vH1} for a 
spectral interpretation of both quantities). 
 
\begin{lemma} \label{lem:H0E0} 
Let $u$ be given by the peakon Ansatz \eqref{eq:peakonansatz} and let 
the multiplication of the singular term $Qm$ be defined by \eqref{eq:Qm}.  
Then 
\begin{align} \label{eq:H0M} 
H^{(0)}\big|_{\mathcal{M}}&=4 \re\big(\sum_{k<l} \abs{m_k}\abs{m_l}
e^{X_k-X_l}\big)+2\sum_l \abs{m_l}^2, \\
E^{(0)}\big|_{\mathcal{M}}&=-4\im\big(\sum_{k<l} \abs{m_k}\abs{m_l}
e^{X_k-X_l}\big). \label{eq:E0M}
\end{align} 
\end{lemma}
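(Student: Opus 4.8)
The plan is to substitute the peakon Ansatz \eqref{eq:peakonansatz} directly into the definitions \eqref{eq:H0} and \eqref{eq:E0}, using $m=2\sum_j m_j\delta_{x_j}$ from \eqref{eq:peakon measure} and the ordering $x_1<\cdots<x_N$ that holds throughout $\mathcal{M}$. The single bookkeeping identity driving both computations is that, for $k<l$, one has $e^{-\abs{x_k-x_l}}=e^{x_k-x_l}$, and writing $m_j=\abs{m_j}e^{\inum\omega_j}$ and $X_j=x_j+\inum\omega_j$ gives the two elementary relations $\re(m_k\bar m_l)\,e^{x_k-x_l}=\re(\abs{m_k}\abs{m_l}e^{X_k-X_l})$ and $\im(m_l\bar m_k)\,e^{x_k-x_l}=-\im(\abs{m_k}\abs{m_l}e^{X_k-X_l})$. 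These follow at once from the polar form of $m_j$ and reduce the entire lemma to organizing double sums.

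For $H^{(0)}$ I would use the form $H^{(0)}=\int_{\RR}\re(\bar u m)\,dx$. Since $u$ is continuous, the pairing of $\bar u$ against the discrete measure is unambiguous and yields $\int_{\RR}\bar u\,m\,dx=2\sum_{j,k}m_j\bar m_k\,e^{-\abs{x_j-x_k}}$ via $\bar u(x_j)=\sum_k\bar m_k e^{-\abs{x_j-x_k}}$. Taking real parts, I would separate the diagonal $j=k$, which contributes $2\sum_l\abs{m_l}^2$, from the off-diagonal; because the summand is symmetric under $j\leftrightarrow k$ (recall $\re(m_j\bar m_k)=\re(m_k\bar m_j)$ and $e^{-\abs{x_j-x_k}}$ is symmetric), the off-diagonal part equals $2\cdot 2\sum_{k<l}\re(m_k\bar m_l)e^{x_k-x_l}$, which the first identity above converts into $4\re\bigl(\sum_{k<l}\abs{m_k}\abs{m_l}e^{X_k-X_l}\bigr)$. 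This is exactly \eqref{eq:H0M}. (One could equally verify this through $H^{(0)}=\int(\abs{u}^2+\abs{u_x}^2)\,dx$ using $\int e^{-\abs{x-x_j}-\abs{x-x_k}}\,dx=(1+\abs{x_j-x_k})e^{-\abs{x_j-x_k}}$ together with the analogous derivative integral $(1-\abs{x_j-x_k})e^{-\abs{x_j-x_k}}$; the two contributions add to $2e^{-\abs{x_j-x_k}}$ and recover the same answer.)

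For $E^{(0)}=\int_{\RR}\im(u_x\bar m)\,dx$ the main point — and the only genuine obstacle — is that $u_x$ is discontinuous exactly at the support points $x_j$ of $\bar m$, so the product is singular and must be regularized in the same manner as $Qm$ in \eqref{eq:Qm}, namely by replacing $u_x(x_j)$ with its average $\avg{u_x}(x_j)$ of left and right limits. I would compute this from $u_x=-\sum_k m_k\sgn(x-x_k)e^{-\abs{x-x_k}}$: the self-term $k=j$ is odd about $x_j$ and hence averages to zero, leaving $\avg{u_x}(x_j)=-\sum_{k<j}m_k e^{x_k-x_j}+\sum_{k>j}m_k e^{x_j-x_k}$, where I used $\sgn(x_j-x_k)=\sgn(j-k)$. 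Then $\int_{\RR}u_x\bar m\,dx=2\sum_j\bar m_j\avg{u_x}(x_j)$, and reindexing the two resulting sums into the single range $k<l$ collapses them to $2\sum_{k<l}(m_l\bar m_k-m_k\bar m_l)e^{x_k-x_l}=4\inum\sum_{k<l}\im(m_l\bar m_k)e^{x_k-x_l}$. Taking the imaginary part and applying the second identity above gives \eqref{eq:E0M}.

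I expect the vanishing of the self-interaction term in $\avg{u_x}(x_j)$, together with keeping the index relabelings consistent when passing from $\sum_{j\neq k}$ and $\sum_{k>j}$ to $\sum_{k<l}$, to be the only places demanding care; everything else is substitution and real/imaginary-part algebra. It is worth emphasizing that the averaging used for $E^{(0)}$ is precisely the regularization forced by Lax integrability in \eqref{eq:Qm}, so that $E^{(0)}\big|_{\mathcal{M}}$ is evaluated consistently with the rest of the conservative-peakon framework rather than by an ad hoc choice.
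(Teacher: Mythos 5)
Your proposal is correct and follows essentially the same route as the paper: substitute the peakon Ansatz, evaluate $\bar u$ (respectively the average $\avg{u_x}$, with the self-term vanishing) at the support points of the discrete measure, split the double sums according to the ordering $x_1<\cdots<x_N$, and convert to the variables $X_j$ via the polar form of $m_j$. The paper's Lemmas in the appendix carry out exactly this computation, writing the $u_x$ average compactly with $\sgn(x_k-x_l)$ rather than as two separately reindexed sums, but the argument is the same.
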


\begin{theorem} \label{thm:XjHam}
\autoref{eq:pHP} and \autoref{eq:pNLSP} 
are Hamilton's equations of motion with respect to the Poisson 
structure given by \autoref{eq:PB1R} and Hamiltonians $H^{(0)}$ and 
$E^{(0)}$ respectively.  In terms of the complex variable $X $ we 
have 
\begin{equation}
\dot X_j=\{X_j, H^{(0)}\}  
\end{equation} 
for the HP peakon flow \eqref{eq:pHP}, 
\begin{equation}
\dot X_j=\{X_j, E^{(0)}\}  
\end{equation} 
for the NLSP peakon flow \eqref{eq:pNLSP}, and 
\begin{equation}
\dot X_j=\{X_j, K^{(0)}\}  
\end{equation} 
for the general peakon flow \eqref{eq:pmfamily}. 
\end{theorem}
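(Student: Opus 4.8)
The plan is to reduce all three flows to a single bracket computation. First I would record how the Poisson bracket \eqref{eq:PB1} acts on a coordinate $X_j$. Since the bracket is a derivation in each slot and $\{X_j,X_p\}=\sgn(j-p)$, $\{X_j,\bar X_p\}=0$, for any smooth $f=f(X,\bar X)$ one has $\{X_j,f\}=\sum_p \sgn(j-p)\,\partial f/\partial X_p$; only the holomorphic derivatives survive. Introducing the generating function $S=\sum_{k<l}\abs{m_k}\abs{m_l}e^{X_k-X_l}$ and treating the moduli $\abs{m_j}$ as fixed parameters (so that $d\abs{m_j}/dt=0$ holds automatically, the $\abs{m_j}$ not being coordinates on $\mathcal{M}$), \autoref{lem:H0E0} reads $H^{(0)}=2(S+\bar S)+2\sum_l\abs{m_l}^2$ and $E^{(0)}=2\inum(S-\bar S)$. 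Hence $\partial H^{(0)}/\partial X_p=2\,\partial S/\partial X_p$ and $\partial E^{(0)}/\partial X_p=2\inum\,\partial S/\partial X_p$, so both flows are controlled by the single quantity $\{X_j,S\}$, the $X$-independent constant $2\sum_l\abs{m_l}^2$ dropping out of every bracket.

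Next I would compute the target $\avg{Q}(x_j)$ directly from the Ansatz. Differentiating $u=\sum_j m_j e^{-\abs{x-x_j}}$ gives, for $x$ away from the nodes, $u-u_x=2\sum_{x_k<x}m_k e^{x_k-x}$ and $\bar u+\bar u_x=2\sum_{x_l>x}\bar m_l e^{x-x_l}$, so that $Q=4\sum_{x_k<x}\sum_{x_l>x}m_k\bar m_l e^{x_k-x_l}$. Taking the one-sided limits at $x=x_j$ in the ordered region $\mathcal{P}$ and averaging, the diagonal ($k=j$ or $l=j$) contributions acquire a factor $\tfrac12$; using $m_k\bar m_l e^{x_k-x_l}=\abs{m_k}\abs{m_l}e^{X_k-X_l}$ this yields $\avg{Q}(x_j)=4\sum_{k<j<l}\abs{m_k}\abs{m_l}e^{X_k-X_l}+2\abs{m_j}\sum_{l>j}\abs{m_l}e^{X_j-X_l}+2\abs{m_j}\sum_{k<j}\abs{m_k}e^{X_k-X_j}$.

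The main step is to match the bracket against this explicit expression. Using $\partial S/\partial X_p=\sum_{l>p}\abs{m_p}\abs{m_l}e^{X_p-X_l}-\sum_{k<p}\abs{m_k}\abs{m_p}e^{X_k-X_p}$ together with the weight $\sgn(j-p)$, I would collect the coefficient of each $\abs{m_a}\abs{m_b}e^{X_a-X_b}$ (with $a<b$) in $\{X_j,H^{(0)}\}=2\{X_j,S\}$ according to the position of $a,b$ relative to $j$. This bookkeeping is the only real obstacle: it gives coefficient $4$ when $a<j<b$, coefficient $2$ for the two boundary pairs $(a,j)$ and $(j,b)$, and $0$ when $a$ and $b$ lie on the same side of $j$ (the two $\sgn$-weighted sums cancelling there). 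This matches $\avg{Q}(x_j)$ exactly, so $\{X_j,H^{(0)}\}=\avg{Q}(x_j)$, which is \eqref{eq:pHPcplx}; likewise $\{X_j,E^{(0)}\}=2\inum\{X_j,S\}=\inum\avg{Q}(x_j)$, which is \eqref{eq:pNLSPcplx}. Splitting into real and imaginary parts then recovers \eqref{eq:pHP} and \eqref{eq:pNLSP}.

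Finally, the general flow follows by linearity of the bracket. Since $\dot X_j=e^{\inum\theta}\avg{Q}(x_j)=\cos\theta\,\avg{Q}(x_j)+\sin\theta\,(\inum\avg{Q}(x_j))$, I would write $\dot X_j=\cos\theta\,\{X_j,H^{(0)}\}+\sin\theta\,\{X_j,E^{(0)}\}=\{X_j,\,\cos\theta\,H^{(0)}+\sin\theta\,E^{(0)}\}$, identifying $K^{(0)}$ as the real linear combination of $H^{(0)}$ and $E^{(0)}$ that reduces to $H^{(0)}$ at $\theta=0$ and to $E^{(0)}$ at $\theta=\tfrac{\pi}{2}$. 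Together with $d\abs{m_j}/dt=0$ (automatic, as noted), this reproduces \eqref{eq:pmfamily} and completes the proof.
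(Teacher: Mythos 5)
Your proposal is correct and follows essentially the same route as the paper's proof: express $H^{(0)}$ and $E^{(0)}$ through the holomorphic sum $\sum_{k<l}\abs{m_k}\abs{m_l}e^{X_k-X_l}$ (\autoref{lem:H0E0}), apply the bracket \eqref{eq:PB1} so that only $\sgn(j-k)-\sgn(j-l)$ survives, and match the result against $\avg{Q}(x_j)$; the only difference is that you inline the computation of $\avg{Q}(x_j)$, which the paper delegates to \autoref{lem:Q} and \autoref{lem:Qavg} in the Appendix, and your sign bookkeeping agrees with theirs.

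One caveat on the last claim: your bracket computation correctly yields $\dot X_j=\{X_j,\cos\theta\,H^{(0)}+\sin\theta\,E^{(0)}\}$, but the functional you then ``identify'' as $K^{(0)}$ is not the one defined in \eqref{eq:K0K1def}, which reads $K^{(0)}=\sin\theta\,H^{(0)}-\cos\theta\,E^{(0)}$ and reduces to $-E^{(0)}$ (not $H^{(0)}$) at $\theta=0$; with that definition $\{X_j,K^{(0)}\}=e^{\inum(\theta-\pi/2)}\avg{Q}(x_j)$, a $\tfrac{\pi}{2}$ phase off from \eqref{eq:pmfamcplx}. This mismatch is an internal tension between \eqref{eq:K0K1def} (tied to the $\D$-structure, i.e.\ the bracket $\{\cdot,\cdot\}_{\pi/2}$) and the theorem's bracket $\{\cdot,\cdot\}_0$, rather than an error in your argument, but you should state explicitly which linear combination you mean instead of silently redefining $K^{(0)}$.
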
 
\begin{proof} 
We will first compute $\{X_j, H^{(0)}\}$ using \autoref{eq:H0M}; 
for convenience we abbreviate \textit{ c.c.} to mean the complex conjugate.  
We have 
\begin{equation*} \begin{split} 
\{X_j, H^{(0)}\}=&2 \{X_j, \sum_{k<l}\abs{m_k}\abs{m_l} e^{X_k-X_l}+ \text{c.c.}\}\stackrel{\autoref{eq:PB1}}{=}
2\sum_{k<l} \abs{m_k}\abs{m_l} e^{X_k-X_l} \{X_j, X_k-X_l\}=\\&2\sum_{k<l} \abs{m_k}\abs{m_l} e^{X_k-X_l}\big(\sgn(j-k)-\sgn(j-l) \big)=4 \sum_{k<j<l} 
\abs{m_k}\abs{m_l} e^{X_k-X_l} +\\&2\abs{m_j}\big(\sum_{k<j}\abs{m_k}e^{X_k-X_j}+\sum_{j<k} \abs{m_k} e^{X_j-X_k}\big)\stackrel{\autoref{lem:Qavg}}{=} \avg{Q}(x_j). 
\end{split} 
\end{equation*} 
Likewise, 
\begin{equation*} \begin{split} 
\{X_j, E^{(0)}\}=&2\inum \{X_j, \sum_{k<l}\abs{m_k}\abs{m_l} e^{X_k-X_l}-\text{c.c.}\}\stackrel{\autoref{eq:PB1}}{=}
2\inum\sum_{k<l} \abs{m_k}\abs{m_l} e^{X_k-X_l} \{X_j, X_k-X_l\}=\\&2\inum\sum_{k<l} \abs{m_k}\abs{m_l} e^{X_k-X_l}\big(\sgn(j-k)-\sgn(j-l) \big)=4\inum \sum_{k<j<l} 
\abs{m_k}\abs{m_l} e^{X_k-X_l} +\\&2\inum\abs{m_j}\big(\sum_{k<j}\abs{m_k}e^{X_k-X_j}+\sum_{j<k} \abs{m_k} e^{X_j-X_k}\big)\stackrel{\autoref{lem:Qavg}}{=} \inum\avg{Q}(x_j). 
\end{split} 
\end{equation*} 
Finally, the general case can be verified by using the above results and 
\eqref{eq:K0K1def}.  
\end{proof} 
\begin{remark} 
In addition to the Poisson bracket \eqref{eq:PB1},
there is a second Poisson structure on $\mathcal{C}^{\infty}(\mathcal{M})$
defined by another bracket 
\begin{equation} \label{eq:PB2}
\{X_j, X_k\}_{\frac \pi 2}=\inum \sgn(j-k), \quad \{X_j, \bar X_k\}_{\frac \pi 2}=0, \quad \{\bar X_j, \bar X_k\}_{\frac \pi 2}=-\inum \sgn(j-k), 
\end{equation}
or, equivalently, 
\begin{equation}\label{eq:PB2R} 
\{x_j, x_k\}_{\frac \pi 2}=0, \quad \{\omega_j, \omega_k\}_{\frac \pi 2}=0, \quad \{\omega_j, x_k\}_{\frac \pi 2}=\tfrac 12 \sgn(j-k). 
\end{equation}
The rationale for the subscript $\frac \pi 2$ will be explained below, but 
for now 
we note that the skew symmetric matrix $\Omega$ takes the form: 
\begin{equation*} 
\Omega_{\frac \pi 2}=\begin{pmatrix} 0&[\frac12\sgn(j-k)]\\
[\frac12 \sgn(j-k)]&0 \end{pmatrix},  
\end{equation*}
and both of the peakon equations \eqref{eq:pHP} and \eqref{eq:pNLSP} 
remain Hamiltonian, although with swapped Hamiltonians
\begin{equation} 
\dot X_j=\{X_j, H^{(0)}\}_{\frac \pi 2}
\end{equation} 
for the peakon NLSP equation \eqref{eq:pNLSPcplx} 
and 
\begin{equation}
\dot X_j=\{X_j, -E^{(0)}\}_{\frac \pi 2}
\end{equation}
for the peakon HP equation \eqref{eq:pHPcplx}.  
\end{remark}

The second bracket appears to be more natural one, since $-E^{(0)}$ is the Hamiltonian for HP and since this bracket arises from reduction of the Hamiltonian structure given by $\D$, though the reduction is slightly singular.  This point will be elaborated on elsewhere.  However, 
there is another, perhaps more unifying, point of view that we would like to mention here.  
To this end we define a $\theta$-dependent 
Poisson structure
\begin{definition} 
\begin{equation} \label{eq:PBtheta}
\{X_j, X_k\}_{\theta}=e^{\inum \theta} \sgn(j-k), \quad \{X_j, \bar X_k\}_{\theta}=0, \quad \{\bar X_j, \bar X_k\}_{\theta}=e^{-\inum \theta}\sgn(j-k), 
\end{equation}
or, equivalently, 
\begin{equation}\label{eq:PBthetaR} 
\{x_j, x_k\}_{\theta}=\tfrac{\cos \theta}{2} \sgn(j-k), \quad \{\omega_j, \omega_k\}_{\theta}=-\tfrac{\cos \theta}{2} \sgn(j-k), \quad \{\omega_j, x_k\}_{\theta}=\tfrac{\sin  \theta}{2} \sgn(j-k). 
\end{equation}
\end{definition} 
The skew symmetric matrix $\Omega$ now takes the form
\begin{equation*} 
\Omega_{\theta} =\begin{pmatrix} [\tfrac{\cos \theta}{2}\sgn(j-k)] &[\tfrac{\sin \theta}{2} \sgn(j-k)]\\
[\tfrac{\sin \theta}{2}\sgn(j-k)]&- [\tfrac{\cos \theta}{2}\sgn(j-k)] \end{pmatrix},  
\end{equation*}
which clearly combines both previous cases.  
More importantly, the following lemma holds, the proof of which is 
just a simple modification of the proof of \autoref{thm:XjHam}.  
\begin{lemma} \label{lem:XjHamfam}
The $\theta$ family of equations \eqref{eq:pmfamcplx} is Hamiltonian with respect to the 
Poisson bracket \eqref{eq:PBtheta} with a fixed 
Hamiltonian $H^{(0)}$, that is \autoref{eq:pmfamcplx} can be written
\begin{equation} 
\dot X_j=\{X_j, H^{(0)}\}_{\theta}, \qquad 0\leq \theta< \pi.  
\end{equation} 
\end{lemma}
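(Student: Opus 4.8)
The plan is to follow the computation in the proof of \autoref{thm:XjHam} almost verbatim, isolating the single place where the parameter $\theta$ enters. First I would note that $\{\cdot,\cdot\}_{\theta}$ is a genuine Poisson structure for exactly the reason given in \autoref{lem:PB1}: the matrix $\Omega_{\theta}$ encoding the coordinate brackets \eqref{eq:PBthetaR} is skew-symmetric and independent of the point $\xi\in\mathcal{M}$, so the Jacobi identity is automatic. It therefore suffices to establish the single identity $\{X_j,H^{(0)}\}_{\theta}=e^{\inum\theta}\avg{Q}(x_j)$, because the right-hand side is exactly $\dot X_j$ according to \eqref{eq:pmfamcplx}.

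Next I would substitute the explicit expression \eqref{eq:H0M} for $H^{(0)}\big|_{\mathcal{M}}$ and exploit two structural features of the $\theta$-bracket that persist from the $\theta=0$ case. Writing $H^{(0)}\big|_{\mathcal{M}}=2(S+\bar S)+2\sum_l\abs{m_l}^2$ with $S=\sum_{k<l}\abs{m_k}\abs{m_l}e^{X_k-X_l}$, I observe that the constant term $2\sum_l\abs{m_l}^2$ drops out because the $\abs{m_l}$ are fixed parameters on $\mathcal{M}$, and the antiholomorphic piece $\bar S$ drops out because $\{X_j,\bar X_k\}_{\theta}=0$, which is precisely the middle relation in \eqref{eq:PBtheta}. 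Hence $\{X_j,H^{(0)}\}_{\theta}=2\{X_j,S\}_{\theta}$.

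The only difference from \autoref{thm:XjHam} is then the scalar factor carried by the holomorphic brackets. By the Leibniz rule, $\{X_j,S\}_{\theta}=\sum_{k<l}\abs{m_k}\abs{m_l}e^{X_k-X_l}\{X_j,X_k-X_l\}_{\theta}$, and the first relation in \eqref{eq:PBtheta} gives $\{X_j,X_k-X_l\}_{\theta}=e^{\inum\theta}(\sgn(j-k)-\sgn(j-l))$. This is exactly $e^{\inum\theta}$ times the quantity computed in the proof of \autoref{thm:XjHam} and identified there with $\avg{Q}(x_j)$ via \autoref{lem:Qavg}. Factoring the phase out front yields $\{X_j,H^{(0)}\}_{\theta}=e^{\inum\theta}\avg{Q}(x_j)$, completing the verification.

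I do not anticipate a genuine obstacle: the entire content is that $\theta$ enters the coordinate brackets only through the overall phase $e^{\inum\theta}$ on the $\{X_j,X_k\}$ block, while leaving $\{X_j,\bar X_k\}_{\theta}=0$ untouched, so the whole computation simply rescales by $e^{\inum\theta}$. The one point that deserves a line of care is confirming that the cross-bracket $\{X_j,\bar X_k\}_{\theta}$ still vanishes, so that $\bar S$ contributes nothing; this is what allows the \emph{same} Hamiltonian $H^{(0)}$ to drive the whole family, rather than forcing a $\theta$-dependent Hamiltonian as in the swapped formulation of the \autoref{eq:PB2} bracket.
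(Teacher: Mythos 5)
Your proposal is correct and is precisely the ``simple modification of the proof of \autoref{thm:XjHam}'' that the paper invokes without writing out: the $\theta$-dependence enters only as the overall phase $e^{\inum\theta}$ on the holomorphic block $\{X_j,X_k\}_\theta$, while the vanishing of $\{X_j,\bar X_k\}_\theta$ kills the $\bar S$ contribution so that the same Hamiltonian $H^{(0)}$ serves for the whole family. Your added remark that $\Omega_\theta$ is constant and skew-symmetric (so the Jacobi identity is automatic) is also exactly the paper's justification from \autoref{lem:PB1}.
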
 
We will conclude this subsection by stating an easy corollary 
focusing again on special cases of HP and NLSP equations, followed 
by a theorem about the norm preservation for  
 the $\theta$ family.   
\begin{corollary} 
In the original variables $(m_j, x_j)$, and written in the notation consistent with \eqref{eq:PBtheta}, 
the Poisson brackets \eqref{eq:PB1} and \eqref{eq:PB2}  are given by, respectively, 
\begin{align*}
& 
\{m_j,m_k\}_0=\tfrac 12 \sgn(j-k) m_j m_k,
\quad
\{m_j,\bar m_k\}_0=-\tfrac 12 \sgn(j-k) m_j \bar m_k,
\\
&
\{x_j,x_k\}_0=\tfrac 12 \sgn(j-k),
\\
& 
\{x_j,m_k\}_0= 0,\\
&\\
\text{and } \qquad &\\
&\\
& \{m_j,m_k\}_{\frac \pi 2}=0, \quad \quad \{m_j, \bar m_k\}_{\frac \pi 2}=0, 
\\
&
\{x_j,x_k\}_{\frac \pi 2}=0,
\\
& 
\{x_j,m_k\}_{\frac \pi 2}= \tfrac{\inum}{2}\, \sgn(j-k) m_j . 
\end{align*}
\end{corollary}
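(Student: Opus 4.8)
The plan is to treat this as a routine change of variables on the symplectic manifold $\mathcal{M}$, exploiting the fact that the Poisson bracket is a bi-derivation. The essential observation is that on $\mathcal{M}=\mathcal{P}\times T^N$ the moduli $|m_j|$ are \emph{constants}: they are exactly the conserved quantities $d|m_j|/dt=0$ from \eqref{eq:pHP}--\eqref{eq:pNLSP}, hence they are not dynamical coordinates. Thus the only dynamical content of $m_j=|m_j|e^{\inum\omega_j}$ lives in the torus coordinate $\omega_j$, and $m_j,\bar m_j,x_j$ are all smooth functions of the canonical coordinates $(x_1,\ldots,x_N,\omega_1,\ldots,\omega_N)$ whose brackets are recorded in the real forms \eqref{eq:PB1R} and \eqref{eq:PB2R}. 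The corollary is then obtained simply by propagating those elementary brackets through the Leibniz rule.

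Concretely, I would first record the chain-rule identities that follow from the bi-derivation property together with the constancy of $|m_j|$:
\begin{equation*}
\{m_j,m_k\}=-m_jm_k\{\omega_j,\omega_k\},\quad \{m_j,\bar m_k\}=m_j\bar m_k\{\omega_j,\omega_k\},\quad \{m_j,x_k\}=\inum m_j\{\omega_j,x_k\}.
\end{equation*}
Here the factors of $\inum$ arise from $\partial_{\omega_j}e^{\inum\omega_j}=\inum e^{\inum\omega_j}$ and $\partial_{\omega_k}e^{-\inum\omega_k}=-\inum e^{-\inum\omega_k}$, and keeping careful track of these two factors is precisely what converts a single $\{\omega_j,\omega_k\}$ bracket into the various signs on the right-hand sides (note in particular the sign flip between the $m$--$m$ and the $m$--$\bar m$ case).

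The proof then splits into the two cases. For the bracket \eqref{eq:PB1} ($\theta=0$) I substitute $\{\omega_j,\omega_k\}_0=-\tfrac12\sgn(j-k)$ and $\{\omega_j,x_k\}_0=0$ from \eqref{eq:PB1R}, which immediately yields $\{m_j,m_k\}_0=\tfrac12\sgn(j-k)m_jm_k$, $\{m_j,\bar m_k\}_0=-\tfrac12\sgn(j-k)m_j\bar m_k$, and the vanishing of the mixed bracket, while $\{x_j,x_k\}_0=\tfrac12\sgn(j-k)$ is read off directly. For the bracket \eqref{eq:PB2} ($\theta=\tfrac\pi2$) I instead use $\{\omega_j,\omega_k\}_{\frac\pi2}=0$, $\{x_j,x_k\}_{\frac\pi2}=0$, and $\{\omega_j,x_k\}_{\frac\pi2}=\tfrac12\sgn(j-k)$ from \eqref{eq:PB2R}; the first two force the $m$--$m$, $m$--$\bar m$ and $x$--$x$ brackets to vanish, while the chain-rule identity produces the single surviving bracket $\{m_j,x_k\}_{\frac\pi2}=\tfrac{\inum}{2}\sgn(j-k)m_j$.

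There is no genuine analytic obstacle here; the content is algebraic bookkeeping. The one point demanding real care is the argument-ordering convention: because the bracket is skew and $\sgn(j-k)$ is odd, it is easy to misplace an $m_j$ for an $m_k$, equivalently to conflate $\{m_j,x_k\}$ with $\{x_j,m_k\}$ (which differ in the surviving $\pi/2$ case). I would therefore fix once and for all an ordering for the arguments and apply it uniformly. As an internal consistency check I would confirm that the derived $(m,\bar m,x)$ brackets remain skew-symmetric and reproduce the flows \eqref{eq:pHP}--\eqref{eq:pNLSP} when paired with the Hamiltonians of \autoref{thm:XjHam}.
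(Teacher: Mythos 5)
Your proposal is correct and follows the same (and essentially only) route the paper intends: since the moduli $\abs{m_j}$ are parameters on $\mathcal{M}=\mathcal{P}\times T^N$ rather than coordinates, the brackets of $m_j=\abs{m_j}e^{\inum\omega_j}$ and $x_j$ follow from \eqref{eq:PB1R} and \eqref{eq:PB2R} by the Leibniz rule alone. Your care with argument ordering even exposes a typo in the stated corollary: the computation gives $\{x_j,m_k\}_{\frac{\pi}{2}}=\tfrac{\inum}{2}\sgn(j-k)\,m_k$ (equivalently your $\{m_j,x_k\}_{\frac{\pi}{2}}=\tfrac{\inum}{2}\sgn(j-k)\,m_j$), so the $m_j$ on the right-hand side of the paper's last formula should be $m_k$.
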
 
Both Poisson structures can be derived from the first Hamiltonian structure of the NLSP and HP equations by a (singular) reduction process.  
This topic will be  taken up in another publication.  

We recall, as shown in section~\ref{sec:HamNLS},  
that $H^{(0)}$ is conserved for both the HP and NLSP equations; this was then further 
amplified in the peakon sector for all equations in the $\theta$ family of equations (\autoref{lem:XjHamfam}).  
We stress that, at least in the peakon sector, all equations 
in the $\theta$ family share the same Hamiltonian, but 
their Hamiltonian structure deforms.  
Since $||u||^2_{H^1}=H^{(0)}$ is the square of the Sobolev norm, 
we have the following theorem which justifies the name `` conservative peakons''.  We emphasize that this theorem is valid not only 
for the HP and NLSP peakons but, thanks to \autoref{lem:XjHamfam}, 
for the whole peakon $\theta$  family \eqref{eq:pmfamcplx}.  
\begin{theorem} \label{thm:H1conserve_NLSP_HP} 
Let $u$ be given by the peakon Ansatz \eqref{eq:peakonansatz} 
and let the singular term $Qm$ be regularized by \eqref{eq:Qm}.  
Then for any $0\leq \theta<\pi$:
\begin{equation*} 
\frac{d}{dt} ||u||_{H^1}=0.  
\end{equation*} 
\end{theorem}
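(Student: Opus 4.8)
The plan is to deduce conservation of the $H^1$ norm from the fact that, by \autoref{lem:XjHamfam}, the entire $\theta$ family \eqref{eq:pmfamcplx} is a Hamiltonian flow whose Hamiltonian is exactly $H^{(0)}$, and a Hamiltonian is always constant along its own flow. First I would reduce the statement to the conservation of $H^{(0)}$. Recalling from the Remark in \autoref{sec:HamNLS} that $H^{(0)}=||u||_{H^1}^2$, and noting from \eqref{eq:H0M} that $H^{(0)}\big|_{\mathcal{M}}\ge 2\sum_l\abs{m_l}^2>0$ for a nontrivial peakon configuration, the map $t\mapsto||u||_{H^1}=\sqrt{H^{(0)}}$ is smooth and $\frac{d}{dt}||u||_{H^1}=\tfrac{1}{2}(H^{(0)})^{-1/2}\frac{d}{dt}H^{(0)}$. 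Hence it suffices to prove $\frac{d}{dt}H^{(0)}=0$ along \eqref{eq:pmfamcplx}.

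Next I would compute this time derivative by the chain rule, viewing $H^{(0)}\big|_{\mathcal{M}}$ through its explicit form \eqref{eq:H0M} as a smooth function of the phase-space coordinates $(X_j,\bar X_j)$ in which the moduli $\abs{m_j}$ enter only as parameters. Differentiation then splits into a part involving $\dot X_j,\dot{\bar X}_j$ and a part involving $\frac{d\abs{m_j}}{dt}$. The latter vanishes identically, since \eqref{eq:pmfamily} gives $\frac{d\abs{m_j}}{dt}=0$; this is the one place where the peakon ODEs, beyond the $X_j$-dynamics, are actually used.

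The main step is to recognize the surviving terms as a self-bracket. Using \autoref{lem:XjHamfam}, namely $\dot X_j=\{X_j,H^{(0)}\}_{\theta}$ together with its conjugate $\dot{\bar X}_j=\{\bar X_j,H^{(0)}\}_{\theta}$ (which holds because $x_j,\omega_j$ are real, $H^{(0)}$ is real, and the bracket \eqref{eq:PBtheta} satisfies $\overline{\{f,g\}_{\theta}}=\{\bar f,\bar g\}_{\theta}$ on generators), the chain-rule expansion of the bracket yields
\[
\frac{d}{dt}H^{(0)}
=\sum_{j}\Big(\frac{\partial H^{(0)}}{\partial X_j}\,\{X_j,H^{(0)}\}_{\theta}
+\frac{\partial H^{(0)}}{\partial \bar X_j}\,\{\bar X_j,H^{(0)}\}_{\theta}\Big)
=\{H^{(0)},H^{(0)}\}_{\theta}=0,
\]
where the final equality is skew-symmetry of the Poisson bracket (\autoref{lem:PB1} and its $\theta$-deformation). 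This holds for every $0\le\theta<\pi$, completing the proof.

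I do not anticipate a genuine obstacle: once \autoref{lem:XjHamfam} is granted, the statement is a one-line consequence of skew-symmetry, and the only care needed is the bookkeeping that the $\abs{m_j}$-derivatives drop out and that the holomorphic and antiholomorphic contributions reassemble into the self-bracket rather than some other bilinear combination. A fully explicit alternative---differentiating \eqref{eq:H0M} term by term and substituting \eqref{eq:pmfamily}---would also succeed but is needlessly computational and obscures why the conservation holds.
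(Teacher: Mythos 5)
Your proposal is correct and follows essentially the same route as the paper, which states the theorem as a direct consequence of \autoref{lem:XjHamfam} (the whole $\theta$ family is Hamiltonian with the fixed Hamiltonian $H^{(0)}=\|u\|_{H^1}^2$ under the $\theta$-deformed bracket), conservation then following from skew-symmetry of the bracket together with $\tfrac{d|m_j|}{dt}=0$. One small quibble: your claimed lower bound $H^{(0)}\big|_{\mathcal{M}}\ge 2\sum_l|m_l|^2$ does not follow from \eqref{eq:H0M} since the real part there can be negative; positivity of $H^{(0)}$ for nontrivial $u$ instead comes from $H^{(0)}=\|u\|_{H^1}^2$, and in any case constancy of $H^{(0)}$ already gives constancy of $\sqrt{H^{(0)}}$.
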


\section{HP equation; the spectral theory} \label{sec:HPspectral}
For the reminder of this work we will concentrate 
mostly on the HP case, and to some extent on the NLSP case, leaving more 
in-depth analysis of the $\theta$ family for future investigations. 

As was indicated in \autoref{rem:gl2} the Lax pair can be 
modified by a multiple of identity.  This  is effectively changing 
what appeared to be an $\mathfrak{sl}(2, \C)$ theory to a $\mathfrak{gl}(
2, C)$ theory.  
We take the Lax pair for the HP equation \autoref{eq:HP} to be 
(compare with \eqref{eq:UV}, \eqref{eq:Q}, \eqref{eq:J})
\begin{equation} \label{eq:HPLax}
\Psi_x=U \Psi, \quad  \Psi _t =V \Psi, \quad  \Psi=\begin{bmatrix} \Psi_1\\\Psi_2 \end{bmatrix} ,
\end{equation} 
where
\begin{align*} 
&U=\tfrac{1}{2}\begin{bmatrix} -1 &\lambda m\\ -\lambda \bar{m}& 1 \end{bmatrix},\\
\\ 
&V=\tfrac{1}{2}\begin{bmatrix} 4\lambda^{-2} + Q & -2\lambda^{-1} (u-u_x)-\lambda m \re(Q)\\
2\lambda^{-1}(\bar u+\bar u_x)+\lambda \bar m \re(Q)& -Q \end{bmatrix},
\end{align*} 
with $Q$ given by expression \eqref{eq:Q}.  This choice 
of $V$ is compatible, as opposed to $V$ in \autoref{eq:UV}, with the asymptotic behaviour 
$\Psi=\begin{bmatrix} 0\\ e^x \end{bmatrix}$ as $x\rightarrow -\infty$.  This type of asymptotic adjustment is present in all peakon equations known to 
us (e.g. \cite{bss-moment}, \cite{ls-cubicstring}). 
Performing on \eqref{eq:HPLax} a $GL(2,\C)$ gauge transformation $$\Phi=\textrm{diag}(\lambda^{-1}e^{\frac x2}, e^{-\frac x2}) \Psi$$ 
yields a simpler $x$-equation
\begin{equation}\label{eq:xLax}
\Phi_x=\begin{bmatrix}0 & h\\
-z g& 0 \end{bmatrix} \Phi, \qquad    g=\sum_{j=1} ^Ng_j \delta_{x_j}, \qquad h=\sum_{j=1} ^Nh_j \delta_{x_j}, 
\end{equation} 
where $g_j=\bar m_j e^{-x_j}, \, h_j =m_j e^{x_j}, \, z=\lambda^2 $. 
For future use we note, using the complex-variable notation \eqref{eq:Xj}, that 
\begin{equation} \label{eq:ghX}
g_j=\abs{m_j}e^{- X_j} , \qquad h_j =\abs{m_j} e^{X_j}, 
\end{equation} 
hence $g_jh_j=|m_j|^2$.  

We can impose the boundary conditions $\Phi_1(-\infty)=0$ and $\Phi_2(+\infty)=0$ without violating the compatibility of the Lax pair \eqref{eq:HPLax}.  
The argument in support of that is similar to other peakon cases, 
most notably to the modified CH equation \cite{chang-szmigielski-m1CHlong}, 
so we skip it in this paper.  
However, to make the boundary value problem 
\begin{equation}\label{eq:xLaxBVP}
\Phi_x=\begin{bmatrix}0 &h\\
-z g& 0 \end{bmatrix} \Phi, \qquad \Phi_1(-\infty)=\Phi_2(+\infty)=0, 
\end{equation}
well posed,  
we need to define the multiplication of the measures $h$ and $g$ by $\Phi$ on their singular support, namely at the points $x_j$.  
It can be shown in a way similar to what was done in \cite{chang-szmigielski-m1CHlong} that 
if we require that $\Phi$ be left continuous and define $\Phi_{a}\delta_{x_j}=\Phi_a(x_j)\delta_{x_j}, a=1,2$,  
then this choice makes the Lax pair \eqref{eq:HPLax} well defined as a distributional Lax pair, 
and the compatibility condition of the $x$ and $t$ components of the Lax pair indeed 
implies the peakon HP equation \eqref{eq:pHP}.  

The solution $\Phi$ is a piecewise constant function in $x$ which, for convenience, we can normalize by setting $\Phi_2(-\infty)=1$.  The distributional boundary value problem \eqref{eq:xLaxBVP}, whenever $m$ is a discrete measure, is equivalent to a finite difference equation. 
\begin{lemma}\label{lem:forwardR}
Let $q_k=\Phi_1(x_{k}+), \, ~p_k=~\Phi_2(x_{k}+),$ then the finite-difference form of the 
boundary value problem is given by
\begin{equation} \label{eq:dstring}
\begin{gathered}
\begin{aligned}
q_{k}-q_{k-1}&=h_kp_{k-1}, & 1\leq k\leq N, \\
p_{k}-p_{k-1}&=-z g_kq_{k-1},& 1\leq k\leq N,\\
q_0=0, \quad  p_0=1&, \quad p_{N}(z)=0.  &  
\end{aligned}
\end{gathered}
\end{equation} 
\end{lemma}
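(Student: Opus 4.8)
The plan is to exploit the fact that the coefficient matrix in \eqref{eq:xLaxBVP} is a discrete measure supported on the finite set $\{x_1,\ldots,x_N\}$, so that any solution $\Phi$ must be piecewise constant, jumping only at the points $x_j$. First I would record that on each open interval $(x_{k-1},x_k)$ the right-hand side of the system vanishes, forcing $\Phi$ to be constant there; denoting by $(q_k,p_k)=(\Phi_1(x_k+),\Phi_2(x_k+))$ the constant values immediately to the right of $x_k$, left-continuity of $\Phi$ then identifies the values immediately to the left, $\Phi_1(x_k-)=q_{k-1}$ and $\Phi_2(x_k-)=p_{k-1}$, for $1\leq k\leq N$.

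Next I would integrate the distributional system across a single jump point $x_k$. Using the prescribed multiplication rule $\Phi_a\delta_{x_j}=\Phi_a(x_j)\delta_{x_j}$ together with left-continuity (so that $\Phi_a(x_j)=\Phi_a(x_j-)$), the products $h\Phi_2$ and $g\Phi_1$ become the discrete measures $\sum_k h_k p_{k-1}\delta_{x_k}$ and $\sum_k g_k q_{k-1}\delta_{x_k}$. Integrating the system in \eqref{eq:xLaxBVP} over an infinitesimal neighbourhood of $x_k$ then picks out only the atom at $x_k$ and yields the jump relations $q_k-q_{k-1}=h_kp_{k-1}$ and $p_k-p_{k-1}=-zg_kq_{k-1}$, which are exactly the recursion in \eqref{eq:dstring}.

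It remains to translate the boundary data. On $(-\infty,x_1)$ the solution is constant, so the condition $\Phi_1(-\infty)=0$ gives $q_0:=\Phi_1(x_1-)=0$, while the normalization $\Phi_2(-\infty)=1$ gives $p_0:=\Phi_2(x_1-)=1$; on $(x_N,+\infty)$ constancy together with $\Phi_2(+\infty)=0$ forces $p_N=0$. Conversely, any solution of the difference system \eqref{eq:dstring} reconstructs a unique piecewise-constant $\Phi$ that satisfies \eqref{eq:xLaxBVP} together with all three boundary conditions, which establishes the claimed equivalence.

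I expect the only genuinely delicate point to be the well-definedness of the measure products $h\Phi_2$ and $g\Phi_1$ on the singular support: a priori the product of $\delta_{x_j}$ with a function having a jump at $x_j$ is ambiguous, and it is precisely the \emph{left-continuity convention} that removes this ambiguity and renders the forward recursion consistent. I would therefore take care to verify that, under this convention, the piecewise-constant $\Phi$ built from \eqref{eq:dstring} genuinely satisfies the equation in the sense of distributions, rather than merely formally; the remaining bookkeeping with indices and boundary values is routine.
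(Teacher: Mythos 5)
Your proposal is correct and follows exactly the route the paper intends: the paper states the lemma without a formal proof, relying on the immediately preceding discussion (piecewise constancy of $\Phi$, the left-continuity convention, and the rule $\Phi_a\delta_{x_j}=\Phi_a(x_j)\delta_{x_j}$) to make the jump relations and the translation of the boundary data self-evident. Your integration across each atom and the identification $q_0=\Phi_1(x_1-)=0$, $p_0=\Phi_2(x_1-)=1$, $p_N=\Phi_2(x_N+)=0$ are precisely that argument written out.
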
 
An easy proof by induction leads to the following result  
for the associated initial value problem.  
\begin{lemma} \label{cor:pq-degrees} 
Consider the initial value problem
\begin{equation} \label{eq:dstringIVP}
\begin{gathered}
\begin{aligned}
q_{k}-q_{k-1}&=h_kp_{k-1}, & 1\leq k\leq N, \\
p_{k}-p_{k-1}&=-z g_kq_{k-1},& 1\leq k\leq N,\\
q_0=0, &\quad  p_0=1. & 
\end{aligned}
\end{gathered}
\end{equation} 
Then $q_{k}(z)$  is a polynomial of degree $\lfloor\frac{k-1}{2}\rfloor$ in $z$, and $p_{k}(z)$ is a polynomial of degree $\lfloor\frac{k}{2}\rfloor$, respectively. 
\end{lemma}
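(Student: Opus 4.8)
The plan is to induct on $k$ directly from the two recurrences in \eqref{eq:dstringIVP}, $q_k=q_{k-1}+h_kp_{k-1}$ and $p_k=p_{k-1}-zg_kq_{k-1}$, with initial data $q_0=0$, $p_0=1$. First I would prove the degree bounds $\deg q_k\le\floor{(k-1)/2}$ and $\deg p_k\le\floor{k/2}$, adopting the convention that the zero polynomial has degree $-\infty$. The base case $k=0$ is immediate. For the step, the first recurrence gives $\deg q_k\le\max(\deg q_{k-1},\deg p_{k-1})\le\max(\floor{(k-2)/2},\floor{(k-1)/2})=\floor{(k-1)/2}$, and the second gives $\deg p_k\le\max(\deg p_{k-1},1+\deg q_{k-1})\le\max(\floor{(k-1)/2},\floor{k/2})=\floor{k/2}$, where I use $1+\floor{(k-2)/2}=\floor{k/2}$. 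This is the ``easy'' half: it is pure bookkeeping of floor functions and uses nothing about the actual values of $g_j,h_j$.

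The real content is that these bounds are attained, i.e.\ that the top-degree coefficients do not vanish, and I would establish this by carrying the leading coefficients through the same induction. Writing $a_k$ and $b_k$ for the coefficients of $q_k$ and $p_k$ at degrees $\floor{(k-1)/2}$ and $\floor{k/2}$, a comparison of degrees in the two recurrences shows that for odd $k=2m+1$ the top term of $q_k$ comes solely from $h_kp_{k-1}$, giving $a_{2m+1}=h_{2m+1}b_{2m}$, while for even $k=2m$ the top term of $p_k$ comes solely from $-zg_kq_{k-1}$, giving $b_{2m}=-g_{2m}a_{2m-1}$. Iterating these two clean relations from $b_0=1$ yields the pure products $b_{2m}=(-1)^m\prod_{i=1}^m h_{2i-1}g_{2i}$ and $a_{2m+1}=h_{2m+1}b_{2m}$. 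By \eqref{eq:ghX} each of $g_j=\abs{m_j}e^{-X_j}$ and $h_j=\abs{m_j}e^{X_j}$ is nonzero, so these leading coefficients never vanish and $p_{2m}$, $q_{2m+1}$ attain their claimed degrees unconditionally.

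The main obstacle is the remaining two parities. For $q_{2m}$ and $p_{2m+1}$ the two contributing terms have equal degree, so the leading coefficients are the sums $a_{2m}=a_{2m-1}+h_{2m}b_{2m-1}$ and $b_{2m+1}=b_{2m}-g_{2m+1}a_{2m}$, which can cancel for special configurations (already $q_2=h_1+h_2$ vanishes when $h_1=-h_2$). Thus the bare recurrence secures exact degrees only for half the indices; to reach the stated conclusion for all $k$ I would either read it as the generic degree or invoke additional structure on the $m_j$ that rules out this cancellation. Pinning down the precise non-vanishing condition for these ``off-chain'' leading coefficients is the one genuinely delicate point, everything else reducing to the elementary induction above.
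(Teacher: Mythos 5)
Your induction is the same argument the paper has in mind -- the paper offers nothing beyond the remark that ``an easy proof by induction'' gives the result -- and the two halves you do complete are correct: the degree bounds $\deg q_k\le\lfloor\tfrac{k-1}{2}\rfloor$, $\deg p_k\le\lfloor\tfrac{k}{2}\rfloor$ follow exactly as you say, and your leading-coefficient chain $b_{2m}=(-1)^m\prod_{i=1}^m h_{2i-1}g_{2i}$, $a_{2m+1}=h_{2m+1}b_{2m}$ is right, so that by \eqref{eq:ghX} the polynomials $p_{2m}$ and $q_{2m+1}$ attain their degrees unconditionally.

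The reservation you raise about the other two parities is well founded, and it is an imprecision in the statement rather than a gap in your argument. Indeed the leading coefficient of $p_k$ is $(-1)^{\lfloor k/2\rfloor}\sum_{I<J}h_Ig_J$ with $I,J\in\binom{[k]}{\lfloor k/2\rfloor}$ (cf.\ \autoref{lem:constants}); for even $k$ there is a unique interlacing pair and one recovers your pure product, while for odd $k$ this is a sum of several nonzero complex terms that can cancel -- your $q_2=h_1+h_2$ is the minimal instance, realizable since the $h_j$ are genuinely complex here. The lemma should therefore be read as giving the generic degree (equivalently, an upper bound attained unconditionally for $q_{2m+1},p_{2m}$). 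This reading is consistent with how the lemma is used downstream: the claim $c=0$ for even $N$ in \autoref{lem:W} needs only $\deg q_N<\deg p_N$ with $\deg p_N=N/2$ exact, which is your unconditional case, whereas $c\ne0$ for odd $N$ tacitly assumes non-cancellation. If one wants the exact-degree statement for all $k$, the angle condition \eqref{eq:anglecondtion} of \autoref{lem:spec_nonneg}, taken with strict inequalities, supplies it: each term $h_Ig_J$ then has argument in $(-\tfrac{\pi}{2},\tfrac{\pi}{2})$, hence strictly positive real part, so no sum of them can vanish. With that one sentence added, your proof is complete and in fact sharper than the paper's.
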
 
We remark that the finite-difference form of the boundary value problem \eqref{eq:dstringIVP} admits a simple matrix representation
\begin{equation}\label{transition}
\begin{bmatrix}
  q_{k}\\
  p_{k}
\end{bmatrix}
=T_k\begin{bmatrix}
  q_{k-1}\\
  p_{k-1}
\end{bmatrix}, \qquad\qquad
T_k=\begin{bmatrix}
  1& h_k\\
  -z g_k&1
\end{bmatrix}, 
\end{equation}
and observe that in view of \eqref{eq:ghX}
\begin{equation}\label{eq:det Tk} 
\det T_k=1+\abs{m_k}^2 z. 
\end{equation}

\begin{definition} 
A complex number z is an \textit{eigenvalue} of the boundary value problem 
\eqref{eq:dstring} if there exists a solution $\{q_k, p_k\}$ to \eqref{eq:dstringIVP} for 
which $p_{N}(z)=0$.   The set of all eigenvalues is the \textit{spectrum}  of the boundary value problem \eqref{eq:dstring}.  
\end{definition} 

\begin{remark} 
Clearly, $z=0$ is not an eigenvalue. 
\end{remark}

To encode the spectral data we introduce 
the \textit{Weyl function} 
\begin{equation}\label{eq:defWeyl}
W(z)=\frac{q_{N}(z)}{p_{N}(z)}.  
\end{equation} 
If the spectrum of the boundary problem \eqref{eq:dstring} is simple, $W(z)$ can be written as
\begin{equation}\label{eq:simpleW}
W(z)=c+\sum_{j=1}^{\lfloor \frac N2 \rfloor} \frac{b_j}{\zeta_j-z}.  \end{equation}
\begin{remark} In contrast to the situation for the 1mCH equation in \cite{chang-szmigielski-m1CHlong} we no longer expect in general the spectrum to be either simple or real.  
\end{remark} 

Regardless of the nature of the spectrum we easily obtain the following 
result by  examining the $t$ part of the Lax pair \eqref{eq:HPLax} in the region $x>x_{N}$.  
\begin{lemma}\label{lem:t-evolution of qp} 
Let $\{q_k, p_k\}$ satisfy the system of difference equations \eqref{eq:dstringIVP}.  
Then 
\begin{equation}\label{eq:tderqp}
\dot q_{N}=\frac{2}{z}q_{N}-\frac{2L}{z}\,p_{N}, \qquad \dot p_{N}=0,
\end{equation}
where $L=\sum_{j=1}^{N}h_j$. 
Thus $p_{N}(z)$ is independent of time and, in particular, its zeros, i.e. the spectrum, are time invariant.
Moreover, 
\begin{equation}\label{eq:dotW}
\dot W=\frac{2}{z}W-\frac{2L}{z}.  
\end{equation} 
\end{lemma}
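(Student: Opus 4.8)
The plan is to exploit the fact that all the singular data lives at or to the left of $x_N$, so that in the region $x>x_N$ the $x$-equation \eqref{eq:xLax} has vanishing coefficient matrix (the measures $g,h$ have no support there), forcing $\Phi_x=0$; hence $\Phi$ is constant in $x$ and equals $(q_N,p_N)^{T}$. Its time evolution in this region is then governed by an honest, delta-free matrix ODE that can be read off directly. First I would transform the $t$-part of the Lax pair into the $\Phi$-gauge. Since the gauge matrix $G=\mathrm{diag}(\lambda^{-1}e^{x/2},e^{-x/2})$ depends only on $x$ and on the fixed isospectral parameter $\lambda$, we have $G_t=0$, and therefore $\Phi_t=G\Psi_t=GV\Psi=(GVG^{-1})\Phi$. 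Thus the relevant $t$-matrix for $\Phi$ is $\wtil V=GVG^{-1}$, which I only need to evaluate for $x>x_N$.

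Next I would compute the fields in that far region. For $x>x_N$ every $|x-x_j|=x-x_j$, so $u=\sum_j m_j e^{-(x-x_j)}=Le^{-x}$ with $L=\sum_j h_j$. Hence $u_x=-u$, giving $u-u_x=2Le^{-x}$ and, crucially, $\bar u+\bar u_x=0$; together with $m=0$ this forces $Q=(u-u_x)(\bar u+\bar u_x)=0$. Substituting $m=0$, $Q=0$, and $u-u_x=2Le^{-x}$ into $V$ and conjugating by $G$ (the exponential and $\lambda$ factors cancel exactly, using $z=\lambda^2$) collapses $\wtil V$ to the constant matrix $\bigl[\begin{smallmatrix}2/z & -2L/z\\ 0 & 0\end{smallmatrix}\bigr]$. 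Since $\Phi\equiv(q_N,p_N)^{T}$ throughout $x>x_N$, reading off $\dot\Phi=\wtil V\Phi$ componentwise gives $\dot q_N=\tfrac{2}{z}q_N-\tfrac{2L}{z}p_N$ and $\dot p_N=0$, as claimed.

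The consequences are then immediate. Because $\dot p_N=0$, the polynomial $p_N(z)$ is time-independent, so its zero set---the spectrum of the boundary value problem \eqref{eq:dstring}---is preserved by the flow, which is the desired isospectrality. For the Weyl function $W=q_N/p_N$ the quotient rule gives $\dot W=(\dot q_N\,p_N-q_N\,\dot p_N)/p_N^2=\dot q_N/p_N$, and inserting the formula for $\dot q_N$ yields $\dot W=\tfrac{2}{z}W-\tfrac{2L}{z}$. The only step demanding genuine care is the far-field evaluation: everything hinges on the exact cancellation $\bar u+\bar u_x=0$, which annihilates $Q$---thereby killing the bottom row of $\wtil V$ and producing $\dot p_N=0$---and also guarantees that no distributional terms intrude, since we work strictly to the right of the support of $m$. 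The remaining work, conjugating $V$ by $G$ and checking the factor cancellations, is routine.
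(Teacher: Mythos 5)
Your proof is correct and follows the same route the paper indicates: the paper states that the lemma is obtained "by examining the $t$ part of the Lax pair in the region $x>x_N$," which is exactly your far-field evaluation ($u=Le^{-x}$, $\bar u+\bar u_x=0$, $Q=0$, $m=0$) combined with the gauge conjugation $\wtil V=GVG^{-1}$ acting on the constant vector $(q_N,p_N)^{T}$. The paper gives no further detail, so your write-up simply supplies the routine computation it leaves implicit.
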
 
If the spectrum is simple we have further simplification of the time 
evolution.  
\begin{corollary} 
Suppose $p_N(z)$ has simple roots.  
Then the data in the Weyl function \autoref{eq:simpleW} has the time evolution
\begin{equation} \label{eq:tflowSD}
\dot c=0, \quad \dot \zeta_j=0, \quad \dot b_j=\frac{2}{\zeta_j} b_j. 
\end{equation} 
\end{corollary}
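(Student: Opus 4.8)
The plan is to feed the partial-fraction expression \eqref{eq:simpleW} for $W$ into the scalar evolution equation \eqref{eq:dotW} supplied by \autoref{lem:t-evolution of qp}, and then to match the two sides as rational functions of $z$ by comparing their pole structure at $z=0$, at $z=\zeta_j$, and at $z=\infty$.

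First, \autoref{lem:t-evolution of qp} already tells us that $p_N(z)$ is independent of $t$, hence so are its roots; this gives $\dot\zeta_j=0$ at once. With the poles frozen in place, differentiating \eqref{eq:simpleW} in $t$ yields
\begin{equation*}
\dot W=\dot c+\sum_{j=1}^{\floor{N/2}}\frac{\dot b_j}{\zeta_j-z},
\end{equation*}
a rational function whose only finite poles sit at $z=\zeta_j$. Next I would rewrite the right-hand side $\tfrac{2}{z}W-\tfrac{2L}{z}$ of \eqref{eq:dotW} in the same pole basis. Using the elementary identity $\frac{1}{z(\zeta_j-z)}=\frac{1}{\zeta_j}\bigl(\frac{1}{z}+\frac{1}{\zeta_j-z}\bigr)$, it becomes
\begin{equation*}
\frac{1}{z}\Bigl(2c+\sum_{j}\frac{2b_j}{\zeta_j}-2L\Bigr)+\sum_{j}\frac{2b_j}{\zeta_j}\,\frac{1}{\zeta_j-z}.
\end{equation*}

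The comparison then splits into three parts. The coefficient of $1/z$ must vanish, because the left-hand side $\dot W$ has no pole at $z=0$ (recall $z=0$ is not an eigenvalue, so every $\zeta_j\neq0$); this forces the relation $c+\sum_j b_j/\zeta_j=L$. I expect this to be the one place needing a short argument rather than a bare matching of coefficients: it is not an extra hypothesis but an identity that holds automatically at all times. Indeed, evaluating the recursion \eqref{eq:dstringIVP} at $z=0$ gives $T_k=\bigl[\begin{smallmatrix}1&h_k\\0&1\end{smallmatrix}\bigr]$, hence $p_k\equiv1$ and $q_k=\sum_{i\le k}h_i$, so that $W(0)=q_N(0)/p_N(0)=L$, which is exactly the stated relation.

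Finally, with the $1/z$ term removed, matching the pole-free parts (equivalently, taking $z\to\infty$, where both sides tend to their constant terms) yields $\dot c=0$, and matching the residues at each simple pole $z=\zeta_j$ yields $\dot b_j=\tfrac{2}{\zeta_j}b_j$. Together with $\dot\zeta_j=0$ this is precisely \eqref{eq:tflowSD}. The computation is routine once the pole-matching is organized this way; the only subtlety worth flagging is the consistency at $z=0$, which is why I would record $W(0)=L$ explicitly.
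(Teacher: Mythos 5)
Your argument is correct: the paper states this corollary without proof as an immediate consequence of \autoref{lem:t-evolution of qp}, and your pole-matching derivation (substituting \eqref{eq:simpleW} into \eqref{eq:dotW}, splitting $\tfrac{1}{z(\zeta_j-z)}$ by partial fractions, and comparing residues at $z=\zeta_j$, at $z=0$, and at infinity) is exactly the computation being suppressed. Your extra observation that the $1/z$ coefficient vanishes because $W(0)=q_N(0)/p_N(0)=\sum_i h_i=L$ — read off from the recursion \eqref{eq:dstringIVP} at $z=0$ — is a correct and worthwhile consistency check that the paper does not spell out.
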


Let us recall a notation introduced in 
\cite{chang-szmigielski-m1CHlong} to present in a compact form expressions  
appearing in the solution to the inverse problem; these expressions call for choices of $j$-element index sets $I$ and~$J$
from the set $[k] = \{ 1,2,\dots,k \}$.
Henceforth we will use the notation
$\binom{[k]}{j}$ for the set of all $j$-element subsets of $[k]$, listed in increasing order; for example $I\in \binom{[k]}{j}$ means that 
$I=\{i_1, i_2,\dots, i_j\}$ for some increasing sequence $i_1 < i_2 < \dots < i_j\leq ~k$. 
Furthermore, given the multi-index $I$ and a vector $\mathbf{g}=(g_1,g_2, \cdots, g_k)$ we will abbreviate $\mathbf{g}_I=g_{i_1}g_{i_2}\dots g_{i_j}$ etc.
 
\begin{definition}\label{def:bigIndi} 
Let $I,J \in \binom{[k]}{j}$, or $I\in \binom{[k]}{j+1},J \in \binom{[k]}{j}$.  
Then  $I, J$ are said to be \emph{interlacing} if 
\begin{equation*}\label{eq:interlacing}
i_{1} <j_{1} < i_{2} < j_{2} < \dotsb < i_{j} <j_{j}
\end{equation*}
or, 
\begin{equation*}
i_{1} <j_{1} < i_{2} < j_{2} < \dotsb < i_{j} <j_{j}<i_{j+1}, 
\end{equation*}
in the latter case.  
We abbreviate this condition as $I < J$ in either case, and, furthermore, 
use this same notation for $I\in \binom{[k]}{1}, J \in \binom{[k]}{0}$.
\end{definition} 
By a straightforward computation of the coefficients of 
$p_N=1-M_1z+\cdots+\cdots M_j (-z)^j+\cdots $ (see Corollary 2.7 in \cite{chang-szmigielski-m1CHlong}) 
we obtain the following description of constants of motion.  
\begin{lemma} \label{lem:constants} 
The quantities 
\begin{equation*} 
M_j=\sum_{\substack{I,J \in \binom{[N]}{j}\\ I<J}} h_I g_J, \qquad 1\leq j\leq \lfloor\tfrac{N}{2}\rfloor
\end{equation*}
comprise a set of $\lfloor\frac{N}{2}\rfloor$ constants of motion for the system \eqref{eq:pHP}.  
\end{lemma}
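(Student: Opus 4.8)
The plan is to prove the lemma's two assertions separately: that the interlacing sums $M_j$ are exactly the coefficients of the polynomial $p_N(z)$, and that these quantities are time-independent. The second assertion is the easy half, and I would dispose of it at once by invoking \autoref{lem:t-evolution of qp}: since $\dot p_N=0$, the polynomial $p_N(z)$ is frozen in time, and a polynomial is determined by its coefficients, so every coefficient of $p_N$ --- in particular each $M_j$ --- is a constant of motion. That there are exactly $\lfloor N/2\rfloor$ of them is then forced by the degree of $p_N$ recorded in \autoref{cor:pq-degrees}.

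The substantive step is the identification of the coefficients. I would begin from the matrix recursion \eqref{transition}, which iterates to
\begin{equation*}
\begin{bmatrix} q_N\\ p_N\end{bmatrix}=T_NT_{N-1}\cdots T_1\begin{bmatrix}0\\1\end{bmatrix},
\end{equation*}
so that $p_N$ is the $(2,2)$ entry of the ordered product $T_N\cdots T_1$. Expanding this product entrywise, each factor $T_k=\begin{bmatrix}1&h_k\\-zg_k&1\end{bmatrix}$ contributes at step $k$ either a diagonal $1$ or one of the off-diagonal entries $h_k$ or $-zg_k$. A nonzero monomial in the $(2,2)$ entry therefore corresponds to a sequence of states $s_0=2,s_1,\dots,s_N=2$ in $\{1,2\}$, where a switch $2\to1$ at step $k$ contributes $h_k$ and a switch $1\to2$ contributes $-zg_k$.

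The combinatorial heart --- and the step I expect to demand the most care --- is reading off which index patterns actually survive. Because the path begins and ends in state $2$, the switches must strictly alternate, opening with a $2\to1$ (an $h$) and closing with a $1\to2$ (a $g$). Collecting the $h$-steps into $I=\{i_1<\dots<i_j\}$ and the $g$-steps into $J=\{j_1<\dots<j_j\}$, the alternation forces $i_1<j_1<i_2<j_2<\dots<i_j<j_j$, which is precisely the interlacing relation $I<J$ of \autoref{def:bigIndi}. Each such path contributes $(-z)^j h_I g_J$, so the coefficient of $(-z)^j$ in $p_N$ equals $\sum_{I<J}h_I g_J=M_j$, yielding $p_N=1-M_1z+\cdots+M_j(-z)^j+\cdots$. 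This is exactly the transfer-matrix computation of Corollary 2.7 in \cite{chang-szmigielski-m1CHlong}, so I would either cite it directly or reproduce the short alternation argument above. Combined with the time-invariance of $p_N$ established in the first paragraph, this shows that $M_1,\dots,M_{\lfloor N/2\rfloor}$ are constants of motion for \eqref{eq:pHP}, completing the proof.
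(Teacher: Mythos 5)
Your proposal is correct and follows essentially the same route as the paper, which simply cites Corollary 2.7 of \cite{chang-szmigielski-m1CHlong} for the identification $p_N=1-M_1z+\cdots+M_j(-z)^j+\cdots$ and then invokes $\dot p_N=0$ from \autoref{lem:t-evolution of qp}. Your transfer-matrix path/alternation argument is just an explicit rederivation of that cited corollary, and it is carried out correctly.
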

\begin{example}
Let us consider the case $N=4$.  
Then the constants of motion, written in terms of the complex variables $X_j$ (see \eqref{eq:Xj}), with positions $x_j$ satisfying $x_1<x_2<x_3<x_4$, are
\begin{align*}
M_1&=\abs{m_1 m_2}e^{X_1-X_2}+\abs{m_1 m_3}e^{X_1-X_3}+\abs{m_1m_4}e^{X_1-X_4}+\abs{m_2 m_3}e^{X_2-X_3}+\abs{m_2m_4}e^{X_2-X_4}+\\&\abs{m_3m_4}e^{X_3-X_4},\\
M_2&=\abs{m_1m_2m_3 m_4}e^{X_1-X_2+X_3-X_4}.
\end{align*}
\end{example}

We have the following, very preliminary, characterization of the spectrum.
\begin{lemma}\label{lem:spec_nonneg}
If all the angles $\omega_j$ in the parametrization given by \autoref{eq:Xj} satisfy 
\begin{equation}\label{eq:anglecondtion}
-\frac{\pi}{2N}\leq \omega_j \leq\frac{\pi}{2N},
\end{equation}
then the spectrum of the boundary value problem  \eqref{eq:dstring} is a finite subset of $$\big\{z| ~-\pi<\arg(z)<\pi\big\}, $$  
namely, there are no eigenvalues on the negative real axis. 
\end{lemma}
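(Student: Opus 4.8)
The plan is to exploit the explicit polynomial expansion of $p_N(z)$ supplied by \autoref{lem:constants} and to show that, for real $z<0$, the real part of $p_N(z)$ is bounded below by a positive constant; since a nonvanishing function has no zeros, this rules out eigenvalues on the negative real axis. Concretely, recalling $z=\lambda^2$ and that the spectrum is the (finite) zero set of the polynomial $p_N$, it suffices to prove $p_N(z)\neq 0$ for every real $z<0$, which together with the fact that $z=0$ is not an eigenvalue yields the claimed sector containment.

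First I would record that, by \autoref{lem:constants} together with the degree count in \autoref{cor:pq-degrees},
\[
p_N(z)=\sum_{j=0}^{\lfloor N/2\rfloor} M_j\,(-z)^j,\qquad M_0=1,\qquad M_j=\sum_{\substack{I,J\in\binom{[N]}{j}\\ I<J}} h_I g_J .
\]
Next, using \eqref{eq:ghX} and \eqref{eq:Xj}, I would factor each summand into modulus and phase: since $h_i=\abs{m_i}e^{X_i}$ and $g_l=\abs{m_l}e^{-X_l}$ with $X_j=x_j+\inum\omega_j$, one has
\[
h_I g_J=\rho_{IJ}\,\exp\big(\inum\,\phi_{IJ}\big),\qquad \phi_{IJ}=\sum_{i\in I}\omega_i-\sum_{l\in J}\omega_l,
\]
where $\rho_{IJ}=\prod_{i\in I}\abs{m_i}\prod_{l\in J}\abs{m_l}\exp\big(\sum_{i\in I}x_i-\sum_{l\in J}x_l\big)>0$ is a positive real number. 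The crucial step is the phase bound: by the triangle inequality, the hypothesis \eqref{eq:anglecondtion}, and the cardinality constraint $\abs I=\abs J=j\le\lfloor N/2\rfloor$,
\[
\abs{\phi_{IJ}}\le \sum_{i\in I}\abs{\omega_i}+\sum_{l\in J}\abs{\omega_l}\le 2j\cdot\tfrac{\pi}{2N}=\tfrac{j\pi}{N}\le\tfrac{\pi}{2}.
\]
Hence every term of every $M_j$ lies in the closed right half-plane, so $\re M_j\ge 0$ for all $j$, while $M_0=1$. For real $z<0$ we have $(-z)^j>0$, whence $\re p_N(z)=1+\sum_{j\ge 1}(-z)^j\,\re M_j\ge 1>0$, so $p_N(z)\neq 0$.

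The only substantive work is the phase estimate, and there the bound $\tfrac{\pi}{2N}$ in \eqref{eq:anglecondtion} is seen to be calibrated precisely so that the at-most-$N$ angle contributions to $\phi_{IJ}$ sum to at most $\pi/2$ in absolute value; a larger window would let some term of $M_j$ leave the right half-plane and spoil the sign of $\re p_N$. I do not expect a genuine obstacle: once $p_N$ is written via \autoref{lem:constants}, the containment of each monomial in the right half-plane together with the strict positivity furnished by the constant term $M_0=1$ closes the argument at once. It is worth emphasizing that the method controls only the negative real ray—where every $(-z)^j$ is simultaneously a positive real—so that extending it to exclude a full sector around that ray, or to locate the spectrum more precisely, would require genuinely finer information about the $M_j$, consistent with the deliberately preliminary nature of the statement.
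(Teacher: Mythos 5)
Your proposal is correct and follows essentially the same route as the paper: expand $p_N(z)=1+\sum_j M_j(-z)^j$ via \autoref{lem:constants}, show that the angle condition forces each coefficient $M_j$ to have nonnegative real part, and conclude $\re p_N(z)\geq 1>0$ on the negative real axis. The only difference is cosmetic — you argue directly rather than by contradiction, and you spell out the triangle-inequality phase estimate $\abs{\phi_{IJ}}\le j\pi/N\le\pi/2$ that the paper dismisses as ``straightforward to see.''
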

\begin{proof}
Suppose there exists a positive number $\zeta_0>0$ for which $-\zeta_0$ is an eigenvalue,
hence 
$$p_N(-\zeta_0)=0.$$
By Corollary (2.7) in \cite{chang-szmigielski-m1CHlong} we have
$$p_N(z)=1+\sum_{j=1}^{\lfloor\frac{N}{2}\rfloor}\Big(\sum_{\substack{I,J \in \binom{[N]}{j}\\ I<J}} h_I g_J
\, \Big)(-z)^j.  $$
Under condition \eqref{eq:anglecondtion}, 
and recalling the parametrization of $g_i$s and $h_j$s  (see \eqref{eq:ghX}), 
it is straightforward to see that the coefficients of $(-z)^j$ satisfy
$$
-\frac{\pi}{2}\leq \arg\Big(\sum_{\substack{I,J \in \binom{[k]}{j}\\ I<J}} h_I g_J\, \Big)\leq \frac \pi 2
$$
leading to 
$$\re\Big(\sum_{\substack{I,J \in \binom{[k]}{j}\\ I<J}} h_I g_J\, \Big)\geq0.$$
Therefore, we have 
$$\re\left(p_N(-\zeta_0)\right)=1+\sum_{j=1}^{\lfloor\frac{N}{2}\rfloor}\re\Big(\sum_{\substack{I,J \in \binom{[k]}{j}\\ I<J}} h_I g_J\, \Big)(\zeta_0)^j>0,$$
contradicting $p_N(-\zeta_0)=0$.
\end{proof}

With the additional assumptions on the angles $\omega_j$ in place, 
we can improve upon \eqref{eq:simpleW}.  
\begin{lemma} \label{lem:W} 
Let $W$ be the Weyl function \eqref{eq:defWeyl}. 
Suppose the spectrum of the boundary problem \eqref{eq:dstring} is simple, 
and $\omega_j$s satisfy condition \eqref{eq:anglecondtion}. 
Then $W(z)$ can be expressed as
\begin{equation}
W(z)=c+\sum_{j=1}^{\lfloor \frac N2 \rfloor} \frac{b_j}{\zeta_j-z}, \qquad  b_j\neq0, 
\end{equation}
where $c\neq0$ when $N$ is odd, and $c=0$ when $N$ is even.  
\end{lemma}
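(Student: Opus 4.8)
The plan is to analyze the rational function $W(z) = q_N(z)/p_N(z)$ using the degree information from \autoref{cor:pq-degrees} together with the positivity statement from \autoref{lem:spec_nonneg}. The two assertions to establish are that every residue $b_j$ is nonzero and that the constant term $c$ vanishes precisely when $N$ is even. I will treat these separately, since $b_j \neq 0$ reflects simplicity of the spectrum while the behavior of $c$ is essentially a statement about degrees at infinity.

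First I would dispose of the claim about $c$. Recall from \autoref{cor:pq-degrees} that $\deg q_N = \lfloor\tfrac{N-1}{2}\rfloor$ and $\deg p_N = \lfloor\tfrac{N}{2}\rfloor$. The constant $c$ in the partial-fraction expansion \eqref{eq:simpleW} is $\lim_{z\to\infty} W(z)$, so $c \neq 0$ exactly when $\deg q_N = \deg p_N$, and $c=0$ when $\deg q_N < \deg p_N$. When $N$ is even, $\lfloor\tfrac{N-1}{2}\rfloor = \tfrac{N}{2}-1 < \tfrac{N}{2} = \lfloor\tfrac{N}{2}\rfloor$, forcing $c=0$; when $N$ is odd, both floors equal $\tfrac{N-1}{2}$, so the degrees match and $c\neq 0$, provided the leading coefficient of $q_N$ is itself nonzero. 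To secure the latter I would identify this leading coefficient explicitly, either by the same inductive/interlacing combinatorics that produced \autoref{lem:constants} for $p_N$, or by tracking the top-degree term through the transition matrices \eqref{transition}; the positivity coming from condition \eqref{eq:anglecondtion} again guarantees it cannot vanish, since its argument is confined to $[-\tfrac{\pi}{2},\tfrac{\pi}{2}]$ by the same estimate used in \autoref{lem:spec_nonneg}.

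For the residues, the key point is that $b_j \neq 0$ is equivalent to $q_N(\zeta_j) \neq 0$: since the spectrum is simple, $p_N$ has a simple zero at each $\zeta_j$, and the residue of $W$ at $\zeta_j$ is $-q_N(\zeta_j)/p_N'(\zeta_j)$, which is nonzero iff $q_N(\zeta_j)\neq 0$. Thus I must show that $q_N$ and $p_N$ share no common root. The cleanest route is to observe that $q_N$ and $p_N$ cannot vanish simultaneously, which would follow from a nonvanishing Wronskian-type quantity for the difference system \eqref{eq:dstringIVP}. Concretely, using $\det T_k = 1 + |m_k|^2 z$ from \eqref{eq:det Tk}, the pair $(q_k,p_k)$ together with a second independent solution spans the solution space, and the determinant $\prod_{k=1}^N (1+|m_k|^2 z) \neq 0$ at any eigenvalue $\zeta_j$ (which by \autoref{lem:spec_nonneg} is never a negative real number, and is never $0$); this prevents the primary solution vector $(q_N,p_N)$ from being the zero vector, so $p_N(\zeta_j)=0$ forces $q_N(\zeta_j)\neq 0$.

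The main obstacle will be the residue argument: turning ``$(q_N,p_N)\neq(0,0)$'' into ``$q_N(\zeta_j)\neq 0$'' requires care, because the nonvanishing of the transition determinant only says the full $2\times 2$ fundamental solution is invertible, not that the specific first-column solution avoids a common zero. I expect the rigorous version to exploit the explicit initial data $q_0=0,\,p_0=1$ together with the interlacing-combinatorial formula for the coefficients of $q_N$ (paralleling \autoref{lem:constants}), showing that under \eqref{eq:anglecondtion} the value $q_N(\zeta_j)$ inherits a strict positivity or argument bound that rules out cancellation; the positivity machinery of \autoref{lem:spec_nonneg}, applied now to $q_N$ rather than $p_N$, is the natural tool, and adapting it to handle evaluation at the genuinely complex eigenvalues $\zeta_j$ (rather than at points on the negative real axis) is the delicate step.
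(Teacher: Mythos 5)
Your argument is essentially correct and follows the same two-pronged strategy as the paper: a degree count for the constant $c$, and a ``no common zero of $q_N$ and $p_N$'' argument for the residues. For the residues the paper unwinds the recursion \eqref{eq:dstringIVP} one step at a time, using the relations $-q_{N-1}(\zeta_0)=h_N p_{N-1}(\zeta_0)$ and $p_{N-1}(\zeta_0)=\zeta_0 g_N q_{N-1}(\zeta_0)$ together with the nonvanishing of $1+\zeta_0|m_N|^2$ (guaranteed by \autoref{lem:spec_nonneg}) to propagate a putative common zero down to $p_1(\zeta_0)=q_1(\zeta_0)=0$, contradicting $p_1=1$. Your version does this in one stroke: $(q_N(\zeta_j),p_N(\zeta_j))^{T}$ is the image of the fixed nonzero vector $(0,1)^{T}$ under $T_N(\zeta_j)\cdots T_1(\zeta_j)$, whose determinant $\prod_k(1+|m_k|^2\zeta_j)$ is nonzero because $\zeta_j$ is neither $0$ nor a negative real number; hence the vector cannot vanish, and $p_N(\zeta_j)=0$ forces $q_N(\zeta_j)\neq 0$. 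This is the same idea as the paper's, packaged globally rather than recursively, and it is already complete. Your closing paragraph therefore undersells your own proof: the worry that invertibility ``only says the full $2\times 2$ fundamental solution is invertible'' is unfounded, since the relevant object is not a column of a fundamental matrix but the image of a specific nonzero vector under an invertible matrix, which is automatically nonzero. The fallback you sketch --- adapting the positivity machinery of \autoref{lem:spec_nonneg} to control the argument of $q_N$ at genuinely complex eigenvalues --- is unnecessary and would in fact be considerably harder. One point where you are more careful than the paper: for $N$ odd, $c\neq 0$ does require the leading coefficient of $q_N$ to be nonzero, and your observation that this coefficient is a sum of interlacing products $h_I g_J$ whose arguments are confined to $[-\tfrac{\pi}{2},\tfrac{\pi}{2}]$ under \eqref{eq:anglecondtion} is the right way to justify the exact-degree statement of \autoref{cor:pq-degrees} on which the paper's one-line degree argument implicitly relies.
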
 
\begin{proof}
The claim about $c$ can be verified by examining the degree of $p_N$ and $q_N$.

To prove the nonzero property of $b_j$ we suppose 
that there exists some $b_j=0$, which means that  for some $\zeta_0$
$$p_N(\zeta_0)=q_N(\zeta_0)=0.$$
By the recursive relation \eqref{eq:dstringIVP} we have
$$-q_{N-1}(\zeta_0)=h_{N}p_{N-1}(\zeta_0),\qquad p_{N-1}(\zeta_0)=\zeta_0 g_Nq_{N-1}(\zeta_0),$$
leading to 
$$\left(1+\zeta_0|m_N|^2\right)p_{N-1}(\zeta_0)=0. $$
Furthermore, since $\zeta_0$ can not be negative by \autoref{lem:spec_nonneg}, we obtain
$p_{N-1}(\zeta_0)=0. $
Then the second relation above, taking into account that $0$ is not an eigenvalue, implies
$q_{N-1}(\zeta_0)=0$.  
By implementing the above argument and using \eqref{eq:dstringIVP} recursively we eventually get 
$$
p_{1}(\zeta_0)=q_{1}(\zeta_0)=0,
$$ 
thus contradicting $p_1=1$ obtained from the first iteration of 
\eqref{eq:dstringIVP}.  Therefore the proof is completed.
\end{proof}

We finish this section by commenting about the connection between the constant of motion $M_1$
and the two conserved Hamiltonians $H^{(0)}$ and $E^{(0)}$ 
(see \eqref{eq:H0} and \eqref{eq:E0}).

\begin{lemma}\label{lem:M1vH1}
The Hamiltonians $H^{(0)}$ and $E^{(0)}$ are related to the constant of motion $M_1$
by
\begin{equation} 
H^{(0)}=2\sum_{j=1}^N|m_j|^2+4\re(M_1), 
\qquad 
E^{(0)} =-4\im(M_1) , 
\end{equation} 
where each $|m_j|$ is itself a constant of motion. 
\end{lemma} 
\begin{proof}
From \autoref{lem:constants}, 
$$
M_1=\sum_{1\leq j<k\leq N}|m_j| |m_k|e^{X_j-X_k}.
$$
Then the result follows immediately from \autoref{lem:H0} and \autoref{lem:E0}.  
\end{proof} 

\section{Inverse Problems}\label{sec:IP}
\subsection{First inverse problem: spectral data $\Rightarrow (g_j, h_j)$ }\label{sec:fIP}
We will formulate the inverse problem in the case where the spectrum is 
given by a collection of distinct complex 
numbers $\zeta_j, \, 1\leq j \leq  {\lfloor \frac N2 \rfloor}$, none of which 
lies on the negative real axis (see \autoref{lem:spec_nonneg} for 
one scenario ensuring the validity of the latter condition). 
\begin{definition} \label{def:ISP}
Given a rational function 
\begin{equation} \label{eq:Wc}
W(z)=c+\sum_{j=1}^{\lfloor \frac N2 \rfloor} \frac{b_j}{\zeta_j-z},  
\end{equation}
and a collection of distinct positive numbers $|m_j|$,  
find complex constants $g_j, h_j$,  for $1\leq j\leq N$, such 
that $g_jh_j=|m_j|^2$ and 
also such that the solution of the initial value problem
\begin{equation*} 
\begin{gathered}
\begin{aligned}
     q_{k}-q_{k-1}&=h_kp_{k-1}, & 1\leq k\leq N, \\
     p_{k}-p_{k-1}&=-z g_kq_{k-1},& 1\leq k\leq N,\\
      q_0=0, &\quad  p_0=1, &  
  \end{aligned}
\end{gathered}
\end{equation*} 
satisfies 
\begin{equation*}
W(z)=\frac{q_N(z)}{p_N(z)}.  
\end{equation*} 
\end{definition}
\begin{remark} 
The non degeneracy condition that the positive 
numbers $\abs{m_j}$ be distinct will be eventually relaxed; 
the condition simplifies the derivation of the inverse formulas. 
\end{remark}  
First we give a brief summary of main ideas behind the solution 
of the inverse problem stated in \autoref{def:ISP}.  
The main tool is a certain {\bf interpolation problem} (see \cite{chang-szmigielski-m1CHlong} for details).  In short, let us rewrite \eqref{transition} in terms of the Weyl function $W$, iterating down \eqref{transition} $k$ times starting with the highest index $N$:
\begin{equation} \label{eq:Papprox1}
\begin{bmatrix} W(z) \\ 1 \end{bmatrix} =T_N(z)T_{N-1}(z)  \dots T_{N-k+1}(z)\begin{bmatrix} \frac{q_{N-k}(z)}{p_{N}(z)}\\ \frac{p_{N-k}(z)}{p_{N}(z)} \end{bmatrix}. 
\end{equation}
Then by using the transpose of the matrix of cofactors (\textit{adjugate}) of each $T_j(z)$, and denoting \newline $\begin{bmatrix} 1 &- h_j\\zg_j &1 \end{bmatrix}\stackrel{\textrm {def}}{=}C_{N-j+1}(z)$,  
one can express equation \eqref{eq:Papprox1}  as 
\begin{equation*}
C_{k}(z)\dots C_1(z)\begin{bmatrix} W(z) \\ 1 \end{bmatrix} =\det (T_N(z))\det (T_{N-1}(z))  \dots \det(T_{N-k+1}(z))\begin{bmatrix} \frac{q_{N-k}(z)}{p_{N}(z)}\\ \frac{p_{N-j}(z)}{p_{N}(z)} \end{bmatrix}.  
\end{equation*}
Recalling that $\det T_j(z)=1+z\abs{m_j}^2$ and using our assumption that none of the roots of $p_{N}(z)$ lies on the negative real axis, we conclude
\begin{equation} \label{eq:Papprox2}
\left( C_{k}(z)\dots C_1(z)\begin{bmatrix} W(z) \\ 1 \end{bmatrix}\right)\Big |_{z=-\frac{1}{\abs{m_{N-i+1}}^2}}=0,  \qquad \textrm{ for any } 1\leq i\leq k.  
\end{equation}
\autoref{eq:Papprox2} can be interpreted as 
an interpolation problem.   
\begin{theorem} [\cite{chang-szmigielski-m1CHlong}]\label{thm:CJ interp}
Let the matrix of products of $C$s in equation \eqref{eq:Papprox2} be denoted by \newline $\begin{bmatrix} a_k(z)&b_k(z)\\c_k(z)&d_k(z) \end{bmatrix}
~\stackrel{def}{=} \hat S_k(z)$.  Then the polynomials $a_k(z), b_k(z),c_k(z),d_k(z)$ 
solve the following interpolation problem: 
\begin{subequations}\label{eq:Papprox}
\begin{align}
&a_k(-\frac{1}{\abs{m_{N-i+1}}^2})W(-\frac{1}{\abs{m_{N-i+1}}^2})+b_k(-\frac{1}{\abs{m_{N-i+1}}^2})
=0, \qquad 1\leq i\leq k, \\
&\deg a_k=\big\lfloor \frac{k}{2} \big \rfloor, \qquad \deg b_k=\big \lfloor \frac{k-1}{2} \big\rfloor, \qquad a_k(0)=1, \\
\notag\\
&c_k(-\frac{1}{\abs{m_{N-i+1}}^2})W(-\frac{1}{\abs{m_{N-i+1}}^2})+d_k(-\frac{1}{\abs{m_{N-i+1}}^2})
=0, \qquad 1\leq i\leq k,  \\
&\deg c_k=\big \lfloor \frac{k+1}{2}\big \rfloor, \qquad \deg d_k=\big \lfloor \frac{k}{2}\big \rfloor, \qquad c_k(0)=0, \quad d_k(0)=1.   
\end{align}
\end{subequations} 
\end{theorem}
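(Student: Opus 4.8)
The plan is to separate the two halves of the claim, because one of them is essentially already in hand. The interpolation equations $a_k(-1/|m_{N-i+1}|^2)W(-1/|m_{N-i+1}|^2)+b_k(-1/|m_{N-i+1}|^2)=0$ (and the twin relation for $c_k,d_k$) are nothing more than the two rows of \eqref{eq:Papprox2} written out, once we record $\hat S_k(z)=\begin{bmatrix} a_k&b_k\\c_k&d_k\end{bmatrix}$ and read off the components of $\hat S_k(z)\begin{bmatrix} W(z)\\1\end{bmatrix}$ at each node $z=-1/|m_{N-i+1}|^2$. So the genuine content is the degree list and the normalizations $a_k(0)=1$, $c_k(0)=0$, $d_k(0)=1$, and for these I would work from the factorization $\hat S_k=C_k\cdots C_1$ through the one-step recursion $\hat S_k=C_k\hat S_{k-1}$ with $\hat S_0=I$, which at the level of entries reads
\begin{align*}
a_k&=a_{k-1}-h_{N-k+1}c_{k-1}, & c_k&=zg_{N-k+1}a_{k-1}+c_{k-1},\\
b_k&=b_{k-1}-h_{N-k+1}d_{k-1}, & d_k&=zg_{N-k+1}b_{k-1}+d_{k-1}.
\end{align*}

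The normalizations then drop out by evaluating at $z=0$: from $c_0=0$ and $c_k(0)=c_{k-1}(0)$ we get $c_k(0)=0$, after which $a_k(0)=a_{k-1}(0)$ and $d_k(0)=d_{k-1}(0)$ carry the base values $a_0(0)=d_0(0)=1$ forward. The degree \emph{upper} bounds $\deg a_k\le\lfloor k/2\rfloor$, $\deg b_k\le\lfloor(k-1)/2\rfloor$, $\deg c_k\le\lfloor(k+1)/2\rfloor$, $\deg d_k\le\lfloor k/2\rfloor$ are a routine induction on the same recursion, since a single passage through $C_k$ raises each entry's degree by at most the one unit introduced by the $zg_{N-k+1}$ factor. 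To promote these bounds to equalities I would use the determinant identity: by \eqref{eq:det Tk} each $\det C_{N-j+1}=\det T_j=1+z|m_j|^2$, so multiplicativity gives
\begin{equation*}
a_kd_k-b_kc_k=\det\hat S_k=\prod_{\ell=1}^{k}\bigl(1+z|m_{N-\ell+1}|^2\bigr),
\end{equation*}
which has degree exactly $k$ with nonzero leading coefficient. For even $k$ the bounds force $\deg(b_kc_k)\le k-1$, so $a_kd_k$ must carry the full degree $k$, pinning $\deg a_k=\deg d_k=k/2$; for odd $k$ the bounds force $\deg(a_kd_k)\le k-1$, so $b_kc_k$ must carry degree $k$, pinning $\deg b_k=(k-1)/2$ and $\deg c_k=(k+1)/2$.

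The main obstacle is the \emph{other} two degree equalities in each parity, on which the determinant is silent, and this is the one place where the present complex setting really departs from the real situation of \cite{chang-szmigielski-m1CHlong}. Writing $\hat S_k=\mathrm{adj}(T_N\cdots T_{N-k+1})$ (via $\mathrm{adj}(A)\,\mathrm{adj}(B)=\mathrm{adj}(BA)$ applied to the transitions \eqref{transition}), the entries $a_k,b_k,c_k,d_k$ are, up to sign, those of the partial product $T_N\cdots T_{N-k+1}$, whose top-degree coefficients are the explicit interlacing sums of products of $g_j$'s and $h_j$'s familiar from \autoref{cor:pq-degrees} and \autoref{lem:constants}. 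In the real case these sums are sign-definite and hence visibly nonzero, but here $g_j=|m_j|e^{-X_j}$ and $h_j=|m_j|e^{X_j}$ are complex and cancellation is a priori possible. I would rule this out by the same cone estimate used in \autoref{lem:spec_nonneg}: under the angle restriction \eqref{eq:anglecondtion} every summand has modulus a product of the $|m_j|$ and argument a sum of at most $N$ terms each bounded by $\pi/(2N)$ in absolute value, hence total argument in $[-\tfrac{\pi}{2},\tfrac{\pi}{2}]$ and strictly positive real part, so no leading coefficient can vanish. This fixes all four degrees, and together with the normalizations and the interpolation equations it shows $(a_k,b_k,c_k,d_k)$ solves \eqref{eq:Papprox}; the transcription of the real argument to the complex case, with the determinant identity and the cone bound standing in for the old sign-definiteness, is the only step demanding real care.
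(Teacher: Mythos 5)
The paper does not actually prove this theorem: it is imported verbatim from \cite{chang-szmigielski-m1CHlong}, with only the surrounding remark that passing from real to complex data ``has no bearing on the algebraic formulation.'' Your reconstruction is therefore necessarily an independent route, and it is essentially the right one: the interpolation identities are indeed just the two rows of \eqref{eq:Papprox2} read off against $\hat S_k$; the normalizations and the degree \emph{upper} bounds follow by induction on $\hat S_k=C_k\hat S_{k-1}$ exactly as you write; and the determinant identity $\det\hat S_k=\prod_{\ell=1}^{k}(1+z|m_{N-\ell+1}|^2)$, of exact degree $k$, correctly forces two of the four degree equalities in each parity. Up to that point the argument is complete and correct.

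Two caveats on the remaining two degree equalities. First, the theorem as stated carries no hypothesis on the phases $\omega_j$, so your appeal to the angle condition \eqref{eq:anglecondtion} imports an assumption that is not in the statement; without some such restriction the leading interlacing sums can genuinely vanish for special complex configurations, so the exact-degree claims should really be read as upper bounds (equivalently, as prescribing the ansatz for the interpolation problem, which is all the inverse formulas need --- note they are separately guarded by the hypothesis $\D_k^{(l,p)}\neq 0$). Second, even granting \eqref{eq:anglecondtion}, your cone estimate has a boundary gap: a total argument lying in the \emph{closed} interval $[-\tfrac{\pi}{2},\tfrac{\pi}{2}]$ only gives $\re(h_I g_J)\geq 0$, not $>0$, and terms sitting on the two boundary rays could a priori cancel. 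The repair is that for $|I|=|J|=j<N/2$ the total argument is bounded by $j\pi/N<\tfrac{\pi}{2}$ strictly, so every summand has strictly positive real part, while in the extremal case $|I|=|J|=N/2$ the interlacing condition admits exactly one pair $(I,J)$, so the sum is a single nonzero term. With that patch (or with the degree claims weakened to inequalities) your proof goes through.
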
 
\begin{remark} 
The appearance of the label $N-j+1$ in the above 
formulation is fully explained in \cite{chang-szmigielski-m1CHlong}; roughly, this way of counting is typical of 
the right, rather than left, initial value problem.  This effectively results in 
the \textit{counting} of  the masses from right to left rather than from left to right. 
\end{remark} 
Now we outline our strategy for solving the 
inverse problem given by \autoref{def:ISP}: 
\begin{enumerate} 
\item given $W$ and $\{\abs{m_j}\}$ we solve 
the interpolation problem of \autoref{thm:CJ interp} for polynomials 
$a_k, b_k, c_k, d_k$, or equivalently the matrix $\hat S_k(z)$ (see 
the theorem above for the definition); 
\item using the relation between $\hat S_k(z)$ and 
the transition matrices $T_j$s which depend on $h_j$s and $g_j$s,  
we establish how the coefficients in the polynomials $a_k, b_k, c_k, d_k$
are built out of $h_j$s and $g_j$s, leading to formulas 
expressing $h_j$s and $g_j$s as ratios of certain coefficients of 
$a_k, b_k, c_k, d_k$.  

\end{enumerate} 

The algebraic solution to the interpolation problem stated in \autoref{thm:CJ interp}
was essentially given in \cite{chang-szmigielski-m1CHlong} with one important \textit{caveat}: the problem is now complex 
since both the spectrum and the residues $b_j$ in \eqref{eq:Wc} are complex.  Luckily, even though this affects the global existence when time is switched on, 
it nevertheless has no bearing on the algebraic formulation.  

We begin the presentation of formulas by introducing some additional notation.  
Thus we denote: 
 $[i,j]=\{i,i+1,\cdots,j\}, \,\,  \binom{[1,K]}{k}=\{J=\{j_1,j_2,\cdots,j_k\}|j_1<\cdots <j_k, j_i\in [1,K]\}$ and $i'=N-i+1$.  Moreover, given two sets of 
 vectors $\mathbf{x}=(x_1, x_2,\dots, x_N)$, $\mathbf{y}=(y_1, y_2, \cdots, y_N)$ and two ordered multi-index sets $I, J$  we define 
\begin{align*}
  \mathbf{x}_J&=\prod_{j\in J}x_j, &\Delta_J(\mathbf{x})&=\prod_{i<j\in J}(x_j-x_i), \\
  \Delta_{I,J}(\mathbf{x};\mathbf{y})&=\prod_{i\in I}\prod_{j\in J}(x_i-y_j), 
  &\Gamma_{I,J}(\mathbf{x};\mathbf{y})&=\prod_{i\in I}\prod_{j\in J}(x_i+y_j),  
  \end{align*}
  along with the convention
\begin{align*}
&\Delta_\emptyset(\mathbf{x})=\Delta_{\{i\}}(\mathbf{x})=\Delta_{\emptyset,J}(\mathbf{x};\mathbf{y})=\Delta_{I,\emptyset}(\mathbf{x};\mathbf{y})=\Gamma_{\emptyset,J}(\mathbf{x};\mathbf{y})=\Gamma_{I,\emptyset}(\mathbf{x};\mathbf{y})=1,&\\
&\binom{[1,K]}{0}=1;\qquad\qquad \binom{[1,K]}{k}=0,\ \  k>K.&
\end{align*}

Building in an essential way on work \cite{chang-szmigielski-m1CHlong} we can now give an algebraic solution to the inverse problem 
stated in \autoref{def:ISP}, postponing more delicate issues like 
global existence to future studies.  
\begin{theorem}
Given the vector $\mathbf{e}=(e_1,e_2,\cdots, e_N), e_j=\frac{1}{|m_{j'}|^2}$ as well as the vector of residues of the Weyl function $\mathbf{b}=(b_1,b_2, 
\cdots, b_{\lfloor \frac N2 \rfloor})$, let 
\begin{align}\label{eq:D}
\D_k^{(l,p)}=\left\{
\begin{array}{ll}
 \Delta_{[1,k]}(\mathbf{e})
\sum\limits_{\substack{J \in \binom{[\lfloor \frac N2 \rfloor]}{j}}}  \frac{\Delta_J(\mathbf{\zeta})^2(\mathbf{\zeta}_J)^p\mathbf b_J}{\Gamma_{[1,k], J}(\mathbf{e}; \mathbf{\zeta})}, \qquad \text{if either $c=0$ or $p+l-1<k-l$};\\
 \Delta_{[1,k]}(\mathbf{e}) \cdot\Big(
\sum\limits_{\substack{J \in \binom{[\lfloor \frac N2 \rfloor]}{j}}}  \frac{\Delta_J(\mathbf{\zeta})^2(\mathbf{\zeta}_J)^p\mathbf b_J}{\Gamma_{[1,k], J}(\mathbf{e}; \mathbf{\zeta})} 
+c
\sum\limits_{\substack{J \in \binom{[\lfloor \frac N2 \rfloor]}{j-1}}}  \frac{\Delta_J(\mathbf{\zeta})^2(\mathbf{\zeta}_J)^p\mathbf b_J}{\Gamma_{[1,k], J}(\mathbf{e}; \mathbf{\zeta})} \Big),  \text{if $c\neq 0$ and $p+l-1=k-l$}.
\end{array}
\right.
\end{align}
Then, provided $\D_k^{(l,p)}\neq0$ for $1\leq k\leq N$,   
there exists a unique solution to the inverse problem specified in \autoref{def:ISP}: 
\begin{subequations}
\begin{align}
&&g_{k'}&=\frac{\D_k^{(\frac{k-1}{2},1)}\D_{k-1}^{(\frac{k-1}{2},1)}}
{\mathbf{e}_{[1,k]}\D_k^{(\frac{k+1}{2},0)}\D_{k-1}^{(\frac{k-1}{2},0)}},   &\text{ if k  is odd},  \label{eq:detinversegodd}\\
&&g_{k'}&= \frac{\D_k^{(\frac{k}{2},1)}\D_{k-1}^{(\frac k2 -1,1)}}
{\mathbf{e}_{[1,k]}\D_k^{(\frac k2,0)}\D_{k-1}^{(\frac{k}{2},0)}}, &\text{ if k is even}. \label{eq:detinversegeven}
\end{align}
\end{subequations}
Likewise, 
\begin{subequations}
\begin{align}
h_{k'}&=\frac
{\mathbf{e}_{[1,k-1]}\D_k^{(\frac{k+1}{2},0)}\D_{k-1}^{(\frac{k-1}{2},0)}}{\D_k^{(\frac{k-1}{2},1)}\D_{k-1}^{(\frac{k-1}{2},1)}},   &\text{ if k is odd},  \label{eq:detinversehodd}\\
h_{k'}&= \frac
{\mathbf{e}_{[1,k-1]}\D_k^{(\frac k2,0)}\D_{k-1}^{(\frac{k}{2},0)}}{\D_k^{(\frac{k}{2},1)}\D_{k-1}^{(\frac k2 -1,1)}}, &\text{ if k is even}. \label{eq:detinverseheven}
\end{align}
\end{subequations}
\end{theorem}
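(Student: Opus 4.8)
The plan is to carry out the two-step strategy described just above: first solve the interpolation problem of \autoref{thm:CJ interp} explicitly for the entries of the matrix $\hat S_k(z)$, and then peel off the transition factors to read $g_{k'}$ and $h_{k'}$ off the resulting coefficients. The structural fact that makes this possible is that $\hat S_k(z)=C_k(z)\cdots C_1(z)$, where each $C_i(z)$ is the adjugate of a transition matrix $T_{N-i+1}(z)$ from \eqref{transition}; hence $\hat S_k$ is the adjugate of $T_N(z)\cdots T_{N-k+1}(z)$ and depends only on the $k$ rightmost pairs $(g_j,h_j)$ with $N-k+1\le j\le N$. This is exactly why the formulas are indexed by the reversed label $k'=N-k+1$, consistent with the right-to-left counting of masses noted in the remark following \autoref{thm:CJ interp}.

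First I would import the algebraic solution of the interpolation problem from \cite{chang-szmigielski-m1CHlong}. The conditions \eqref{eq:Papprox} are linear in the unknown coefficients of the polynomial pairs $(a_k,b_k)$ and $(c_k,d_k)$, and together with the degree bounds and the normalizations $a_k(0)=1$, $c_k(0)=0$, $d_k(0)=1$ they pin these polynomials down uniquely. Solving the resulting linear systems by Cramer's rule produces ratios of Cauchy--Vandermonde-type determinants in the interpolation nodes $\mathbf e$ and the spectral points $\zeta_j$; expanding them by Cauchy--Binet and applying the standard Cauchy and Vandermonde determinant identities collapses these into the sums over $J\in\binom{[\lfloor N/2\rfloor]}{j}$ with weights $\Delta_J(\zeta)^2(\zeta_J)^p\,\mathbf b_J/\Gamma_{[1,k],J}(\mathbf e;\zeta)$ that appear in \eqref{eq:D}. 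The one point needing care is that, unlike in \cite{chang-szmigielski-m1CHlong}, the nodes $\zeta_j$ and the residues $b_j$ are now complex; but the linear algebra and the determinantal identities are purely formal and remain valid verbatim over $\C$, so the closed forms transfer unchanged. The two branches of \eqref{eq:D} simply record whether the top coefficient picks up the extra contribution of the constant $c$ (present only for odd $N$), which is the alternative $p+l-1=k-l$.

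With the entries of $\hat S_k$ in closed form, I would extract the masses from the one-step recursion $\hat S_k(z)=C_k(z)\,\hat S_{k-1}(z)$, which in components reads
\begin{align*}
a_k&=a_{k-1}-h_{k'}c_{k-1}, & b_k&=b_{k-1}-h_{k'}d_{k-1}, \\
c_k&=z\,g_{k'}a_{k-1}+c_{k-1}, & d_k&=z\,g_{k'}b_{k-1}+d_{k-1}.
\end{align*}
Because the degrees $\lfloor k/2\rfloor$ and $\lfloor(k-1)/2\rfloor$ increase only on alternate steps, exactly one summand on each right-hand side carries the top degree, so matching leading coefficients isolates $g_{k'}$ and $h_{k'}$ as clean ratios: for odd $k$ these arise from the pairs $(c_k,a_{k-1})$ and $(b_k,d_{k-1})$, and for even $k$ from $(d_k,b_{k-1})$ and $(a_k,c_{k-1})$, respectively. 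This parity dependence is precisely the origin of the separate odd/even cases in \eqref{eq:detinversegodd}--\eqref{eq:detinverseheven}. Since every leading coefficient in question is expressed, via Cramer's rule and the normalization, through the determinantal quantities $\D_k^{(l,p)}$ of \eqref{eq:D}, the two ratios assemble into the stated products-over-products of $\D$'s. As an internal check, the reciprocal structure of the $g$- and $h$-formulas yields $g_{k'}h_{k'}=\mathbf e_{[1,k-1]}/\mathbf e_{[1,k]}=1/e_k=\abs{m_{k'}}^2$, in agreement with \eqref{eq:ghX}.

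The step I expect to be the main obstacle is the closed-form evaluation within the interpolation step: converting the Cramer determinants into the explicit $\Delta$--$\Gamma$ sums of \eqref{eq:D} requires a careful Cauchy--Binet bookkeeping together with the Cauchy--Vandermonde evaluation, and it is here that the hypothesis $\D_k^{(l,p)}\neq0$ is used, guaranteeing that the normalizing minors are nonzero and that the factorization $\hat S_k=C_k\hat S_{k-1}$ can be inverted at every level. Uniqueness then follows because the interpolation data fix $\hat S_k$ and the peeling recursion leaves no freedom in $(g_{k'},h_{k'})$; existence follows by reversing the derivation of \eqref{eq:Papprox2}, that is, by checking that the reconstructed factors $C_k$ multiply back to the adjugate of $T_N\cdots T_1$ and therefore reproduce $W=q_N/p_N$ for the prescribed $\abs{m_j}$, which is the inverse problem of \autoref{def:ISP}.
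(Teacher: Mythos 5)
Your proposal is correct and follows essentially the same route as the paper: the paper itself gives no detailed proof but explicitly reduces the theorem to the two-step strategy you describe --- importing the algebraic (Cauchy--Vandermonde determinantal) solution of the interpolation problem of \autoref{thm:CJ interp} from \cite{chang-szmigielski-m1CHlong}, noting that the complexification of the spectrum and residues does not affect the formal linear algebra, and then recovering $g_{k'},h_{k'}$ from the one-step factorization $\hat S_k=C_k\hat S_{k-1}$ by matching leading coefficients, with the parity of $k$ producing the odd/even split. Your consistency check $g_{k'}h_{k'}=1/e_k=\abs{m_{k'}}^2$ is a nice confirmation that is consistent with \eqref{eq:ghX}.
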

\subsection{Second inverse problem: $(g_j, h_j) \Rightarrow X_j$}\label{sec:sIP}
Finally, the relations  (see equation \eqref{eq:ghX})
$$h_j=\abs{m_j}e^{X_j},\quad g_j=\abs{m_j}e^{-X_j}, \quad X_j=x_j+i\omega_j$$ 
imply 
$$X_j=\ln \frac{h_j}{\abs{m_j}}=\ln \frac{\abs{m_j}}{g_j}.$$
Hence we arrive at the inverse formulae 
relating the spectral data and the positions and the momenta of the peakons.  

\begin{theorem} \label{thm:inversex}
Let $W$, given by \autoref{def:ISP}, be the Weyl function for the 
boundary value problem \autoref{eq:dstring} with the  associated spectral data $\{\zeta_j,b_j, c\}$.  
Then the positions $x_j$ and the phases $\omega_j$ (of peakons) in the discrete measure 
$m=2\sum_{j=1}^N \abs{m_j}e^{i\omega_j} \delta_{x_j} $ can be expressed in 
terms of the spectral data as: 

\begin{subequations}
\begin{align}
&&x_{k'}&=\ln \frac
{\mathbf{e}_{[1,k-1]}|\D_k^{(\frac{k+1}{2},0)}||\D_{k-1}^{(\frac{k-1}{2},0)}|}{|m_{k'}||\D_k^{(\frac{k-1}{2},1)}||\D_{k-1}^{(\frac{k-1}{2},1)}|},   &\text{ if k  is odd}, \label{eq:detinversexodd}\\
&&x_{k'}&= \ln \frac
{\mathbf{e}_{[1,k-1]}|\D_k^{(\frac k2,0)}||\D_{k-1}^{(\frac{k}{2},0)}|}{|m_{k'}||\D_k^{(\frac{k}{2},1)}||\D_{k-1}^{(\frac k2 -1,1)}|}, &\text{ if k is even}, \label{eq:detinversexeven}\\
&&e^{i\omega_{k'}}&=\frac
{\D_k^{(\frac{k+1}{2},0)}\D_{k-1}^{(\frac{k-1}{2},0)}}{\D_k^{(\frac{k-1}{2},1)}\D_{k-1}^{(\frac{k-1}{2},1)}}\frac{|\D_k^{(\frac{k-1}{2},1)}||\D_{k-1}^{(\frac{k-1}{2},1)}|}{|\D_k^{(\frac{k+1}{2},0)}||\D_{k-1}^{(\frac{k-1}{2},0)}|},  &\text{ if k  is odd}, \label{eq:detinversemodd}\\
&&e^{i\omega_{k'}}&=\frac
{\D_k^{(\frac k2,0)}\D_{k-1}^{(\frac{k}{2},0)}}{\D_k^{(\frac{k}{2},1)}\D_{k-1}^{(\frac k2 -1,1)}}\frac{|\D_k^{(\frac{k}{2},1)}||\D_{k-1}^{(\frac k2 -1,1)}|}{|\D_k^{(\frac k2,0)}||\D_{k-1}^{(\frac{k}{2},0)}|}|m_{k'}|,  &\text{ if k  is even}, \label{eq:detinversemeven}
\end{align}
\end{subequations}
with $\D_k^{(l,p)}$ defined in \eqref{eq:D}, $k'=N-k+1, \, 1\leq k\leq N$ 
and the convention that $\D_0^{l,p}=1$.  
\end{theorem}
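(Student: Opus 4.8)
The plan is to reduce everything to the closed-form inverse formulas for $g_{k'}$ and $h_{k'}$ established in the preceding theorem, and then merely separate a complex logarithm into its real and imaginary parts. The starting point is the pair of relations recorded just above the statement (see \eqref{eq:ghX}), namely $h_j=\abs{m_j}e^{X_j}$ and $g_j=\abs{m_j}e^{-X_j}$ with $X_j=x_j+\inum\omega_j$. Taking the modulus of the first gives $\abs{h_j}=\abs{m_j}e^{x_j}$, whence $x_j=\ln(\abs{h_j}/\abs{m_j})$, while the argument of $h_j$ equals $\omega_j$, so $e^{\inum\omega_j}=h_j/\abs{h_j}$. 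Equivalently one may read these off from $g_j$, using $x_j=\ln(\abs{m_j}/\abs{g_j})$ and $e^{\inum\omega_j}=\abs{g_j}/g_j$; the two routes are consistent because $g_jh_j=\abs{m_j}^2$, and this identity is what I would use to pass between the two branches when convenient.

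With these elementary identities in hand, I would substitute the expression for $h_{k'}$ (resp.\ $g_{k'}$) supplied by the preceding theorem, keeping the cases $k$ odd and $k$ even separate exactly as they are separated there. The crucial bookkeeping observation is that in those formulas the only genuinely complex quantities are the determinant-like sums $\D_k^{(l,p)}$, whereas the factors $\mathbf{e}_{[1,k-1]}$, $\mathbf{e}_{[1,k]}$ and $\abs{m_{k'}}$ are all real and positive. Consequently, taking the modulus of $h_{k'}$ replaces each $\D_k^{(l,p)}$ by $\abs{\D_k^{(l,p)}}$ and leaves the positive prefactors untouched, so that $x_{k'}=\ln(\abs{h_{k'}}/\abs{m_{k'}})$ comes out in precisely the form \eqref{eq:detinversexodd}--\eqref{eq:detinversexeven}.

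For the phases I would form $e^{\inum\omega_{k'}}=h_{k'}/\abs{h_{k'}}$ (or $\abs{g_{k'}}/g_{k'}$) and observe that every real positive prefactor cancels between numerator and denominator, leaving the ratio of the complex $\D_k^{(l,p)}$ appearing in $h_{k'}$ multiplied by the reciprocal ratio of their moduli, manifestly of modulus one. This yields the phase formulas \eqref{eq:detinversemodd}--\eqref{eq:detinversemeven}. The degenerate bookkeeping at the top index — the value of $c$, and whether $N$ is odd or even — is already governed by the $c=0$ versus $c\neq0$ dichotomy built into the definition \eqref{eq:D} together with the convention $\D_0^{(l,p)}=1$, so no separate argument is required there.

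I do not anticipate any genuine obstacle: all the analytic content — solvability of the interpolation problem of \autoref{thm:CJ interp} and the non-vanishing hypotheses on the relevant $\D_k^{(l,p)}$ — has already been absorbed into the preceding theorem, and what remains is purely the algebra of separating $\ln(h_{k'}/\abs{m_{k'}})$ into real and imaginary parts. The one place demanding care is the matching of the floor-function indices between the odd and even branches, together with the correct pairing of each modulus appearing in the denominator of $x_{k'}$ with the corresponding phase factor in $e^{\inum\omega_{k'}}$; this is routine but must be tracked carefully to land exactly on the stated expressions.
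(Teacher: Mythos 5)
Your proposal is correct and is essentially identical to the paper's own argument: the paper derives \autoref{thm:inversex} directly from the preceding inverse formulas for $h_{k'}$ and $g_{k'}$ via $X_j=\ln(h_j/\abs{m_j})=\ln(\abs{m_j}/g_j)$, splitting into modulus and argument exactly as you do. The only discrepancy is the stray factor $\abs{m_{k'}}$ in \eqref{eq:detinversemeven}, which cannot be present since $e^{\inum\omega_{k'}}$ is unimodular; your computation (which omits it) is the correct one, and that factor appears to be a typo in the stated formula.
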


\begin{example}[1-peakon solution] \label{ex:1peakon}
$$X_1=\ln \frac{h_1}{\abs{m_1}}, $$
where
  \[
  h_1=c.
  \]
\end{example}
This case does not require the inverse spectral machinery.

\begin{example}[2-peakon solution]\label{ex:2peakon}
$$X_j=\ln \frac{h_j}{\abs{m_j}},\qquad j=1,2,$$
where
  \begin{align}\label{eq:inverse2peakons}
     h_1=\frac{b_1}{\zeta_1(1+\zeta_1|m_2|^2)},\ \ \
     h_2=\frac{b_1|m_2|^2}{1+\zeta_1|m_2|^2}, \ \ \ b_1(t)=b_1(0)e^{\frac{2t}  {\zeta_1}}.  
       \end{align}
\end{example}
Observe that 
$X_2-X_1=\ln \abs{m_1} \abs{m_2}\zeta_1$ so 
both the distance between the peakons and their relative phases are 
constant in time (see \autoref{fig:hp2peakon1}).  
Note, however, that to respect the ordering 
$x_1<x_2$, 
$$\frac{1}{\abs{\zeta_1}}< \abs{m_1}\abs{m_2}$$ 
must hold.  
\begin{figure}[ht!]
  \centering
  \resizebox{1\textwidth}{!}{
 \includegraphics{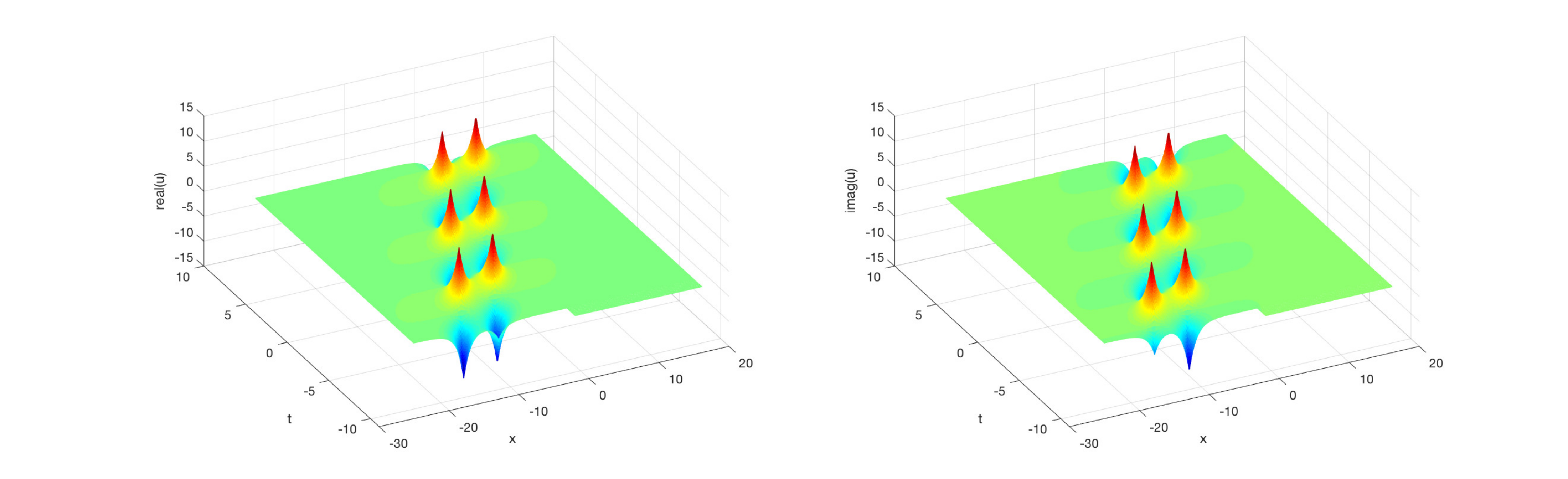}
  }
   \caption{2-peakon solution; $b_1(0)=2+i,\ \zeta_1=1+i,\ |m_1|=10,\ |m_2|=10$. Peakons form a bound state. }
   \label{fig:hp2peakon1}
\end{figure}

This inequality can also be arranged to hold in particular if $\zeta_1$ is 
purely imaginary.  Thus there exist two-peakon 
breather solutions (see \autoref{fig:hp2peakon1_periodic}). 
 
\begin{figure}[ht!]
  \centering
  \resizebox{1\textwidth}{!}{
  \includegraphics{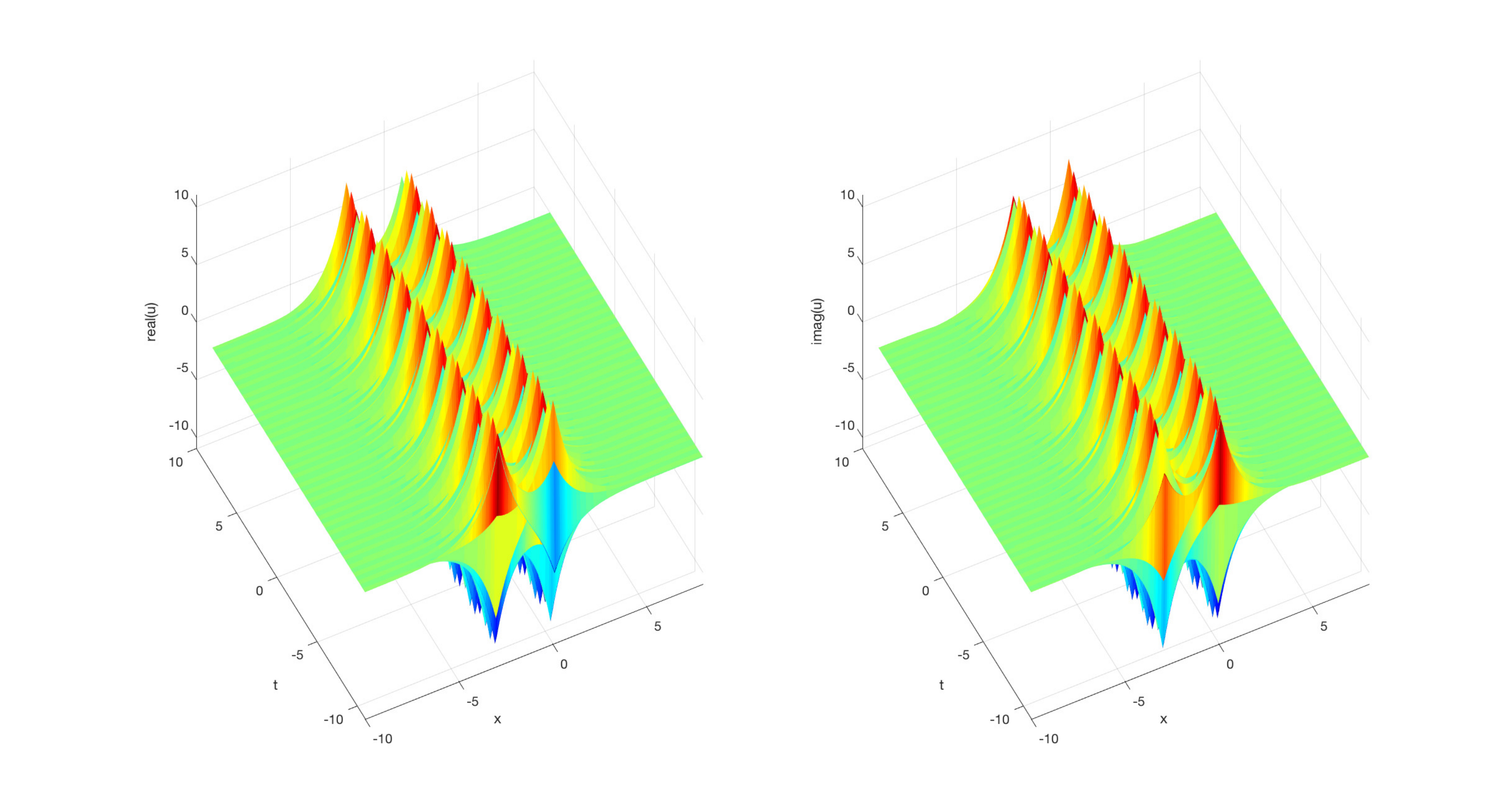}
  }
\caption{Periodic 2-peakon breather; $ b_1(0)=2+i,\ \zeta_1=0.2i,\ |m_1|=10,\ |m_2|=10$. }
   \label{fig:hp2peakon1_periodic}
\end{figure}

\begin{example}[3-peakon solution]\label{ex:3peakon}
$$X_j=\ln \frac{h_j}{\abs{m_j}}, \qquad b_1(t)=b_1(0)e^{\frac{2t}{\zeta_1}},   \qquad  j=1,2,3,$$
where
\begin{subequations}
  \begin{align}\label{eq:inverse3peakons}
     &h_1=\frac{b_1c}{\zeta_1\left(b_1\zeta_1|m_2|^2|m_3|^2+c(1+\zeta_1|m_2|^2)(1+\zeta_1|m_3|^2)\right)},&\\
     &h_2=\frac{b_1|m_2|^2}{b_1\zeta_1|m_2|^2|m_3|^2+c(1+\zeta_1|m_2|^2)(1+\zeta_1|m_3|^2)}\left(\frac{b_1|m_3|^2}{1+\zeta_1|m_3|^2}+c\right),&\\
     &h_3=\frac{b_1|m_3|^2}{1+\zeta_1|m_3|^2}+c.
  \end{align}
  \end{subequations} 
The formulas \eqref{eq:inverse3peakons} are local formulas, nevertheless we can prove, under appropriate conditions on the initial data, that there always exists a 
global (i.e. valid for arbitrary time $t\in (t_0, +\infty)$)  $3$-peakon solution, originating at some initial time $t_0$. 
\begin{theorem} \label{thm:3peakoneers} 
Suppose that  $$\re(\zeta_1)>0, \qquad \frac{1}{\abs{\zeta_1}}<\abs{m_2}\abs{m_3}. $$ 
Then there always exists a choice of $b_1(0)$ for which 
the solutions $X_j, j=1, 2,3$ are global.  
\end{theorem}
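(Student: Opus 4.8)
The plan is to exploit the fact that in the explicit formulas \eqref{eq:inverse3peakons} the only time-dependent spectral datum is the residue $b_1(t)=b_1(0)e^{2t/\zeta_1}$, while $c$ and $\zeta_1$ remain fixed. Since $\re(\zeta_1)>0$ we have $\re(1/\zeta_1)=\re(\zeta_1)/\abs{\zeta_1}^2>0$, so $\abs{b_1(t)}=\abs{b_1(0)}e^{2t\re(1/\zeta_1)}$ increases monotonically to $+\infty$ as $t\to+\infty$, tracing an outward spiral in the complex plane. The strategy is therefore to show that once $\abs{b_1}$ is large enough --- uniformly in $\arg(b_1)$ --- the triple $(h_1,h_2,h_3)$ is well defined (all denominators nonzero) and yields an admissible ordering $x_1<x_2<x_3$; the monotone growth of $\abs{b_1(t)}$ will then force these properties to persist for all later times.

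First I would record the large-$\abs{b_1}$ asymptotics of \eqref{eq:inverse3peakons}. The common denominator
\begin{equation*}
D=b_1\zeta_1\abs{m_2}^2\abs{m_3}^2+c(1+\zeta_1\abs{m_2}^2)(1+\zeta_1\abs{m_3}^2)
\end{equation*}
grows linearly in $b_1$, and $\re(\zeta_1)>0$ guarantees $1+\zeta_1\abs{m_j}^2\neq0$. A direct expansion then gives, with errors $O(1/\abs{b_1})$ uniform in $\arg(b_1)$,
\begin{equation*}
h_1\longrightarrow \frac{c}{\zeta_1^2\abs{m_2}^2\abs{m_3}^2},\qquad h_2\sim \frac{b_1}{\zeta_1(1+\zeta_1\abs{m_3}^2)},\qquad h_3\sim \frac{b_1\abs{m_3}^2}{1+\zeta_1\abs{m_3}^2}.
\end{equation*}
Since $c\neq0$ for $N=3$ (indeed $\deg q_3=\deg p_3=1$ by \autoref{cor:pq-degrees}, whence $c=\lim_{z\to\infty}q_3/p_3\neq0$), the limit of $h_1$ is a nonzero constant, so $x_1=\ln\abs{h_1}-\ln\abs{m_1}$ stays bounded, whereas $x_2$ and $x_3$ both behave like $\ln\abs{b_1}\to+\infty$. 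In particular $x_2-x_1\to+\infty$, which secures $x_1<x_2$ for large $\abs{b_1}$.

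The delicate inequality is $x_2<x_3$. From the asymptotics $\abs{h_3}/\abs{h_2}\to\abs{\zeta_1}\abs{m_3}^2$, so that
\begin{equation*}
x_3-x_2=\ln\frac{\abs{h_3}\abs{m_2}}{\abs{h_2}\abs{m_3}}\longrightarrow \ln\bigl(\abs{\zeta_1}\abs{m_2}\abs{m_3}\bigr).
\end{equation*}
Here the hypothesis $1/\abs{\zeta_1}<\abs{m_2}\abs{m_3}$ enters decisively: it says precisely that $\abs{\zeta_1}\abs{m_2}\abs{m_3}>1$, so the limiting gap is strictly positive and $x_2<x_3$ holds for all sufficiently large $\abs{b_1}$, again uniformly in $\arg(b_1)$. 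Finally, the only genuine singularity of the formulas is the zero of $D$, located at the single point $b_1^{\ast}=-c(1+\zeta_1\abs{m_2}^2)(1+\zeta_1\abs{m_3}^2)/(\zeta_1\abs{m_2}^2\abs{m_3}^2)$ of fixed modulus; hence $D\neq0$ whenever $\abs{b_1}>\abs{b_1^{\ast}}$.

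Combining these observations, I would fix a radius $R$ beyond which all three requirements ($D\neq0$, $x_1<x_2$, and $x_2<x_3$) hold simultaneously for every $b_1$ with $\abs{b_1}>R$. One then chooses $b_1(0)$ with $\abs{b_1(0)}$ large enough that $\abs{b_1(t_0)}>R$ at the prescribed initial time $t_0$; because $\abs{b_1(t)}$ is strictly increasing for $t>t_0$, the trajectory stays inside $\{\abs{b_1}>R\}$ and the solution $X_j(t)$, $j=1,2,3$, remains admissible for every $t\in(t_0,+\infty)$, i.e. is global. I expect the main obstacle to be exactly this uniformity in $\arg(b_1)$: since the spiral rotates as it expands, one must verify that none of the three conditions can be transiently violated while the phase of $b_1$ sweeps around, which is why the remainder estimates in the asymptotic expansions must be controlled independently of $\arg(b_1)$ rather than merely along fixed rays.
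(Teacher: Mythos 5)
Your proposal is correct and follows essentially the same route as the paper: exploit the monotone growth of $\abs{b_1(t)}$ forced by $\re(\zeta_1)>0$, extract the large-$\abs{b_1}$ asymptotics of $h_1,h_2,h_3$ from \eqref{eq:inverse3peakons} (constant, linear, linear), observe that the hypothesis $1/\abs{\zeta_1}<\abs{m_2}\abs{m_3}$ makes the limiting gap $x_3-x_2\to\ln(\abs{\zeta_1}\abs{m_2}\abs{m_3})$ strictly positive, and then shift the starting point of $b_1$ so the ordering and nonvanishing of denominators persist for all later times. Your explicit control of the remainders uniformly in $\arg(b_1)$ (working with a radius $R$ in the $b_1$-plane rather than the paper's ``by continuity in $t$'' step) is a minor tightening of the same argument, not a different one.
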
 
\begin{proof} 
By construction if all $h_j, j=1,2,3$ are well defined 
and the ordering condition $x_1<x_2<x_3$ holds then 
$X_1, X_2,X_3$ satisfy \eqref{eq:pHPcplx}.  Let us 
consider $b_1(t)=b_1(0) e^{\frac{2t}{\zeta_1}}$ with fixed, but 
otherwise arbitrary, $b_1(0)$.  
Since $\re(\zeta_1)>0$ the modulus of $b_1(t)$ can be made 
arbitrary large by choosing $t$ large enough.  Thus for 
$t$ large enough none of the denominators 
in $h_1, h_2$ can become $0$ while the denominator of $h_3$ 
is never $0$ in view of the assumption on $\zeta_1$.  
We observe that the ordering condition $x_1< x_2< x_3$ can be 
stated 
\begin{equation}\label{eq:hordering}
\abs{\frac{h_1}{m_1}}<\abs{\frac{h_2}{m_2}}<\abs{\frac{h_3}{m_3}}. 
\end{equation}
Let us now analyze the ordering condition in the region 
$t\rightarrow \infty$.  We have 
\begin{align} \label{eq:asympt hs}
&\abs{\frac{h_1}{m_1}}\approx \frac{\abs{c}}{\abs{\zeta_1}^2\abs{m_1}\abs{m_2}^2\abs{m_3}^2}, \\
&\abs{\frac{h_2}{m_2}}\approx \frac{\abs{b_1}}{\abs{\zeta_1}\abs{m_2}\abs{(1+\zeta_1\abs{m_3}^2)}}, \\
&\abs{\frac{h_3}{m_3}}\approx\frac{\abs{b_1}\abs{m_3}}{\abs{(1+\zeta_1\abs{m_3}^2)}}, 
\end{align} 
and in that region, enforcing \eqref{eq:asympt hs}, one is led to 
\begin{align} 
\frac{\abs{c(1+\zeta_1 \abs{m_2}^2)}}{\abs{\zeta_1}\abs{m_1}\abs{m_2} \abs{m_3}^2}< \abs{b_1}, \label{eq:bconstraint}\\
\frac{1}{\abs{\zeta_1}}<\abs{m_2}\abs{m_3} \label{eq:zetaconstraint}, 
\end{align}
the first holding in the asymptotic region without any further assumptions, the second holding 
in view of the assumption on $\zeta_1$.  However, 
in view of continuity in $t$ we can extend the asymptotic inequalities to 
a region $[t_0,\infty )$, for some $t_0\geq 0$, without violating 
the inequalities \eqref{eq:hordering}.  Now it suffices to choose 
a new $\tilde b_1(0)=b_1(0) e^{\frac{2t_0}{\zeta_1}}$ and construct 
$h_1, h_2, h_3$ using the inverse formulas \eqref{eq:inverse3peakons}. 
The resulting $x_1,x_2, x_3$ will by construction satisfy the ordering condition for 
arbitrary $t>0$.  
\end{proof} 
\begin{figure}[ht!]
  \centering
  \resizebox{1.1\textwidth}{!}{
  \includegraphics{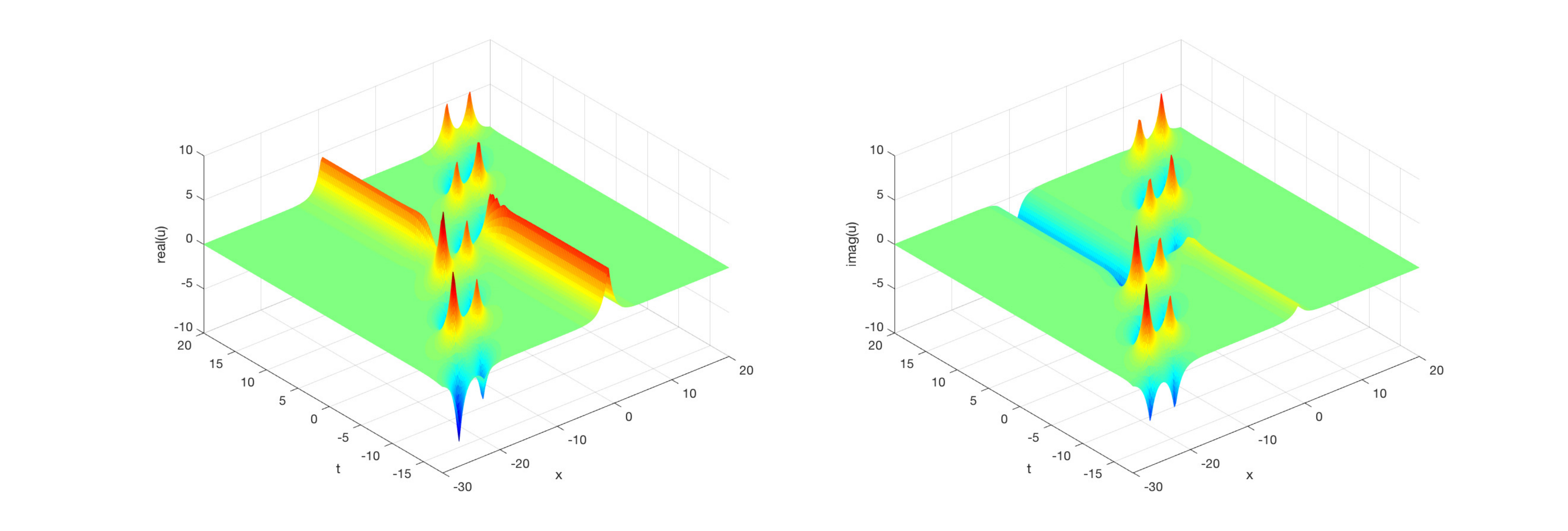}
  }
   \caption{3-peakon solution; $ b_1(0)=1+i,\ c=2+0.5i,\ \zeta_1=1.5+i,\ |m_1|=8,\ |m_2|=5,\ |m_3|=6$}
   \label{fig_hp3peakon1}
\end{figure}

Once the existence of global solutions is established the asymptotic 
behaviour follows from explicit formulas \eqref{eq:inverse3peakons}.  
In particular we see that, asymptotically, peakons pair up as illustrated 
by \autoref{fig_hp3peakon1} which shows the 3 dimensional 
evolution of profiles $\re (u), \im (u)$ while \autoref{fig:HP3overall} 
illustrates the evolution of $\re (u)$ and $\im (u)$ relative to the graph of 
$\abs{u}$.  The interaction between peakons is 
best captured through the graphs of their trajectories (see \autoref{fig:HP3positions}).   
\begin{corollary} 
Suppose that $b_1(0)$ and $\zeta_1$ satisfy conditions 
of \autoref{thm:3peakoneers}.  
Then the first peakon stops in the asymptotic region, while the second and third 
form a bound pair moving with speed $\frac{2\re \zeta_1}{\abs{\zeta_1}^2}$. 
\end{corollary}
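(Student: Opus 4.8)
The plan is to read off the asymptotics directly from the explicit closed form \eqref{eq:inverse3peakons}, using the two relations $x_j = \re X_j = \ln\abs{h_j/m_j}$ and $b_1(t)=b_1(0)e^{2t/\zeta_1}$. Since $\re(1/\zeta_1)=\re\zeta_1/\abs{\zeta_1}^2$, the modulus satisfies $\abs{b_1(t)} = \abs{b_1(0)}\exp\big(\tfrac{2\re\zeta_1}{\abs{\zeta_1}^2}\,t\big)$, and because $\re\zeta_1>0$ by hypothesis we have $\abs{b_1(t)}\to\infty$ as $t\to\infty$. This single growing exponential drives every statement in the corollary, so the whole proof is a controlled limit of the formulas \eqref{eq:inverse3peakons}.

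First I would track $h_1$. In its denominator the term proportional to $b_1$ dominates as $t\to\infty$, so $h_1\to c/(\zeta_1^2\abs{m_2}^2\abs{m_3}^2)$, a finite nonzero constant. Hence $X_1$ converges, and in particular $x_1=\ln\abs{h_1/m_1}$ tends to the constant $\ln\big(\abs{c}/(\abs{\zeta_1}^2\abs{m_1}\abs{m_2}^2\abs{m_3}^2)\big)$, so its velocity tends to $0$: the first peakon stops. (One also sees that the phase $\omega_1=\im X_1$ freezes.) Next I would track $h_2$ and $h_3$: both are asymptotically proportional to $b_1$, as recorded in \eqref{eq:asympt hs}, so each of $x_2,x_3$ grows like $\tfrac{2\re\zeta_1}{\abs{\zeta_1}^2}\,t$ plus a constant, and differentiating gives the common asymptotic velocity $\tfrac{2\re\zeta_1}{\abs{\zeta_1}^2}$ claimed in the statement.

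To show that peakons two and three genuinely lock into a \emph{bound} pair, rather than merely sharing a leading slope, I would compute the difference $x_3-x_2$ from the asymptotic expressions: the factors of $b_1$ and of $1+\zeta_1\abs{m_3}^2$ cancel, leaving $x_3-x_2\to\ln(\abs{\zeta_1}\abs{m_2}\abs{m_3})$, a finite constant, which is positive precisely because $1/\abs{\zeta_1}<\abs{m_2}\abs{m_3}$ (consistent with the ordering $x_2<x_3$ guaranteed by \autoref{thm:3peakoneers}). Thus the separation of the trailing two peakons converges and the pair travels rigidly at speed $\tfrac{2\re\zeta_1}{\abs{\zeta_1}^2}$.

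The substitution and limit-taking are routine; the one point requiring care — and what I regard as the main obstacle — is justifying that $x_3-x_2$ \emph{converges} rather than just agreeing to leading order, i.e. that the corrections discarded in passing from the exact $h_j$ to \eqref{eq:asympt hs} do not spoil the difference. This amounts to expanding each $h_j$ one order beyond its leading $b_1$-power and checking that the resulting correction to $x_3-x_2$ decays like $e^{-2\re\zeta_1 t/\abs{\zeta_1}^2}$; since $\re\zeta_1>0$ this decay is genuine, so the separation stabilizes and the bound-pair conclusion, together with the stopping of the first peakon, follows.
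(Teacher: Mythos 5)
Your proposal is correct and follows exactly the route the paper intends: the paper gives no separate proof of this corollary beyond the remark that ``the asymptotic behaviour follows from explicit formulas \eqref{eq:inverse3peakons}'', together with the leading-order expressions \eqref{eq:asympt hs}, and your argument simply carries out that limit computation, correctly identifying $\tfrac{d}{dt}\ln\abs{b_1(t)}=\tfrac{2\re\zeta_1}{\abs{\zeta_1}^2}$ as the common asymptotic speed of $x_2,x_3$ and the convergence of $h_1$ as the reason the first peakon halts. Your added check that $x_3-x_2$ converges (to $\ln(\abs{\zeta_1}\abs{m_2}\abs{m_3})$, positive by the hypothesis $1/\abs{\zeta_1}<\abs{m_2}\abs{m_3}$) is a welcome sharpening of what the paper leaves implicit.
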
 
\begin{figure}[ht!]
  \centering
  \resizebox{1.1\textwidth}{!}{
  \includegraphics{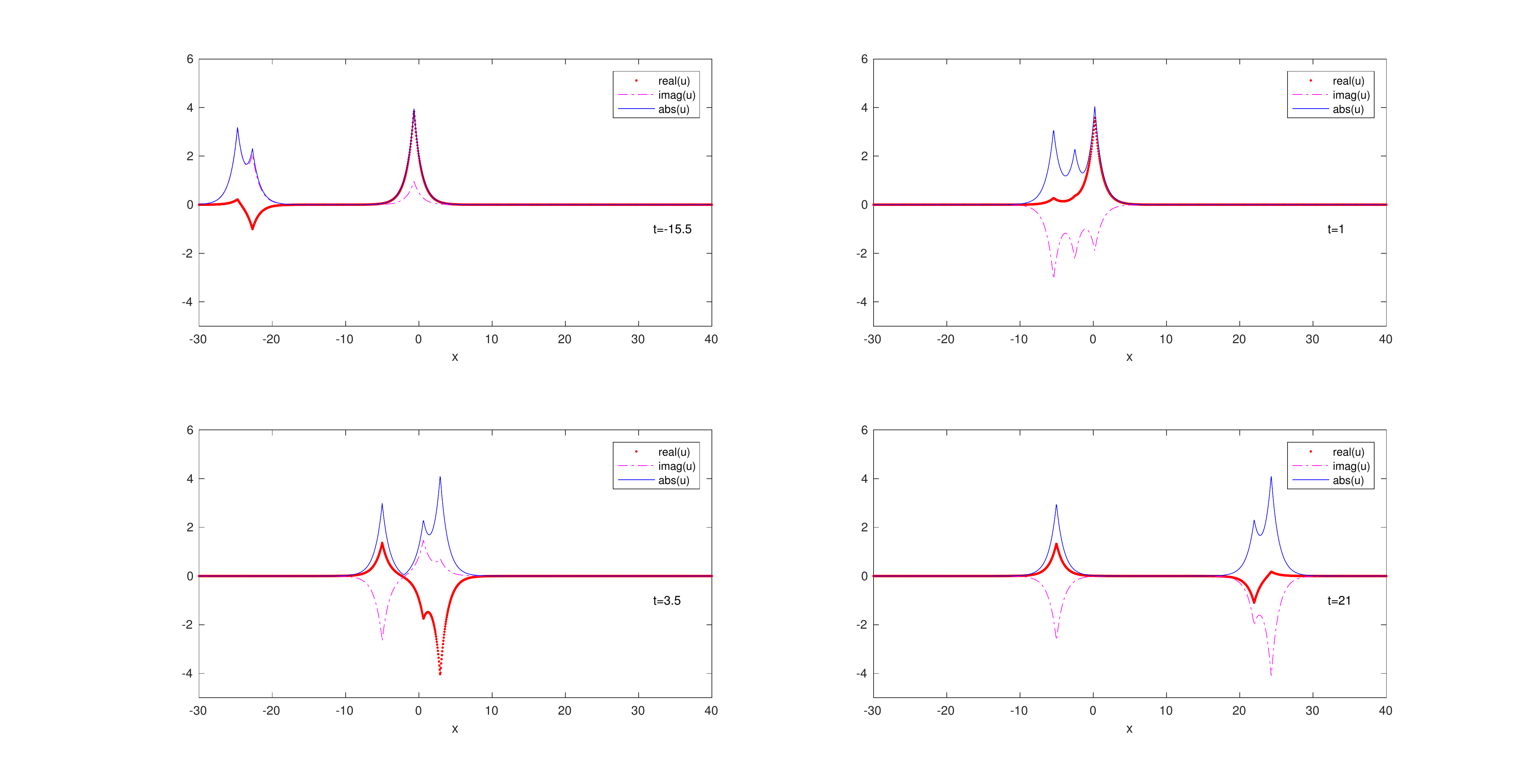}
  }
   \caption{A superimposed view of the amplitude $\abs{u}$  as 
   well as $\re u$ and $\im u$.  Asymptotic 
   pairing is visible in graphs of all these three quantities.}
   \label{fig:HP3overall}
\end{figure}
\begin{figure}[ht!]
  \centering
  \resizebox{0.7\textwidth}{!}{
  \includegraphics{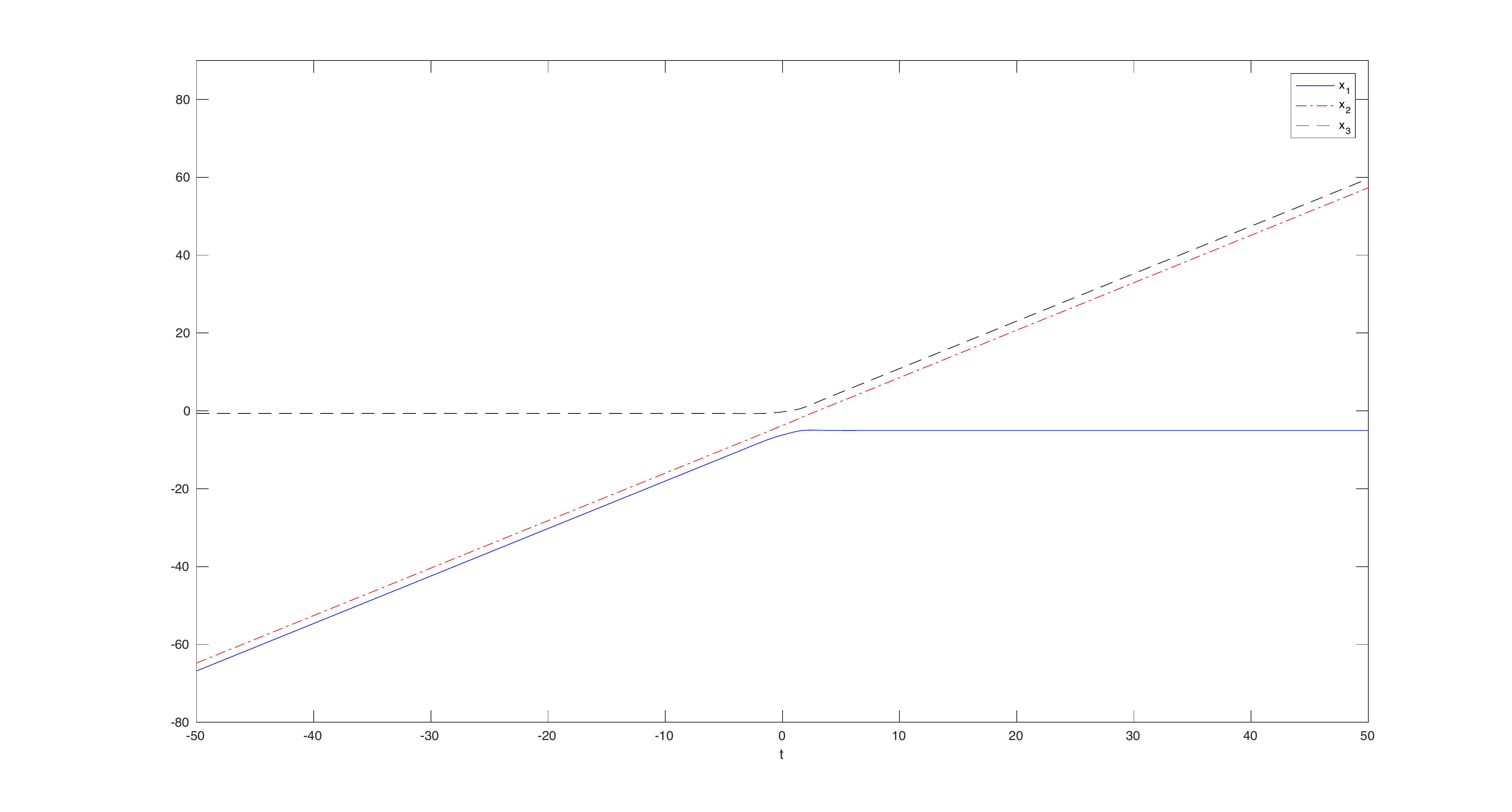}
  }
   \caption{Asymptotic 
   pairing of positions $x_2,x_3$ for large positive times. 
    Observe that initially the first and the second peakons form a bound state; the interaction at $t=0$ with the 
   third peakon breaks the bond and a new bond emerges, 
   while the first peakon slows to a halt.}
   \label{fig:HP3positions}
\end{figure}
We also point out that this analysis can be carried out for 
purely imaginary $\zeta_1$ by forcing $b_1(0)$ to be large 
enough to satisfy \eqref{eq:bconstraint} while at the same time imposing 
\eqref{eq:zetaconstraint}; the resulting $3$-peakon breather is 
graphed in \autoref{fig_hp3peakon_peri}.  

\begin{figure}[ht!]
  \centering
  \resizebox{1.1\textwidth}{!}{
  \includegraphics{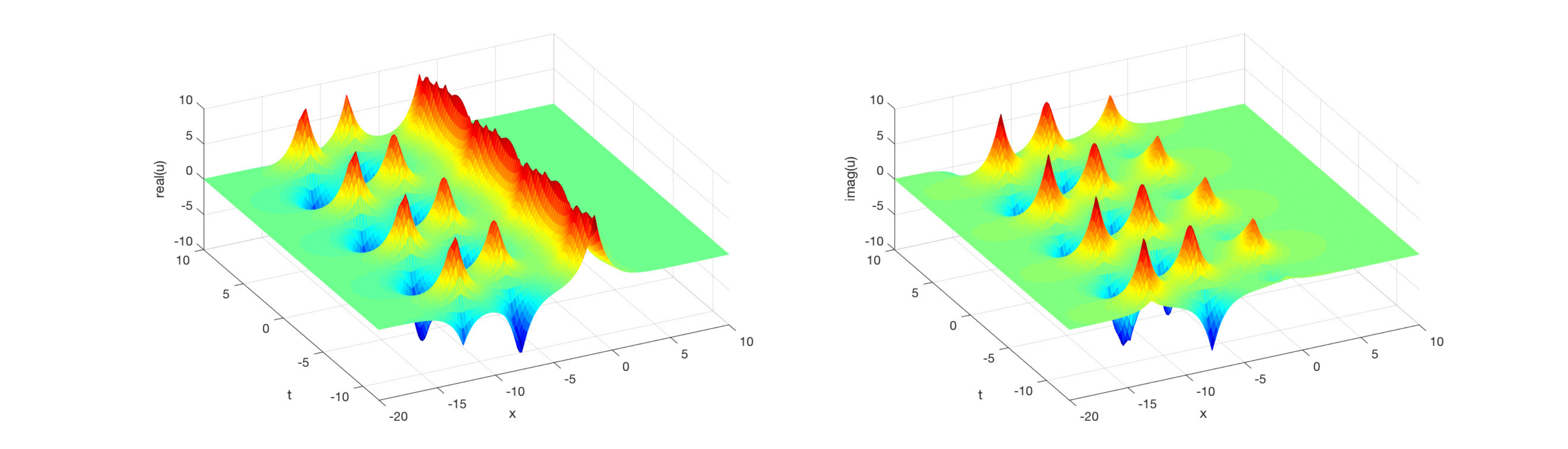}
  }
   \caption{A 3-peakon breather; Re(u) and Im(u) graphed for 
  $ b_1(0)=1+i,\ ~c=~2+0.5i,\ \zeta_1=2i,\ |m_1|=9,\ |m_2|=8,\ |m_3|=10$ }
   \label{fig_hp3peakon_peri}
\end{figure}
\end{example}
The generalization of the analysis of the 3-peakon solutions done above 
to multipeakons will be carried out elsewhere.  For now, we 
confine ourselves to stating the explicit formula for 4-peakons.  
\begin{example}[4-peakon solution]\label{ex:4peakon}
$$X_j=\ln \frac{h_j}{\abs{m_j}}\,\qquad b_1(t)=b_1(0)e^{\frac{2t}{\zeta_1}}, 
\qquad b_2(t)=b_2(0)e^{\frac{2t}{\zeta_2}}, \qquad j=1,2,3,4,$$
where
  \begin{align*}
     &h_1=\frac{b_1b_2(\zeta_2-\zeta_1)^2}{\zeta_1\zeta_2\left(b_1\zeta_1(1+\zeta_2|m_2|^2)(1+\zeta_2|m_3|^2)(1+\zeta_2|m_4|^2)+b_2\zeta_2(1+\zeta_1|m_2|^2)(1+\zeta_1|m_3|^2)(1+\zeta_1|m_4|^2)\right)},&\\
     &h_2=|m_2|^2\cdot\frac{b_1b_2(\zeta_2-\zeta_1)^2\left(b_1(1+\zeta_2|m_3|^2)(1+\zeta_2|m_4|^2)+b_2(1+\zeta_1|m_3|^2)(1+\zeta_1|m_4|^2)\right)}{\left(b_1\zeta_1(1+\zeta_2|m_3|^2)(1+\zeta_2|m_4|^2)+b_2\zeta_2(1+\zeta_1|m_3|^2)(1+\zeta_1|m_4|^2)\right)}&\\
     &\qquad\qquad \cdot\left.\frac{1}{\left(b_1\zeta_1(1+\zeta_2|m_2|^2)(1+\zeta_2|m_3|^2)(1+\zeta_2|m_4|^2)+b_2\zeta_2(1+\zeta_1|m_2|^2)(1+\zeta_1|m_3|^2)(1+\zeta_1|m_4|^2)\right)}\right),&\\
     &h_3=\frac{\left(b_1(1+\zeta_2|m_4|^2)+b_2(1+\zeta_1|m_4|^2)\right)\left(b_1(1+\zeta_2|m_3|^2)(1+\zeta_2|m_4|^2)+b_2(1+\zeta_1|m_3|^2)(1+\zeta_1|m_4|^2)\right)}{(1+\zeta_1|m_4|^2)(1+\zeta_2|m_4|^2)\left(b_1\zeta_1(1+\zeta_2|m_3|^2)(1+\zeta_2|m_4|^2)+b_2\zeta_2(1+\zeta_1|m_3|^2)(1+\zeta_1|m_4|^2)\right)},&\\
     &h_4=|m_4|^2\cdot\frac{b_1(1+\zeta_2|m_4|^2)+b_2(1+\zeta_1|m_4|^2)}{(1+\zeta_1|m_4|^2)(1+\zeta_2|m_4|^2)}.&
       \end{align*}
       \end{example}
 
We finish this section by briefly 
commenting about the NLSP equation in the conservative 
peakon sector.  As we pointed out in \autoref{lem:HP-NLSP duality}
the peakon flows for HP and NLSP form an orthogonal family 
and our analysis using the inverse spectral problem carries over to 
the NLSP case.  

We recall that \autoref{eq:NLSP} 
is the compatibility condition of 
\begin{equation} \label{eq:NLSPLax}
\Psi_x=U \Psi, \quad  \Psi _t =V \Psi, \quad  \Psi=\begin{bmatrix} \Psi_1\\\Psi_2 \end{bmatrix} ,
\end{equation} 
where
\begin{align*} 
&U=\frac{1}{2}\begin{bmatrix} -1 &\lambda m\\ -\lambda \bar{m}& 1 \end{bmatrix},\\
\\ 
&V=\frac{i}{2}\begin{bmatrix} 4\lambda^{-2} + Q & -2\lambda^{-1} (u-u_x)-\lambda m i \im(Q)\\
2\lambda^{-1}(\bar u+\bar u_x)+\lambda \bar m i \im(Q)& -Q \end{bmatrix},
\end{align*} 
with $Q$ given by \eqref{eq:Q}. 

The spectral problem is the same, namely given by 
\autoref{eq:xLaxBVP}.  
 The only point of departure from the HP case 
is the time evolution of the spectral date which 
can be obtained by using $V$ from \autoref{eq:NLSPLax} 
in the asymptotic region, resulting in a modest variation of 
\autoref{lem:t-evolution of qp}.  In summary,  the NLSP time evolution of $W$ is: 
\begin{equation}\label{eq:dotWNLSP}
\dot W=i\big(\frac2z\, W-\frac{2L}{z}\big), 
\end{equation}
which in turn leads to the NLSP time evolution of 
the spectral data given by the following theorem (see 
\autoref{eq:tflowSD} for comparison).  
\begin{lemma}
Suppose $p_N(z)$ has simple roots.  Then in the notation of 
\autoref{eq:simpleW} 
 the spectral data $\{\zeta_j,b_j, c\}$ evolve according to
\begin{align}\label{eq:tflowSDNLSP}
\dot\zeta_j=0,\qquad \dot b_j=\frac{2i}{\zeta_j}b_j,\qquad \dot c=0.
\end{align}
\end{lemma}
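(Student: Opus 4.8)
The plan is to mirror the derivation of the HP spectral flow in \eqref{eq:tflowSD}, replacing the HP evolution \eqref{eq:dotW} of the Weyl function by its NLSP counterpart \eqref{eq:dotWNLSP}. First I would substitute the simple-pole representation \eqref{eq:simpleW} into both sides of $\dot W = i\big(\tfrac{2}{z}W - \tfrac{2L}{z}\big)$. Differentiating the left-hand side in $t$ yields
\[
\dot W = \dot c + \sum_j \frac{\dot b_j}{\zeta_j - z} - \sum_j \frac{b_j\,\dot\zeta_j}{(\zeta_j - z)^2},
\]
so that the problem reduces to reading off, on the right-hand side, the constant part, the simple-pole residues at $z=\zeta_j$, and the coefficients of the double poles at $z=\zeta_j$, and then matching the two sides term by term.

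The key algebraic step is the partial-fraction identity $\tfrac{1}{z(\zeta_j - z)} = \tfrac{1}{\zeta_j}\big(\tfrac{1}{z} + \tfrac{1}{\zeta_j - z}\big)$, which lets me rewrite
\[
\frac{2}{z}W = \frac{2c}{z} + \sum_j \frac{2 b_j}{\zeta_j}\Big(\frac{1}{z} + \frac{1}{\zeta_j - z}\Big).
\]
Collecting the coefficient of $1/z$ on the right-hand side of \eqref{eq:dotWNLSP} produces $2i\big(c + \sum_j b_j/\zeta_j - L\big)/z$, which vanishes precisely because $W(0) = c + \sum_j b_j/\zeta_j = L$. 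This evaluation $W(0)=L$ is itself immediate from the initial value problem \eqref{eq:dstringIVP}: setting $z=0$ collapses the recursion to $p_N(0)=p_0=1$ and $q_N(0)=\sum_k h_k = L$. With the spurious $1/z$ pole removed, the right-hand side has no constant term and no double poles, and its only singularities are simple poles at $z=\zeta_j$ with residue $2i\,b_j/\zeta_j$.

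Matching the three types of terms then gives the result at once: the absence of a constant term forces $\dot c = 0$; the absence of double poles forces $b_j\,\dot\zeta_j = 0$, hence $\dot\zeta_j = 0$ since $b_j \neq 0$ by \autoref{lem:W}; and equating the simple-pole residues gives $\dot b_j = \tfrac{2i}{\zeta_j} b_j$. In effect the computation is identical to the HP case, with the overall factor $i$ of \eqref{eq:dotWNLSP} passing untouched through the linear pole-matching for the $\dot\zeta_j$ and $\dot c$ equations, while multiplying the $b_j$-flow by $i$. The only point requiring care---and thus the main (though modest) obstacle---is the bookkeeping confirming that the $1/z$ pole cancels through $W(0)=L$; once that is secured, the separation into simple and double poles at each $\zeta_j$ is routine.
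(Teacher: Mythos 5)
Your argument is correct and fills in exactly the step the paper leaves implicit: the paper derives \eqref{eq:dotWNLSP} as "a modest variation of \autoref{lem:t-evolution of qp}" and simply asserts that it "leads to" \eqref{eq:tflowSDNLSP}, and your partial-fraction matching, including the cancellation of the spurious $1/z$ pole via $W(0)=q_N(0)/p_N(0)=L$, is the natural way to carry that out. The one place where the paper's intended route is slightly more robust is the constancy of the spectrum: the NLSP analogue of \autoref{lem:t-evolution of qp} gives $\dot p_N=0$ directly, so the roots $\zeta_j$ are frozen as zeros of a time-independent polynomial, with no need to invoke $b_j\neq 0$. Your pole-matching instead extracts $\dot\zeta_j=0$ from the absence of double poles, which requires $b_j\neq 0$; you correctly cite \autoref{lem:W} for this, but that lemma assumes the angle condition \eqref{eq:anglecondtion}, which is not among the hypotheses of the present statement. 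This is a minor point (if some $b_j$ vanished, that $\zeta_j$ would not appear as a pole of $W$ at all), but if you want the argument to match the stated hypotheses exactly, it is cleaner to get $\dot\zeta_j=0$ from $\dot p_N=0$ and reserve the residue calculus for $\dot b_j$ and $\dot c$.
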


Finally, the generalization to the $\theta$ family given by \eqref{eq:pmfamily} 
is straightforward and we only mention that 
the Weyl function evolves according to: 
\begin{equation}\label{eq:dotWtheta}
\dot W=e^{\inum \theta} \big(\frac2z\, W-\frac{2L}{z}\big), 
\end{equation}
implying that the spectral data $\{\zeta_j,b_j, c\}$ evolves 
\begin{align}\label{eq:tflowmfamily}
\dot\zeta_j=0,\qquad \dot b_j=\frac{2e^{\inum \theta} }{\zeta_j}b_j,\qquad \dot c=0.
\end{align}
\section{Conclusions} \label{sec: Conclusions}
We showed that the NLS-type peakon equations introduced 
in \cite{anco} can be generalized to a family of peakon equations 
parametrized by the real projective line.  We studied the sector of 
conservative peakon solutions for which we formulated and solved 
the inverse problem resulting in explicit peakon solutions of various types 
such as peakons and periodic peakon breathers.  This work opens multiple 
paths to further studies.  Some of the outstanding issues are: 
\begin{enumerate} 
\item the Poisson structures for the conservative peakon sector 
were obtained by analyzing the (singular) limit of the Hamiltonian structures valid in the smooth case, followed by, what amounts to, guessing the right structure and it is of general interest to 
understand more deeply which Hamiltonian structures survive that singular limit; 

\item the inverse problem was solved under the assumption of 
simple eigenvalues and it is not known to us what 
dynamical consequences will result from lifting of that assumption; 

\item the analysis of the global (in $t$) existence of solutions was only done for a small number of peakons and a generalization is called for; 

\item we have done no stability analysis of conservative peakons and this 
remains an important open problem.

\end{enumerate} 

\section{Acknowledgements} 
X. Chang was supported in part by the Natural  Sciences and Engineering Research Council of Canada (NSERC), the Department of Mathematics and Statistics of the University of Saskatchewan, PIMS postdoctoral fellowship and the National Natural Science Foundation of China (No. 11701550, No. 11731014). 
S. Anco and J. Szmigielski are supported in part by NSERC.

\begin{appendices}
%\begin{appendix}
\section{Some results for the conservative peakon sector}\label{sec:AppPB}

In this Appendix we collect some formulas used in \autoref{sec:peakonPB} 
for the discussion of Poisson structures and for the Hamiltonian form of 
the peakon flows \autoref{eq:pHPcplx} and \autoref{eq:pNLSPcplx}.  

Our main goal is to express $H^{(0)}$ and $E^{(0)}$ 
in terms of the complex variable $X_j$ introduced in \eqref{eq:Xj}.  
Throughout we work with the ordering $x_1<x_2<\cdots<x_N$.  
\begin{lemma}\label{lem:H0}
\begin{equation}
H^{(0)}=||u||_{H^1}^2=4 \re\big(\sum_{k<l} \abs{m_k}\abs{m_l}
e^{X_k-X_l}\big)+2\sum_l \abs{m_l}^2.  
\end{equation} 
\end{lemma}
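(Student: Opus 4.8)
The plan is to evaluate $H^{(0)}$ directly on the peakon Ansatz \eqref{eq:peakonansatz}, using the distributional definition $H^{(0)} = \int_{\RR} \re(\bar u m)\, dx$ from \eqref{eq:H0} together with the fact that $m$ is the discrete measure \eqref{eq:peakon measure}. The crucial structural observation is that, although $m$ and $u_x$ are singular at the nodes $x_j$, the profile $u$ itself is continuous, so the pointwise values $u(x_j)$ are unambiguous and the pairing of $\bar u$ against $m$ collapses to a finite sum.

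First I would substitute $m = 2\sum_{j=1}^N m_j\delta_{x_j}$ into $\int_{\RR} \re(\bar u m)\, dx$; since $u$ is continuous this reduces to $H^{(0)} = 2\re\bigl(\sum_{j=1}^N m_j\,\overline{u(x_j)}\bigr)$. Inserting the Ansatz value $u(x_j) = \sum_{k=1}^N m_k e^{-\abs{x_j - x_k}}$ then yields the double sum
\[
H^{(0)} = 2\re\Bigl(\sum_{j,k=1}^N m_j \bar m_k\, e^{-\abs{x_j - x_k}}\Bigr).
\]

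Next I would split off the diagonal terms $j=k$, which contribute $2\sum_l \abs{m_l}^2$, and treat the off-diagonal terms using the ordering $x_1<\cdots<x_N$: for $k<l$ one has $\abs{x_k-x_l} = x_l-x_k$, so grouping the $(k,l)$ and $(l,k)$ contributions gives $\sum_{k<l}(m_k\bar m_l + m_l\bar m_k)e^{x_k-x_l} = \sum_{k<l} 2\re(m_k\bar m_l)\,e^{x_k-x_l}$. Finally I would pass to the complex coordinate \eqref{eq:Xj}: writing $m_k=\abs{m_k}e^{\inum\omega_k}$ and $X_k-X_l = (x_k-x_l)+\inum(\omega_k-\omega_l)$, one verifies the identity $m_k\bar m_l\,e^{x_k-x_l} = \abs{m_k}\abs{m_l}\,e^{X_k-X_l}$, which turns the off-diagonal sum into $4\re\bigl(\sum_{k<l}\abs{m_k}\abs{m_l}e^{X_k-X_l}\bigr)$ and completes the identification with the claimed formula.

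The computation is essentially elementary; the only point requiring genuine care is the distributional pairing in the first step, where one must evaluate $\bar u$ at its true continuous nodal value rather than at an average. This is legitimate precisely because the singularity of a peakon lives in $u_x$ and $m$, not in $u$, which distinguishes the present pairing from the regularization \eqref{eq:Qm} imposed on the genuinely singular product $Qm$. The remaining work is pure bookkeeping: symmetrizing the off-diagonal double sum under the ordering hypothesis so as to produce the factor $4$ and the real part, and confirming the coordinate identity $m_k\bar m_l e^{x_k-x_l}=\abs{m_k}\abs{m_l}e^{X_k-X_l}$.
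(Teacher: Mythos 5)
Your proposal is correct and follows essentially the same route as the paper: substitute the discrete measure $m$ and the peakon Ansatz into $\re\int\bar u m\,dx$, reduce to the double sum $2\re\bigl(\sum_{k,l}m_k\bar m_l e^{-\abs{x_k-x_l}}\bigr)$, split diagonal from off-diagonal under the ordering $x_1<\cdots<x_N$, and use $m_k\bar m_l e^{x_k-x_l}=\abs{m_k}\abs{m_l}e^{X_k-X_l}$ to get the factor $4\re(\cdot)$. The only cosmetic difference is that you symmetrize the $(k,l)$ and $(l,k)$ terms before taking the real part, whereas the paper keeps the $l<k$ sum as $e^{\bar X_l-\bar X_k}$ and absorbs it via conjugation; both yield the same result.
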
 
\begin{proof} 
We recall (see \autoref{eq:H0}) 
\begin{equation*} 
\begin{split} 
H^{(0)}=&\re \int \bar u m dx=2 \re \big(
\sum_{k,l} m_k \bar m_l e^{-\abs{x_k-x_l}}\big)=
2\re \big( \sum_{k<l} m_k \bar m_l e^{x_k-x_l}+\sum_{k} \abs{m_k}^2+ 
\\ &\sum_{l<k} m_k \bar m_l e^{x_l-x_k}\big)=2\re \big( 
\sum_{k<l} \abs{m_k} \abs{m_l} e^{X_k-X_l}+\sum_{k}\abs{m_k}^2 + 
\sum_{l<k} \abs{m_k} \abs{m_l} e^{\bar X_l-\bar X_k}\big)=\\
&4 \re \big(\sum_{k<l} \abs{m_k} \abs{m_l} e^{X_k-X_l}\big)+ 2\sum_k
\abs{m_k}^2 . 
\end{split} 
\end{equation*} 
\end{proof} 
\begin{lemma}\label{lem:E0}
\begin{equation}
E^{(0)}=\im \int{u_x \bar m} dx=-4\im(\sum_{k<l} \abs{m_k}\abs{m_l}
e^{X_k-X_l})
\end{equation}
\end{lemma}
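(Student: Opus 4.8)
The plan is to mirror the proof of \autoref{lem:H0}, the only essential new feature being that $u_x$ --- unlike $u$ --- is discontinuous at each peakon location, so the averaging convention \eqref{eq:Qm} genuinely enters. I would start from the representation $E^{(0)}=\im\int u_x\bar m\,dx$ recorded in \eqref{eq:E0}, insert the discrete measure $m=2\sum_l m_l\delta_{x_l}$ from \eqref{eq:peakon measure}, and so write $\int u_x\bar m\,dx=2\sum_l\bar m_l\avg{u_x}(x_l)$, where $\avg{u_x}(x_l)$ is the arithmetic average of the one-sided limits dictated by \eqref{eq:Qm}.

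Next I would differentiate the peakon Ansatz to obtain $u_x=-\sum_j\sgn(x-x_j)m_je^{-|x-x_j|}$ and evaluate its average at $x_l$. The key observation is that the diagonal term $j=l$ contributes nothing: the factor $\sgn(x-x_l)$ jumps from $-1$ to $+1$ across $x_l$, so its average there is $0$. This is precisely where the computation departs from that of $H^{(0)}$, in which the diagonal produced the $2\sum_l|m_l|^2$ term; here no such term survives. For $j\neq l$ the summand is continuous at $x_l$, and using the ordering $x_1<\cdots<x_N$ it reduces to $-m_je^{x_j-x_l}$ when $j<l$ and to $+m_je^{x_l-x_j}$ when $j>l$. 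Hence $\avg{u_x}(x_l)=-\sum_{j<l}m_je^{x_j-x_l}+\sum_{j>l}m_je^{x_l-x_j}$.

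I would then substitute this back and reorganize the double sum by pairing indices $(k,l)$ with $k<l$: the $j<l$ contribution is already in this form, while in the $j>l$ contribution I relabel indices so as to again sum over $k<l$. The two pieces combine into $\int u_x\bar m\,dx=2\sum_{k<l}(m_l\bar m_k-m_k\bar m_l)e^{x_k-x_l}=-4\inum\sum_{k<l}\im(m_k\bar m_l)e^{x_k-x_l}$, using $m_l\bar m_k-m_k\bar m_l=-2\inum\im(m_k\bar m_l)$. Taking the imaginary part and rewriting via \eqref{eq:ghX} and \eqref{eq:Xj}, that is $m_k\bar m_le^{x_k-x_l}=|m_k||m_l|e^{X_k-X_l}$, then yields the claimed formula $E^{(0)}=-4\im\bigl(\sum_{k<l}|m_k||m_l|e^{X_k-X_l}\bigr)$.

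The only delicate point, and the step I would treat most carefully, is the vanishing of the diagonal $j=l$ term through the averaging convention, since it is the one genuinely new ingredient relative to \autoref{lem:H0}; beyond that, the argument is routine index bookkeeping entirely parallel to the $H^{(0)}$ case.
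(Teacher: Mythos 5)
Your proposal is correct and follows essentially the same route as the paper's proof: insert the discrete measure, evaluate $\avg{u_x}(x_l)$ so that the diagonal term drops out (the paper encodes this via $\sgn(0)=0$, you make it explicit), split the off-diagonal sum by the ordering, and recombine over $k<l$. The only difference is cosmetic bookkeeping — you form $m_l\bar m_k-m_k\bar m_l=-2\inum\im(m_k\bar m_l)$ before passing to the $X_j$ variables, whereas the paper converts to $e^{X_k-X_l}$ first and uses $\im\bigl(e^{\bar X_k-\bar X_l}\bigr)=-\im\bigl(e^{X_k-X_l}\bigr)$.
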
 
\begin{proof} 
By definition 
\begin{align*} 
&\im \int{u_x \bar m} dx=2\im \sum_l\avg{u_x}(x_l) \bar m_l=
2\im \sum_{k,l} m_k\bar m_l \sgn(x_k-x_l) e^{-\abs{x_l-x_k}}=\\
&2\im \big(-\sum_{k<l}\abs{m_k}\abs{m_l} e^{X_k-X_l}+\im \sum_{l<k}\abs{m_k}\abs{m_l} e^{\bar X_l-\bar X_k}\big)=\\
&2\im \big(-\sum_{k<l}\abs{m_k}\abs{m_l} e^{X_k-X_l}+ \sum_{k<l}\abs{m_k}\abs{m_l} e^{\bar X_k-\bar X_l}\big)=-4 \im \big(\sum_{k<l}\abs{m_k}\abs{m_l} e^{X_k-X_l}\big). 
\end{align*} 

\end{proof} 
\begin{lemma} \label{lem:Q} 
Suppose $x\notin supp(m)$ 
then 
\begin{equation} 
Q(x)=4 \sum_{x_k<x<x_l} 
\abs{m_k} \abs{m_l} e^{X_k-X_l}, 
\end{equation} 
\end{lemma}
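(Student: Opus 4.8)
The plan is to work directly from the definition $Q=(u-u_x)(\bar u+\bar u_x)$ in \eqref{eq:Q} and to exploit the fact that, evaluated away from the support of $m$, each of the two factors $u-u_x$ and $\bar u+\bar u_x$ collapses to a \emph{one-sided} sum over peakons. First I would differentiate the peakon Ansatz: for $x\neq x_j$ one has $u_x=-\sum_j m_j\sgn(x-x_j)e^{-\abs{x-x_j}}$, so that
$$u-u_x=\sum_j m_j\bigl(1+\sgn(x-x_j)\bigr)e^{-\abs{x-x_j}}.$$
The factor $1+\sgn(x-x_j)$ equals $2$ when $x_j<x$ and $0$ when $x_j>x$, hence $u-u_x=2\sum_{x_k<x}m_k\,e^{x_k-x}$. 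The same computation applied to the conjugate gives $\bar u+\bar u_x=\sum_l\bar m_l\bigl(1-\sgn(x-x_l)\bigr)e^{-\abs{x-x_l}}=2\sum_{x_l>x}\bar m_l\,e^{x-x_l}$, where now only the peakons to the \emph{right} of $x$ survive.

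Next I would simply multiply the two one-sided sums:
$$Q=(u-u_x)(\bar u+\bar u_x)=4\sum_{x_k<x}\sum_{x_l>x}m_k\bar m_l\,e^{x_k-x}e^{x-x_l}=4\sum_{x_k<x<x_l}m_k\bar m_l\,e^{x_k-x_l},$$
the two exponentials carrying $x$ cancelling out. The concluding step is purely notational: writing $m_k=\abs{m_k}e^{\inum\omega_k}$ and recalling $X_j=x_j+\inum\omega_j$, one has $m_k\bar m_l\,e^{x_k-x_l}=\abs{m_k}\abs{m_l}\,e^{(x_k-x_l)+\inum(\omega_k-\omega_l)}=\abs{m_k}\abs{m_l}\,e^{X_k-X_l}$, which gives exactly the claimed formula.

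Since the argument is a direct computation, I do not expect a substantial obstacle; the only delicate points are bookkeeping ones. One should note that the hypothesis $x\notin\mathrm{supp}(m)$ is precisely what makes $u_x$ well defined pointwise, the Dirac contributions in $m$ being located on $\mathrm{supp}(m)$. One must also keep straight that the unconjugated factor retains the peakons with $x_k<x$ while the conjugated factor retains those with $x_l>x$, so that the resulting double sum is automatically restricted to the interlaced range $x_k<x<x_l$, with no diagonal ($k=l$) terms appearing.
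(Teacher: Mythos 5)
Your proposal is correct and follows essentially the same route as the paper's proof: compute $u-u_x$ and $\bar u+\bar u_x$ pointwise away from $\mathrm{supp}(m)$, observe that the sign factors reduce each to a one-sided sum (over $x_k<x$ and $x_l>x$ respectively), multiply, and rewrite $m_k\bar m_l e^{x_k-x_l}$ as $\abs{m_k}\abs{m_l}e^{X_k-X_l}$. The only cosmetic difference is that the paper factors out $e^{-x}$ and $e^{x}$ from the two sums before multiplying, whereas you cancel the $x$-dependence directly in the product.
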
 
\begin{proof} 
We recall that by \autoref{eq:Q} 
\begin{equation*} 
Q(x)=(u-u_x)(\bar u+\bar u_x). 
\end{equation*} 
For the peakon Ansatz \eqref{eq:peakonansatz}
\begin{equation*} 
\begin{split}
u-u_x=&\sum_k m_k(1-\sgn(x_k-x))e^{-\abs{x-x_k}}=2\sum_{x_k<x} 
m_k e^{-\abs{x-x_k}}=\\
&\big(2\sum_{x_k<x} \abs{m_k} e^{X_k}\big) e^{-x}, 
\end{split} 
\end{equation*}
and 
\begin{equation*}
\begin{split} 
\bar u+\bar u_x=&\sum_l \bar m_l(1+\sgn(x_l-x))e^{-\abs{x-x_l}}=2\sum_{x<x_l} 
\bar m_l e^{-\abs{x-x_l}}=\\
&\big(2\sum_{x<x_l} \abs{m_l} e^{-X_l}\big) e^{x},
\end{split} 
\end{equation*}
thus proving the claim.  
\end{proof}
\begin{lemma} \label{lem:Qavg} 
Let $x_j \in supp(m)$.  Then 
\begin{equation*}
\avg{Q}(x_j)=4 \sum_{k<j<l} \abs{m_k} \abs{m_l} e^{X_k-X_l}+
2\abs{m_j} \big(\sum_{k<j} \abs{m_k}e^{X_k-X_j}+ \sum_{j<k} 
\abs{m_k} e^{X_j-X_k}\big).  
\end{equation*} 
\end{lemma}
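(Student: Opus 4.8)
The plan is to reduce the computation of $\avg{Q}(x_j)$ to the explicit off-support formula for $Q$ already established in \autoref{lem:Q}. By definition the average $\avg{Q}(x_j)$ is the arithmetic mean of the right-hand and left-hand limits of $Q$ at $x_j$, and away from the support of $m$ the function $Q$ is given by the closed form in \autoref{lem:Q}. So the strategy is: evaluate both one-sided limits at $x_j$ using that formula, then average.

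First I would take the right limit. Recall that for $x$ off the support,
\[
Q(x) = 4\sum_{x_k < x < x_l} \abs{m_k}\abs{m_l}\, e^{X_k - X_l}.
\]
As $x \to x_j^+$, because the positions are discrete and strictly ordered, the constraint $x_k < x$ degenerates into $k \le j$ while $x < x_l$ becomes $l > j$; hence
\[
Q(x_j{+}) = 4\sum_{k \le j < l} \abs{m_k}\abs{m_l}\, e^{X_k - X_l}.
\]
Symmetrically, as $x \to x_j^-$ the two constraints become $k < j$ and $l \ge j$, giving
\[
Q(x_j{-}) = 4\sum_{k < j \le l} \abs{m_k}\abs{m_l}\, e^{X_k - X_l}.
\]

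Next I would average the two expressions. Splitting off the boundary index $k = j$ from the right limit and $l = j$ from the left limit, each one-sided limit decomposes into the common bulk sum $4\sum_{k<j<l}$ plus a single $\abs{m_j}$-term; adding $Q(x_j{+})$ and $Q(x_j{-})$ doubles the bulk sum and keeps the two distinct $\abs{m_j}$-terms, so dividing by two yields
\[
\avg{Q}(x_j) = 4\sum_{k<j<l}\abs{m_k}\abs{m_l}\,e^{X_k-X_l} + 2\abs{m_j}\Big(\sum_{k<j}\abs{m_k}\,e^{X_k-X_j} + \sum_{j<k}\abs{m_k}\,e^{X_j-X_k}\Big),
\]
which is exactly the claimed identity (after renaming the summation index $l \to k$ in the term coming from the right limit).

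The only delicate point, and the step I would check most carefully, is the bookkeeping of which index pairs survive in each one-sided limit, in particular the treatment of the boundary pairs in which one index equals $j$. Since the strict inequalities $x_k < x < x_l$ relax to non-strict constraints in the limit, it is precisely these boundary terms $(k=j)$ and $(l=j)$ that, upon averaging, supply the factor $2\abs{m_j}$ in the second group of sums; the remaining bulk contribution is unaffected. Everything else is a routine rearrangement of finite sums.
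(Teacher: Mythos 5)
Your proof is correct and follows essentially the same route as the paper's: both evaluate the one-sided limits via the off-support formula of \autoref{lem:Q}, obtaining the index sets $k\le j<l$ and $k<j\le l$ respectively, and then split off the boundary pairs $k=j$ and $l=j$ to produce the $2\abs{m_j}$ terms. The bookkeeping of the strict-versus-non-strict inequalities in each limit matches the paper exactly.
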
 
\begin{proof} A straightforward computation yields
\begin{equation*}
\begin{split} 
&\avg{Q}(x_j)=\frac12\big( Q(x_j+)+Q(x_j-)\big)\stackrel{\autoref{lem:Q}}{=}
2\big(\sum_{x_k<x_j+<x_l }\abs{m_k}\abs{m_l} e^{X_k-X_l}+
\sum_{x_k<x_j-<x_l} \abs{m_k}\abs{m_l} e^{X_k-X_l}\big)=\\
&2\big(\sum_{k\leq j<l} \abs{m_k}\abs{m_l} e^{X_k-X_l}+ \sum_{k<j\leq l} \abs{m_k}\abs{m_l} e^{X_k-X_l}\big)=\\&4\sum_{k<j<l} \abs{m_k} \abs{m_l} 
e^{X_k-X_l}+ 2\abs{m_j}\big(\sum_{k<j} \abs{m_k} e^{X_k-X_j}+\sum_{j<k} 
\abs{m_k} e^{X_j-X_k}\big).  
\end{split}
\end{equation*}
\end{proof}
%\end{appendix}
\end{appendices}

\bibliographystyle{abbrv}
\bibliography{NLSH}
\end{document}